\newcommand{\IGNORE}[1]{}
\newtheorem{theorem}{Theorem}[section]
\newtheorem{claim}[theorem]{Claim}
\newtheorem{lemma}[theorem]{Lemma}
\newtheorem{corollary}[theorem]{Corollary}
\theoremstyle{definition}
\newcommand{\halmos}{}
\newif\ifFULL
\newcounter{sahilnote}
\newcounter{dannynote}
\newcommand{\ProbeMax}{\textsf{ProbeMax}\xspace}
\newcommand{\ProbeTopr}{\textsf{ProbeTop}-$r$\xspace}
\newcommand{\Prophets}{\textsf{Free-Order Prophets}\xspace}
\newcommand{\ProphetsIP}{\textsf{Prophets}\xspace}
\newcommand{\Pandora}{\textsf{Pandora's Box with Commitment}\xspace}
\newcommand{\Santa}{\textsf{Multi-Dimensional Santa Claus}\xspace}
\newcommand{\Santano}{\textsf{Dimensional Santa Claus}\xspace}
\newcommand{\SingleSantano}{\textsf{One-Dimensional Santa Claus}\xspace}
\newcommand{\calE}{{\cal E}}
\newcommand{\OPT}{\textsc{OPT}}
\newcommand{\calF}{\mathcal{F}}
\newcommand{\calM}{\mathcal{M}}
\newcommand{\calC}{\mathcal{C}}
\newcommand{\calT}{\mathcal{T}}
\newcommand{\tcalT}{\tilde{\calT}}
\newcommand{\calV}{\mathcal{V}}
\newcommand{\eps}{\epsilon}
\newcommand{\reals}{\mathbb{R}}
\newcommand{\one}{{1}\xspace}
\newcommand{\poly}{\operatorname{poly}}
\newcommand{\baseV}{\textsf{BaseVal}\xspace}
\newcommand{\baseG}{\textsf{BaseGuess}\xspace}
\newcommand{\deltaGuess}{\textsf{DeltaGuess}\xspace}
\newcommand{\pr}[1]{{\mathbb P} \left( #1 \right)}
\newcommand{\prpar}[1]{{\mathbb P} ( #1 )}
\newcommand{\ex}[1]{{\mathbb E} \left[ #1 \right]}
\newcommand{\expar}[1]{{\mathbb E} [ #1 ]}
\newcommand{\var}{j}
\newcommand{\block}{i}
\newcommand{\jump}{K}
\begin{document}

\begin{titlepage}

\title{Efficient Approximation Schemes for \\
Stochastic Probing and Selection-Stopping Problems}

\author{%
Danny Segev\thanks{Department of Statistics and Operations Research, School of Mathematical Sciences, Tel Aviv University, Tel Aviv 69978, Israel. Email: segevdanny@tauex.tau.ac.il.  Supported by Israel Science Foundation grant 1407/20.}%
\and Sahil Singla\thanks{School of Computer Science, Georgia Institute of Technology,  Atlanta, Georgia, USA. Email: ssingla@gatech.edu. Supported in part by NSF awards CCF-2327010 and CCF-2440113.}
}

\date{}

\maketitle

\ifFULL
\else
\pagenumbering{roman}
\fi

\begin{abstract} 
In this paper, we propose a general framework to design \emph{efficient} polynomial time approximation schemes (EPTAS) for fundamental  stochastic combinatorial optimization problems. Given an error parameter $\epsilon>0$, such algorithmic schemes attain a $(1-\epsilon)$-approximation in  $t(\epsilon)\cdot \poly(|{\cal I}|)$ time, where $t(\cdot)$ is a function that depends only on $\epsilon$ and $|{\cal I}|$ denotes the input length. Technically speaking, our approach relies on presenting tailor-made reductions to a newly-introduced multi-dimensional Santa Claus problem. Even though the single-dimensional version of this problem is already known to be APX-Hard, we prove that an EPTAS can be designed  for a constant number of machines and dimensions, which hold for each of our applications.

To demonstrate the versatility of our framework, we first study selection-stopping settings to derive an EPTAS for the Free-Order Prophets problem [Agrawal et al., EC~'20] and for its cost-driven generalization, Pandora's Box with Commitment [Fu et al., ICALP~'18]. These results constitute the first  approximation schemes in the non-adaptive setting and improve on known \emph{inefficient} polynomial time approximation schemes (PTAS) for their adaptive variants. Next, turning our attention to stochastic probing problems,  we obtain an EPTAS for the adaptive ProbeMax problem as well as for its non-adaptive counterpart; in both cases, state-of-the-art approximability results have been  inefficient PTASes  [Chen et al., NIPS~'16; Fu et al., ICALP~'18].
\end{abstract}

\ifFULL
\else
\bigskip \bigskip \noindent \framebox[\textwidth][c]{Due to space limitations, most proofs are deferred to the full version, appended after page 15.}
\fi

\ifFULL

\newpage

{\small
\begin{spacing}{0.01}
   \tableofcontents
\end{spacing}
}
\else
\fi

\end{titlepage}

\newpage

\ifFULL
\setcounter{page}{3}
\else
\pagenumbering{arabic}
\setcounter{page}{1}
\fi

\section{Introduction.}\label{sec:intro}

The field of combinatorial optimization traditionally deals with computational problems where we are given an objective function $f: 2^{[n]}\rightarrow \reals$ on $n$ elements along with certain feasibility constraints $\calF \subseteq 2^{[n]}$; our goal is to identify in polynomial time a set $S \in \calF$ that maximizes $f(S)$, potentially in an approximate way. In the last two decades, there has been a great deal of interest in studying combinatorial optimization problems under various notions of uncertainty.  In particular, a frequent meta-question in this context is: Can we handle objective functions that involve random variables, when our algorithm only has access to their probability distributions? 

For concreteness, consider an interviewing scenario where a firm wishes to hire one of $n$ candidates. This setting corresponds to the simplest non-trivial feasibility set   $\calF=\{ \{i\} \mid i\in [n] \}$. Clearly, without any form of randomness, the problem is trivial, as we can hire  the highest value candidate. Now, suppose the value of each candidate $i$ is represented by a non-negative random variable $X_i$ independently drawn  from some known \emph{element-dependent} distribution. If we can only afford conducting $k<n$ interviews,  which candidates should be interviewed? Formally, in the \emph{\ProbeMax} problem we  \emph{probe} a set $S \subseteq [n]$ of size at most $k$, with the goal of maximizing the expected highest probed value, i.e., $\expar{ \max_{i  \in S} X_i }$. Interestingly, due to its specific nature of randomness, this problem can be defined in two different ways, depending on whether the algorithm probes the set $S$ \emph{adaptively} or \emph{non-adaptively}. Here, ``adaptive'' means that the required algorithm is a \emph{policy} (decision tree) that sequentially decides on the next element to be probed depending on the outcomes observed up until then. In contrast, a non-adaptive algorithm would decide on the set of elements to be probed a-priori, without observing any outcomes.  Surprisingly, even for this seemingly-simple problem, identifying an optimal non-adaptive solution is known to be NP-hard~\cite{CHLLLL-NIPS16,GGM-TALG10} and it is believed that finding the optimal adaptive policy is \#P (or even PSPACE) hard~\cite{FLX-ICALP18}. As such, the natural question is: Can we efficiently compute  near-optimal adaptive and non-adaptive solutions?

As another motivating example, consider again an interviewing scenario where we may interview all candidates, but have to immediately decide upon interviewing whether to hire or reject the current candidate. Formally, this setting corresponds to the \emph{\Prophets} problem, where the value of each element $i \in [n]$ is again specified by an independent random variable $X_i$. The algorithm is required to determine a permutation $\sigma \in S_n$ in  which the outcomes $X_{\sigma(i)}$ will be observed, and a \emph{stopping time} $\tau$  to maximize the expected value of $X_{\sigma(\tau)}$. Due to a fundamental result of Hill~\cite{Hill-Journal83}, it is known that there exists an optimal adaptive policy for this problem which is in fact non-adaptive, and we can therefore assume that the optimal  permutation $\sigma^*$ is chosen a-priori. It is worth mentioning that, given a permutation, the optimal stopping time can easily be  computed by dynamic programming (see further details in Section~\ref{sec:freeOrder}). However, finding the optimal permutation has recently been proven  to be NP-hard by Agrawal et al.~\cite{ASZ-EC20}. In this context, the basic question is whether one can still obtain a non-trivial approximation.

In addition to the above-mentioned probing and prophets problems, numerous stochastic optimization problems have previously been considered, such as variants of Pandora's Box, Stochastic Matchings, and Stochastic Knapsack. Indeed, to date, constant factor approximations were attained for each of these problems, and we refer the reader to further discussion on related work in  Section~\ref{sec:related}. This current state of knowledge raises the following question:
\begin{quote}
\emph{Can we compute in polynomial time near-optimal solutions to adaptive and non-adaptive stochastic combinatorial optimization problems?}
\end{quote}

\subsection{Main results.} 

The primary contribution of this work consists of proposing a general approach to design \emph{efficient polynomial time approximation schemes} (EPTAS) for a number of stochastic combinatorial optimization problems. That is,  for any constant $\eps>0$, we obtain a $(1-\eps)$-approximation to the optimal objective value in $t(\epsilon)\cdot \poly(|{\cal I}|)$ time, where $t(\cdot)$ is a function that depends only on $\epsilon$ and $|{\cal I}|$ denotes the input length. It is worth pointing out that an EPTAS is particularly attractive from an implementation standpoint, due to ensuring that our running time dependency on the accuracy level $\eps$ is instance-independent. This property is  appealing to practitioners, who view running times such as $n^{1/\eps}$ as purely theoretical in certain settings, whereas terms of the form $(1/\eps)^{1/\eps}$ are more acceptable. In such practical settings, one would never fix $\eps = 10^{-6}$, but even $\eps = 0.1$ becomes impractical with $n^{1/\eps}$. For further discussion on EPTASes, we refer avid readers to a number of selected papers in this context \cite{HassinL04, Jansen10, FominLRS11, Bonamy+21}.

For ease of presentation, we first discuss the algorithmic implications of our framework, which will be followed by its high-level technical ideas in Section~\ref{sec:introTechniques}.

\paragraph{\Prophets.} This problem was first studied in the 1980's by Hill~\cite{Hill-Journal83}, who proved the existence of a non-adaptive optimal policy. From an algorithmic perspective, an approximation ratio of $1/2$ directly follows from the classical Prophet inequality~\cite{Krengel-Journal77,Krengel-Journal78,Samuel-Annals84}, which provides a threshold-based policy with value at least $\frac{1}{2} \cdot \expar{\max_{i \in [n]} X_i}$. In a recent work, Agrawal et al.~\cite{ASZ-EC20} obtained improved constant-factor approximations for special classes of distributions, such as when each random variable has a support size of at most two. However, for arbitrary distributions, existing approaches lose at least a constant factor in their guaranteed approximation ratio. 
As a first demonstration of the applicability of our framework, we provide  an EPTAS for this problem in Section~\ref{sec:freeOrder}.

\begin{theorem} \label{thm:Prophets}
There exists an EPTAS for the  \Prophets problem.
\end{theorem}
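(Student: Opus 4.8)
The plan is to follow the same high-level recipe announced in the abstract: reduce the problem, after a bounded amount of preprocessing, to an instance of the \Santa problem that satisfies the structural assumptions under which we have an EPTAS. Concretely, I would first invoke Hill's theorem~\cite{Hill-Journal83} so that it suffices to search for an optimal \emph{permutation} $\sigma^*\in S_n$, the optimal stopping rule for a fixed order being computable by backward dynamic programming (as noted in~\S\ref{sec:freeOrder}). The first step is therefore to set up a clean ``scale'' for the objective. Let $\OPT$ denote the optimal expected reward; using the classical prophet inequality one has $\OPT\ge \frac12\,\E[\max_i X_i]$, so $\OPT$ is within a constant factor of a quantity we can estimate. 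Guessing $\OPT$ up to a $(1+\eps)$ multiplicative factor costs only $O(\log_{1+\eps}(\cdot))=t(\eps)\cdot\poly(n)$ branches, so from now on I would assume a good estimate $v\approx\OPT$ is known and rescale so that $v=1$.

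The second step is the standard ``big vs.\ small'' dichotomy that makes the free-order prophet instance finitely describable. Call an element $i$ \emph{large} if, conditioned on exceeding an appropriate $\poly(\eps)$ threshold, it contributes non-negligibly to the objective, and \emph{small} otherwise; equivalently, truncate each $X_i$ at some level $\tau=\poly(\eps)\cdot v$ and separate the truncated part from the exceedance part. The exceedance (large) parts should be few in effective number, or at least be discretizable into $O_\eps(1)$ value classes after rounding each relevant probability and each relevant conditional value to a grid of granularity $\poly(\eps)$; the small/truncated parts can be handled in aggregate because, once they are all below the running threshold used by the stopping rule, their cumulative effect on the optimal policy depends only on a few additive statistics (sums of means, second moments) rather than on the individual identities of the variables. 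This is the point where I would use that, given the order, the optimal stopping rule is a sequence of thresholds obtained by DP: the value of a permutation depends on the large elements through their (rounded) individual data and on the block of small elements only through low-order moments, so the number of genuinely distinct ``profiles'' is bounded by a function of $\eps$ alone.

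The third and central step is to cast the choice of $\sigma^*$ as a \Santa instance. Each position in the permutation, or each ``slot'' in the coarsened order, becomes a machine; each element becomes a job; the value that element $i$ delivers when placed in a given slot — namely its marginal contribution to $\E[X_\tau]$ under the threshold that the DP assigns to that slot — becomes the (vector-valued) reward of assigning job $i$ to that machine. The key observations needed to make this a \emph{legal} reduction are (i) that after the rounding of Step~2 the reward structure is ``block-additive'' in exactly the way the structural assumption behind the \Santa EPTAS requires, and (ii) that maximizing the min-type / Santa objective recovers (up to $1+\eps$) the prophet objective, because the backward-DP value telescopes into a sum of per-slot marginal gains. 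Then Theorem~\ref{thm:Prophets}'s conclusion follows by plugging the instance into the \Santa EPTAS, and translating its $(1+\eps)$-approximate assignment back into a permutation and its induced stopping rule, incurring only one more $(1+\eps)$ factor from the discretization.

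The main obstacle I anticipate is precisely the bookkeeping in Step~2–3: proving that the optimal permutation loses only a $(1+\eps)$ factor when restricted to the coarsened family of orders, and — more delicate — that the interaction between the adaptively chosen DP thresholds and the placement of elements can be decoupled into a per-slot additive reward without circularity (the thresholds depend on which elements sit later, which is exactly what we are optimizing). I would resolve this by guessing, within the $t(\eps)$ budget, the $O_\eps(1)$ ``breakpoints'' of the threshold sequence (its distinct rounded levels and the positions where they change), so that conditioned on this guess the thresholds are fixed and the remaining assignment problem genuinely has additive rewards, at which point the \Santa machinery applies verbatim. Verifying that this guessing is compatible with the non-adaptive guarantee — i.e., that some coarsened permutation together with some guessed threshold profile is $(1+\eps)$-optimal — is the crux, and it is where the bulk of the technical work will go.
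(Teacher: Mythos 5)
Your Step 1 and your concluding paragraph land on the same high-level plan as the paper: fix the DP threshold sequence by guessing $O_\eps(1)$ of its ``breakpoints,'' so that the remaining job-to-slot assignment has additive per-slot rewards and can be handed to the \Santa EPTAS. This is indeed what \S\ref{sec:freeOrder} does, and the ``marginal gain above a guessed threshold'' reward you describe is essentially the paper's load $\ell_{ji}=\E[(X_i-\baseG(B_j))^+]$. However, the proposal has a genuine error in Step~2 and leaves the crucial parts unargued.

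First, the ``big vs.\ small'' dichotomy with truncation, $O_\eps(1)$ discrete value classes for large elements, and moment-based aggregation for small elements does not appear in the paper, and the claim that the small elements' effect ``depends only on a few additive statistics (sums of means, second moments)'' is incorrect. The per-step DP increment is $V_t-V_{t+1}=\E[(X_{\sigma(t)}-V_{t+1})^+]$, a tail functional that is not determined by the first two moments; two small-valued distributions with matching moments can contribute very differently when $V_{t+1}$ sits inside their support. The paper avoids this entirely by working directly with the exact additive quantity $\E[(X_i-\theta)^+]$ rather than trying to summarize each element's distribution.

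Second, the structural details you defer are exactly where the proof lives. The paper does not use one machine per position; it partitions positions into $O(1/\eps)$ buckets delimited by the times where $V^*_t$ crosses an integer multiple of $\eps\OPT$, alternating between ``stable'' buckets (unbounded cardinality) and singleton ``jump'' buckets (cardinality one). The cardinality-one constraint on jump machines is what makes the \Santa reduction sound: it forces the single variable placed there to supply the whole jump on its own, so the telescoping argument can be carried out with the guessed threshold rather than the true (unknown) $V^*_{t+1}$. You also need to verify $\rho$-feasibility (the paper shows $\rho=1/\eps^2$) before Theorem~\ref{thm:result_santa_claus} applies, and then the ``ahead/behind schedule'' accounting that converts a $(1-\eps)$-approximate load assignment back into a $(1-O(\eps))\cdot\OPT$ permutation. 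You identify this last step as ``the crux'' and ``where the bulk of the technical work will go,'' which is an accurate self-assessment, but it means the proposal as written is an outline with a wrong detour rather than a proof.
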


We remark that  Fu et al.~\cite{FLX-ICALP18} obtained an \emph{adaptive} PTAS for the \Pandora problem, which captures \Prophets as a special case (with zero costs). However, this result does not translate to a {non-adaptive} PTAS for finding a fixed permutation a-priori, and certainly not to an EPTAS. 

\paragraph{\ProbeMax.} 
The first non-trivial results for \ProbeMax, and for additional stochastic probing problems, were based on adaptivity-gap bounds. Such findings show that, up to certain constant factors, the adaptive and non-adaptive variants of a given problem are equivalent in terms of approximability~\cite{AN16,GN-IPCO13,GNS-SODA17}. Thus, specifically for \ProbeMax, one can  focus on its non-adaptive setting, $\max_{S: |S|\leq k} \expar{\max_{i \in S} X_i}$, which is a monotone submodular maximization problem, and can therefore be approximated in polynomial time~\cite{NWF-MP78}. Improving on these early results, Chen et al.~\cite{CHLLLL-NIPS16} designed a DP-based \emph{polynomial time approximation scheme} (PTAS) for non-adaptive \ProbeMax, where a $(1-\eps)$-approximation was attained in $n^{t(\eps)}$ time. It is important to emphasize that this finding does not translate to a PTAS for adaptive \ProbeMax, due to a constant factor gap between the adaptive and non-adaptive settings. In a recent breakthrough, Fu et al.~\cite{FLX-ICALP18} devised a PTAS for the adaptive \ProbeMax problem. Interestingly, their main idea is to design a policy that employs only a constant number of adaptive rounds. 
By exploiting our framework, we improve on the work of both Chen et al.~\cite{CHLLLL-NIPS16} and Fu et al.~\cite{FLX-ICALP18}, showing that the \ProbeMax problem actually admits an EPTAS. These results are established in Sections~\ref{sec:NAProbeMax} and~\ref{sec:adapProbeMax}, respectively.  

\begin{theorem} \label{thm:NAProbeMax}
There exists an EPTAS for the non-adaptive \ProbeMax problem.
\end{theorem}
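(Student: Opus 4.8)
The plan is to reduce the non-adaptive \ProbeMax problem, with only a $(1+\eps)$-factor loss, to an instance of the \Santa problem that satisfies the structural assumptions under which we have already established an EPTAS; invoking that EPTAS then completes the argument. Throughout, the $X_i$'s are given by explicit finite distributions.

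\textbf{Normalization and discretization.} First I would assume $\OPT$ is known up to a $(1+\eps)$ factor: since $\max_i \E[X_i] \le \OPT \le \sum_{i \in S^*}\E[X_i] \le k\cdot\max_i\E[X_i]$ (using $\max_{i\in S^*}X_i \le \sum_{i\in S^*}X_i$), a geometric grid of ratio $1+\eps$ over $[\max_i\E[X_i],\,k\max_i\E[X_i]]$ has only $O(\eps^{-1}\log n)$ candidate values; running the whole algorithm for each and keeping the best costs merely a $\poly(\log n)$ factor. So rescale so that $\OPT\in[1,1+\eps)$. Next, the pointwise inequality $\max_{i\in S}X_i \le \max_{i\in S}\big(X_i\cdot\mathbf{1}[X_i\ge\eps]\big)+\eps$ lets me zero out all realizations below $\eps$, losing at most $\eps\le\eps\OPT$ in expectation. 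Combining this with a truncation at a threshold $O_\eps(1)$ from above, the standard ``exceptional-element'' argument of Chen et al.~\cite{CHLLLL-NIPS16} (which reduces to $O_\eps(1)$ the number of genuinely large-valued elements one must reason about), capping each survival probability $\Pr[X_i\ge v]$ at $1-\eps'$ for a suitable $\eps'=\poly(\eps)$, and rounding each surviving support value to a power of $1+\eps$, I may assume every $X_i$ is supported on $\{0\}\cup\{v_1<\cdots<v_L\}$ with $L=O(\eps^{-1}\log(1/\eps))$, and moreover that there are only $O_\eps(1)$ distinct element ``types'', a type recording the rounded vector $(\Pr[X_i\ge v_\ell])_{\ell=1}^L$ — all of this at a cumulative $(1+\eps)$-factor cost.

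\textbf{Rewriting the objective and reducing to \Santa.} For $S$ with $|S|\le k$, the integral representation of the expectation gives, after discretization,
\[
  \E[\max_{i\in S}X_i] \;=\; \sum_{\ell=1}^{L}(v_\ell-v_{\ell-1})\cdot\Big(1-\prod_{i\in S}\big(1-p_{i,\ell}\big)\Big)\;\pm\;O(\eps)\cdot\OPT,
\]
where $p_{i,\ell}=\Pr[X_i\ge v_\ell]$ and $v_0:=0$. Setting $y_{i,\ell}:=\ln\frac{1}{1-p_{i,\ell}}$, which lies in $[0,\ln(1/\eps')]$ after the capping, the $\ell$-th term equals $w_\ell\cdot g\big(\sum_{i\in S}y_{i,\ell}\big)$ with $w_\ell:=v_\ell-v_{\ell-1}\ge0$ and $g(z):=1-e^{-z}$ concave, increasing, $g(0)=0$ (the identity $\prod_{i\in S}(1-p_{i,\ell})=e^{-\sum_{i\in S}y_{i,\ell}}$ is exact). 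Thus non-adaptive \ProbeMax becomes the following: associate to each of the $O_\eps(1)$ types a nonnegative $L$-dimensional vector, and choose a multiset of at most $k$ types so as to maximize the fixed separable-concave function $\sum_{\ell} w_\ell\, g(\cdot)$ of the $L$ coordinate-sums $\sum_{i\in S}y_{i,\ell}$. Approximating each of the $O_\eps(1)$ concave terms from below by a truncated-linear function (another $(1+\eps)$-factor loss) casts this exactly as a \Santa instance: the $L=O_\eps(1)$ coordinates are the ``players'', the per-type vectors are the resources, and one verifies the relevant structural assumptions — boundedly many dimensions, boundedly many distinct vector types, bounded entries — all of which hold by construction. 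Running our \Santa EPTAS returns $(1+\eps)$-approximate type-counts in $t(\eps)\cdot\poly(n)$ time; realizing those counts by any assignment of actual elements to types (taking all elements of a type if it is short) yields a feasible probing set of size $\le k$.

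\textbf{Main obstacle.} The reduction and the final invocation of the \Santa EPTAS are essentially bookkeeping; the delicate part is the preprocessing in the first paragraph. One must show that zeroing out small realizations, truncating and bucketing large ones, capping survival probabilities, and — most importantly — collapsing to $O_\eps(1)$ types while setting aside only $O_\eps(1)$ exceptional large-value elements, together perturb $\E[\max_{i\in S}X_i]$ for \emph{every} feasible $S$ by at most a $(1+\eps)$ factor, i.e.\ that each loss is multiplicative in $\OPT$ rather than additive in an uncontrolled quantity such as $\sum_{i\in S}\E[X_i]$. These are exactly the case-based estimates already developed in the PTAS literature~\cite{CHLLLL-NIPS16,FLX-ICALP18}; our new contribution is to organize the resulting ``choose counts per type'' core so that it is a \Santa instance of the required structure, which is what upgrades the running time from $n^{t(\eps)}$ to $t(\eps)\cdot\poly(n)$.
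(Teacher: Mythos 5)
Your overall strategy --- work with the log-survival quantities $y_{i,\ell}=\ln\frac{1}{1-p_{i,\ell}}$ so that $\Pr[\mathcal{M}(S)\le v_\ell]=e^{-\sum_{i\in S}y_{i,\ell}}$ becomes additive across elements, and then hand these additive ``loads'' to the \Santa subroutine --- is exactly the paper's. But there is a genuine gap at the step you describe as ``bookkeeping.'' The \Santa problem of \S\ref{sec:santaClaus} is a \emph{feasibility} problem: it is given explicit lower bounds $\mathbf{L}_i$ and either returns a covering assignment or reports infeasibility. After your rewriting, non-adaptive \ProbeMax is a \emph{maximization} problem, $\max_{|S|\le k}\sum_\ell w_\ell\, g\bigl(\sum_{i\in S}y_{i,\ell}\bigr)$ with $g(z)=1-e^{-z}$ concave. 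Replacing $g$ by truncated-linear pieces does not turn this into a covering instance --- you still do not know which level to cover along each dimension $\ell$. The missing idea, which is precisely the heart of the paper's \S\ref{subsec:nonadapt_guess}, is to \emph{guess} the approximate CDF $\tilde{p}_{\le v}\approx\Pr[\mathcal{M}(S^*)\le v]$ of the optimal maximum (and the heavy threshold $v_{\mathrm{heavy}}$ beyond which guessing is unnecessary); only then do the covering targets $\ln(1/\tilde{p}_{\le v})$ exist, and only then is the \Santa instance well-defined. The paper must also prove that near-matching this guessed CDF yields a $(1-O(\eps))$-approximate expectation (Lemma~\ref{lem:suff_cond}) and that the guessed instance is $1/\eps^4$-feasible (Lemma~\ref{lem:probemax_eps4}); both are nontrivial because the load ratio $\ln(1/\tilde{p}_{\le v})/\ln(1/(\tilde{p}_{\le v}-\eps^2))$ and the slack $\eps^2$ interact with $\Pr[\mathcal{M}(S^*)\le v]$ being close to $1$, and because the allowed $\delta$-violation of the covering constraint must be converted back to a bound on $\Pr[\mathcal{M}(\tilde S)\le v]$ via a gradient inequality. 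Your write-up defers exactly these estimates to ``the PTAS literature,'' but they are new to this reduction and are where the work lies.

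Two smaller issues: (i)~Your collapse to $O_\eps(1)$ distinct element \emph{types} is not in the paper and is not needed --- the \Santa EPTAS (Theorem~\ref{thm:result_santa_claus}) works directly with $n$ jobs via dependent rounding, so you gain nothing by this extra reduction, and rounding the entire probability vector $(\Pr[X_i\ge v_\ell])_\ell$ to finitely many types while preserving every $\E[\max_{i\in S}X_i]$ to a $(1+\eps)$ factor is itself a nontrivial claim you would have to prove (multiplicative rounding of $p_{i,\ell}$ near $1$ is dangerous for the product $\prod_i(1-p_{i,\ell})$). (ii)~The paper uses only $|\mathcal{V}|=O(1/\eps^2)$ support levels, spaced \emph{additively} by $\eps\mathcal{E}$, not the $O(\eps^{-1}\log(1/\eps))$ geometric grid you propose; either could be made to work, but the additive grid is what makes the tail-sum identity in Lemma~\ref{lem:suff_cond} clean.
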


\begin{theorem}\label{thm:AdapProbeMax}
There exists an EPTAS for the adaptive \ProbeMax problem.
\end{theorem}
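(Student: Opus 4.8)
The plan is to follow the same high-level route used for the non-adaptive case (Theorem~\ref{thm:NAProbeMax}), namely a tailor-made reduction to the \Santa problem, with one extra ingredient up front: replacing an arbitrary adaptive policy by a highly structured, ``block-adaptive'' one whose description requires only $t(\eps)$ bits of non-trivial information. Concretely, I would first guess a value $v^\star$ with $\OPT \le v^\star \le (1+\eps)\OPT$ (only $O(\tfrac{1}{\eps}\log n)$ candidates are needed, or fewer via a standard doubling argument), truncate the relevant outcome range to $[\eps^2 v^\star,\ v^\star/\eps]$, and round surviving outcomes down to powers of $(1+\eps)$. This leaves $m = O(\tfrac{1}{\eps}\log\tfrac{1}{\eps})$ discretized ``levels'' $v_1<\dots<v_m$, and a routine coupling shows the optimal adaptive value changes by at most a $(1+O(\eps))$ factor under this transformation, while the ``small'' mass below $\eps^2 v^\star$ contributes a negligible additive amount.

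Second, and this is the crux, I would argue that up to a further $1+\eps$ loss there is a near-optimal policy proceeding in $r=O(m)=O_\eps(1)$ \emph{rounds}: round $j$ probes a fixed batch $S_j$ simultaneously, and the only adaptivity is that round $j+1$ is executed only if the running maximum after round $j$ is still below some prescribed level $v_{\ell_j}$. The existence of such a policy follows the constant-round philosophy of Fu et al.~\cite{FLX-ICALP18}, but I would need to re-derive it with parameters depending only on $\eps$, with the thresholds $\ell_j$ drawn from the $m$ discretized levels, so that a block-adaptive policy is fully specified by $(S_1,\dots,S_r)$ together with $O_\eps(1)$ bits fixing $r$ and the thresholds. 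The point enabling this is that the objective is a max, so probing never hurts and the ``state'' a policy must react to collapses to the single scalar ``current maximum, bucketed into one of $m$ levels'' --- which is exactly what makes a bounded number of rounds suffice.

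Third, I would encode the choice of $(S_1,\dots,S_r)$, for a fixed round structure, as a \Santa instance. The policy's value equals $\sum_{\ell}(v_\ell-v_{\ell-1})\cdot\Pr[\text{running max reaches level }\ell]$, and, conditioned on reaching round $j$, the probability of reaching level $\ell$ in round $j$ equals $1-\prod_{i\in S_j}(1-p_{i\ell})$ where $p_{i\ell}=\Pr[X_i\ge v_\ell]$. Grouping elements into $O_\eps(1)$ classes according to their rounded exceedance profiles $(p_{i1},\dots,p_{im})$, the objective and the budget constraint $\sum_j|S_j|\le k$ depend only on the integer matrix recording how many elements of each class are probed in each round; for classes whose $p_{i\ell}$ are bounded away from $0$ (of which a given policy uses only $O_\eps(1)$ elements in total, after one more guess) this count is fixed exactly, and for the remaining classes $1-\prod_{i\in S_j}(1-p_{i\ell})=1-e^{-(1\pm O(\eps))\sum_{i\in S_j}p_{i\ell}}$, so the value becomes a fixed concave function of the per-round, per-level coverage vectors $\big(\sum_{i\in S_j}p_{i\ell}\big)_\ell$. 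Guessing (from $t(\eps)$ options) the target coverage profile that the optimum realizes, what remains is to decide how to assign the light elements to rounds so every (round, level) coverage target is met within budget --- a feasibility question of exactly the max-min-fairness type captured by \Santa. I would then verify that this instance satisfies the structural assumptions under which \Santa admits an EPTAS ($rm=O_\eps(1)$ dimensions, per-element vectors of $O_\eps(1)$ distinct profiles, restricted-type incidence), invoke that EPTAS, and compose the $(1+O(\eps))$ losses, rescaling $\eps$, to get a $(1+\eps)$-approximation in $t(\eps)\cdot\poly(n)$ time.

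The main obstacle is unquestionably the second step, controlling adaptivity: one must show that capping the number of rounds at $O_\eps(1)$ and snapping the thresholds onto the $O_\eps(1)$ discretized levels costs only $1+\eps$, \emph{and} do so in a way that leaves behind a \Santa-shaped optimization rather than an $n^{\poly(1/\eps)}$-state dynamic program over decision trees (which is precisely what confines the earlier approaches to mere PTASes). A secondary nuisance is that the $\prod(1-p)\leftrightarrow e^{-\sum p}$ linearization is valid only after the heavy classes have been peeled off and guessed, and that matching the probe budget $k$ against a partition of elements into $r$ rounds requires careful bookkeeping; both are standard EPTAS maneuvers but interact delicately with the discretization fixed in the first step.
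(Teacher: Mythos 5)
Your plan gets the broad shape right (block-adaptivity plus a \Santa reduction), but the crucial second step, which you yourself flag as the crux, is not a benign re-derivation of Fu et al.'s lemma --- it replaces their structure with a strictly weaker one, and I don't believe that weaker class is within $(1+\eps)$ of optimal.

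You propose a near-optimal policy that is a \emph{linear chain}: fixed batches $S_1,\dots,S_r$ probed in sequence, with the only adaptivity being a ``stop if the running max is already $\ge v_{\ell_j}$'' test after round $j$. The justification offered is that the state collapses to the scalar running max, so a bounded number of rounds suffice. But ``state is a scalar'' does not imply ``no branching'': a genuinely adaptive policy can and should choose the \emph{next batch} as a function of that scalar, not merely decide whether to continue. Concretely, after observing a low-but-nontrivial max versus a high max (both below the stop threshold), the marginal value of a candidate element differs, and under a cardinality budget the optimal continuation is different in the two cases. A chain policy must commit to one $S_{j+1}$ for all surviving histories, so it loses relative to a tree. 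What Fu et al.~\cite{FLX-ICALP18} actually prove, and what the paper's proof relies on (Assumption~(iv) in \S\ref{subsec:nonadapt_prelim}), is that a block-adaptive \emph{decision tree} with $O(1/\eps^3)$ blocks per root-leaf path and out-degree $|{\cal V}|$ at each internal node suffices; they do not show a chain suffices, and I see no argument in your proposal that would close that gap.

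This is not a cosmetic difference --- it is exactly the complication that forces the paper's extra machinery. Because a variable may appear in several blocks of the tree (on different root-leaf paths) but at most once per path, and because the per-path cardinality bound $k$ must hold on every path, one cannot assign variables to ``rounds.'' The paper introduces \emph{configurations} (subsets of blocks, at most one per root-leaf path), makes each configuration a machine in the \Santa instance, puts one load dimension per (block, value) pair with $\ell_{CiBv}=\ln(1/\prpar{\max\{I_B^*,X_i\}\le v})$, guesses per-configuration cardinalities $\tilde k_C$, and finishes with a random sparsification to restore the hard per-path budget. Your third step, by contrast, builds machines out of (round, coverage profile) and writes the value as $\sum_\ell (v_\ell - v_{\ell-1})\Pr[\text{reach level }\ell]$ with $\Pr[\text{reach level }\ell\text{ in round }j]=1-\prod_{i\in S_j}(1-p_{i\ell})$; even granting a chain policy, that last identity ignores the conditional distribution of the incoming max at round $j$, which is not a point mass. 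So the encoding as stated does not compute the policy value.

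In short, your steps 1 and 4 (discretization, linearization $\prod(1-p)\leftrightarrow e^{-\sum p}$ after peeling heavy classes, invoking the \Santa EPTAS) are in the same spirit as the paper's \S\ref{sec:NAProbeMax} and \S\ref{sec:adapProbeMax}. The gap is step 2: you would need to prove that chain policies with $O_\eps(1)$ rounds achieve $(1-\eps)\OPT$, which I believe is false, or replace the chain by the tree-with-configurations structure, at which point you have essentially reconstructed the paper's argument, including the per-path budget bookkeeping that motivates the random sparsification at the end.
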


We note that even though the adaptive \ProbeMax problem appears to be more ``difficult'' than its non-adaptive counterpart, we are not aware of any way to derive Theorem~\ref{thm:NAProbeMax} as a corollary of Theorem~\ref{thm:AdapProbeMax}. In essence, there is no natural way to transform a given decision tree into a non-adaptive solution while preserving its performance guarantee. 

\paragraph{Extensions.} 
In Section~\ref{sec:eqquivPandProphet}  we obtain an EPTAS for a  variant of the classical Pandora's Box problem~\cite{Weitzman-Econ79}. In \emph{\Pandora}, introduced by Fu et al.~\cite{FLX-ICALP18}, upon observing a random variable, one has to immediately decide  whether to select it or not. Formally, given  $n$ independent random variables $X_1, \ldots, X_n$, the outcome of each $X_i$ can be observed by paying a known cost, $c_i$. The algorithm is required to determine a permutation $\sigma \in S_n$ along with a stopping time $\tau$, so as to maximize $\expar{X_{\sigma(\tau)} - \sum_{i \leq \tau} c_{\sigma(i)}}$. 
Fu et al.~\cite{FLX-ICALP18} proposed an adaptive PTAS for this problem. Our contribution in this context is to prove that \Pandora is in fact equivalent to the \Prophets problem. Specifically, we show that the optimal solution to the former problem is a non-adaptive permutation, and that an $\alpha$-approximation for \Prophets implies an $\alpha$-approximation for \Pandora. By combining this equivalence with Theorem~\ref{thm:Prophets}, we derive the following result.

\begin{theorem} 
There exists an EPTAS for the \Pandora problem.
\end{theorem}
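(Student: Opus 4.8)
The plan is to obtain the EPTAS by a polynomial-time, approximation-preserving reduction from \Pandora to \Prophets, and then invoke Theorem~\ref{thm:Prophets}. Concretely, I will argue that there is a one-to-one correspondence between feasible \Pandora solutions and feasible solutions of a constructed \Prophets instance that preserves objective values, so that an $\alpha$-approximate solution to the \Prophets instance pulls back to an $\alpha$-approximate solution of the original \Pandora instance; since the construction and the translation are computable in $\poly(n)$ time, the overall running time stays of the form $t(\eps)\cdot\poly(n)$. The reduction rests on two ingredients: a structural statement that \Pandora has a non-adaptive optimum, and an encoding of the probing costs into the random variables.

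For the first ingredient I would prove that \Pandora admits an optimal policy that commits in advance to a permutation $\sigma^\ast$ of the boxes and makes only the stop-or-continue decision adaptively. The argument is a backward induction on the decision tree of an optimal policy, exploiting a feature special to the commitment model: once a box is probed and rejected, its realized value is useless forever, so the only information an optimal policy must carry forward from a node is the \emph{set} $T$ of boxes already probed, and the total cost paid, $\sum_{i\in T}c_i$, is a deterministic function of $T$. Hence the continuation value from such a node is itself a deterministic function $g(T)$, and the box probed next at state $T$ can be fixed a priori (breaking ties by any fixed rule). Unrolling this from $T=\emptyset$ produces a single permutation $\sigma^\ast$ that is followed regardless of the observed outcomes, while along $\sigma^\ast$ the accept/reject decision reduces to a threshold test: writing $\phi_{n+1}=0$ and $\phi_j=\max\bigl(0,\,-c_{\sigma(j)}+\E[\max(X_{\sigma(j)},\phi_{j+1})]\bigr)$, one accepts box $\sigma(j)$ iff $X_{\sigma(j)}\ge\phi_{j+1}$, and $\OPT=\max_{\sigma}\phi_1$. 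This is precisely the shape of the optimal \Prophets policy (a fixed order together with a dynamic-programming threshold rule), which is what makes a reduction plausible.

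The second, and genuinely delicate, ingredient is to fold the costs into the random variables so as to land in a bona fide \Prophets instance. The first natural attempt is to replace $X_i$ by the shifted variable $X_i-c_i$; this charges the cost of the box that is ultimately \emph{selected}, but \Pandora also charges the costs of every box that was probed and then rejected, so a faithful reduction must reproduce these ``sunk'' prefix costs within the cost-free \Prophets objective. My plan is to handle this either by a gadget construction — inserting ahead of each box a deterministic element that the prophet policy is forced to ``pass through'', thereby committing to pay $c_i$ whenever box $i$ is reached — or, more directly, by writing out the Bellman recursions of the two optimal threshold policies and checking term by term that the \Pandora recursion $\phi_j=\max\bigl(0,\,-c_{\sigma(j)}+\E[\max(X_{\sigma(j)},\phi_{j+1})]\bigr)$ is reproduced by the \Prophets recursion on the constructed instance. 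I expect this bookkeeping — matching the prefix costs without double counting, and accommodating the ``abandon the process'' option (the outer $\max(0,\cdot)$) that has no direct analogue on the prophet side — to be the main obstacle; everything else is routine. Once the equivalence is in place, composing it with the EPTAS for \Prophets from Theorem~\ref{thm:Prophets} yields a $(1+\eps)$-approximation for \Pandora running in $t(\eps)\cdot\poly(n)$ time, which is the assertion of the theorem.
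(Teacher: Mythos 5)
Your overall plan mirrors the paper's: prove a non-adaptive optimum exists for \Pandora (the paper cites Hill's argument; your backward-induction sketch reaches the same structural conclusion), and then reduce to \Prophets in an approximation-preserving way so that Theorem~\ref{thm:Prophets} applies. Where you stall is precisely at the step you flag as ``genuinely delicate'': you do not supply the transformation that makes the costs disappear. The naive shift $X_i-c_i$ you correctly discard; the gadget of inserting deterministic ``toll'' elements does not work in the \Prophets model because a fixed-order policy with an optimal stopping rule can simply decline to stop on the toll element for free, so the sunk cost of rejected boxes is never charged; and ``match the Bellman recursions term by term'' describes the verification you'd like to do without providing the instance on which it would succeed.

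The missing idea is the Weitzman index. For each box set $\kappa_i$ to be the unique solution of $\E[(X_i-\kappa_i)^+]=c_i$ and replace $X_i$ by the \emph{capped} variable $Y_i=\min\{X_i,\kappa_i\}$, then run the \Prophets oracle on $\mathbf{Y}$ with no costs. The point of the cap is a one-step cancellation: along an optimal permutation the stopping thresholds satisfy $t_i\le\kappa_i$, so ``stop at $i$'' happens under exactly the same event for $X_i$ and $Y_i$; when you do stop, the Pandora value exceeds the \Prophets value by $(X_i-\kappa_i)^+$, whose expectation (on the event $X_i\ge t_i\supseteq\{X_i>\kappa_i\}$) is exactly $c_i$, i.e.\ exactly the probing cost paid. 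This gives both inequalities the paper needs: $V_{\mathbf{Y}}(\sigma^*)\ge\OPT(\mathbf{X},\mathbf{c})$, and that any permutation $\sigma$ for $\mathbf{Y}$ converts back to a \Pandora policy with utility at least $V_{\mathbf{Y}}(\sigma)$. Without the cap, no linear shift or additive gadget reproduces the sunk-cost accounting, which is why your proposal, as written, has a genuine gap at its central step.
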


Finally, we show that our  framework can be leveraged to obtain analogous results for broader settings, with multiple-element selection. In particular,  in Section~\ref{sec:mmultItems}, we obtain an EPTAS for a generalization of non-adaptive \ProbeMax, where one wishes to non-adaptively select $k$ random variables, with the goal of maximizing the expected sum of the top $r$ selected variables. 

\subsection{High-level technical overview.} \label{sec:introTechniques} 

Our approach to all stochastic optimization problems mentioned above consists of presenting   reductions to the \Santa problem. In this setting, formally defined in Section~\ref{sec:santaClaus}, we are given $n$ jobs that should be assigned to $m$ machines, where each job $j\in [n]$ incurs a $D$-dimensional \emph{vector} load of $\ell_{ij} \in \reals_{+}^D$ on machine $i \in [m]$. We are additionally given vector coverage constraints ${L}_i \in  \reals_{+}^D$ for each machine $i$, and the goal is to compute an assignment in which the total vector load on each machine is at least ${L}_i$, when such an assignment exists. When $D=1$, this formulation captures the well-known Santa Claus problem~\cite{BS-STOC06}, which has been notoriously difficult, admitting constant-factor approximations only for certain special cases; see, e.g., \cite{AKS-TALG17,CCK-FOCS09,Feige-SODA08,BD-05}. In Section~\ref{sec:santaClaus}, we prove that for a constant number of machines $m$ and dimensions $D$, an EPTAS can be designed, up to slightly violating the coverage constraints.

\begin{theorem} [Informal Theorem~\ref{thm:result_santa_claus}] \label{thm:multidimInformal}
There exists an EPTAS for the \Santa problem,  up to violating the coverage constraints by a factor of $1-\epsilon$.
\end{theorem} 

At a high-level, for purposes of analysis, our approach for each stochastic optimization problem begins by breaking the optimal (non-)adaptive solution into disjoint buckets of random variables, where within any given bucket the solution's performance does not change by much (up to a $(1-\eps)$-factor). Subsequently to guessing a number of ``hyper-parameters'' that characterize each bucket, our algorithm wishes to assign the underlying random variables to the buckets defined earlier. Intuitively, the goal of these hyper-parameters is to capture crucial structural features of the optimal solution within each bucket. This reduction results in an instance of the \Santa problem where random variables can be thought of as jobs that should be assigned to machines corresponding to buckets, while simultaneously satisfying $D$ hyper-parameter constraints. 

Given the generic approach described above, the main challenge resides in defining the right bucketing and hyper-parameters, which are problem-specific decisions. It is important to mention that, besides capturing structural features of optimal solutions, hyper-parameters are required to have an additive form with respect to the assigned random variables, since machine loads are additive within the \Santa problem. In Section~\ref{sec:freeOrder}, we apply this approach to the \Prophets problem, where the application is easier than in other cases, since we only make use of a single hyper-parameter (i.e., $D=1$).  In Section~\ref{sec:NAProbeMax}, we present an application to non-adaptive \ProbeMax, where \Santa comes up in its single-machine form. Finally, in Section~\ref{sec:adapProbeMax}, we consider  adaptive \ProbeMax, where the full power of \Santa will be required.

\paragraph{Comparison to  Fu et al.~\cite{FLX-ICALP18}.}  
Technically speaking, our bucketing-with-hyper-parameters approach shares some  similarities with the block-adaptive-with-signatures approach of Fu et al. \cite{FLX-ICALP18}. There are, however,  crucial differences. First, they define a block as a subset of random variables over which adaptivity is not very helpful (up to negligible factors). Our notion of buckets is  much more general, as it also applies to problems such as \Prophets, where the optimal solution is non-adaptive. Second, and more importantly, we define buckets and hyper-parameters in order to facilitate a reduction to the \Santa problem. In contrast, Fu et al.\ guess the signature of each block very accurately (up to $1-\frac{1}{\poly(n)}$) and utilize a massive dynamic program. It is unclear whether an EPTAS can be obtained through such dynamic programs, since $\eps$-factor errors in signature-related guesses could translate to unbounded errors for the entire problem.

\subsection{Further related work.} \label{sec:related}

Evidently, in the last two decades, there has been a rapidly-growing body of work on both probing and stopping-time stochastic  optimization problems. Therefore, we mention below selected relevant papers, and refer the readers to the second author's thesis work \cite{Singla-Thesis18} and to the references therein for a fine-grained literature review.

Probing problems have become increasingly-popular in theoretical computer science, starting at the influential work of Dean et al.~\cite{DGV-FOCS04}, who studied the stochastic knapsack problem. Additional streams of literature emerged from subsequent papers related to stochastic matchings by Chen et al.~\cite{CIKMR-ICALP09}, stochastic submodular maximization by Asadpour et al.~\cite{ANS-WINE08}, and variants of Pandora's box by Kleinberg et al.~\cite{KWW-EC16} and Singla~\cite{Singla-SODA18}. These efforts resulted in constant-factor approximation algorithms, either using  implicit bounds on the adaptivity gap involved via LP (or multilinear) relaxations, or directly through explicit bounds. Further work in this context considered a wide range of problems, including knapsack~\cite{BGK-SODA11,Ma-SODA14}, orienteering~\cite{GM-ICALP09,GKNR-SODA12,BN-IPCO14}, packing integer programs~\cite{DGV-SODA05,CIKMR-ICALP09,BGLMNR-Algorithmica12}, submodular objectives~\cite{GN-IPCO13,ASW14,GNS-SODA17,BSZ-Approx19}, matchings~\cite{Adamczyk-IPL11,BGLMNR-Algorithmica12,BCNSX-APPROX15,AGM-ESA15,GKS-SODA19}, and Pandora's box models~\cite{GJSS-IPCO19,BK-EC19,GKS-SODA19,JLLS-ITCS20,BFLL-EC20,CGTTZ-FOCS20}, just to mention a few representative papers.

A concurrent research direction investigates combinatorial generalizations of the classic secretary~\cite{Dynkin-Journal63} and prophet inequality~\cite{Krengel-Journal77,Krengel-Journal78} stopping-time problems, due to their applications in algorithmic mechanism design~\cite{BIKK-SIGecom08,Lucier-SIGecom17}. For secretary problem generalizations, we refer the reader to the book chapter by Gupta and Singla~\cite{GS-arXiv20}, as the existing literature is somewhat less relevant to our current work. Hajiaghayi et al.~\cite{HKS-AAAI07} proved a prophet inequality for uniform matroids, and Alaei~\cite{Alaei-SICOMP14} obtained an asymptotically optimal $1+O(1/\sqrt{r})$ prophet inequality. A number of additional papers along these lines considered matroids~\cite{CHMS-STOC10,Yan-SODA11,KW-STOC12,EHKS-2024}, matchings~\cite{FSZ-SODA16,EFGT-EC20}, and arbitrary downward-closed constraints~\cite{Rubinstein-STOC16,RS-SODA17}. 

It is worth mentioning that all papers listed above (except for~\cite{BGK-SODA11}) lose at least a constant factor in their  approximation ratio. In contrast, there are only a handful of results for obtaining near-optimal policies. From this perspective, Bhalgat et al.~\cite{BGK-SODA11} devised a PTAS for the stochastic knapsack problem, with a $(1+\eps)$-relaxation of its packing constraint. To our knowledge, this is where the idea of block-adaptive policies was first introduced (see Section~\ref{sec:adapProbeMax}), followed by refinements due to Li and  Yuan~\cite{LiYuan-STOC13} and to Fu et al.~\cite{FLX-ICALP18} for other probing problems.  Some recent papers have also  obtained  $(1+\eps)$-approximations for special cases of prophet problems with respect to the optimal policy,   such as for 3-point distributions~\cite{ASZ-EC20} and for constant-depth laminar matroids~\cite{ANSS-EC19}.

After an initial draft of this paper has been made publicly available \cite{SS-arXiv20}, parallel independent works by Mehta et al.~\cite{MNPR-NeurIPS20} and Liu et al. \cite{LLPSS-arXiv20} have also appeared, obtaining an EPTAS for non-adaptive \ProbeMax and for \Prophets, respectively. Their techniques are very different from those presented in this paper.

\section{EPTAS for \Santa.} \label{sec:santaClaus}

In this section, we provide a formal description of the \Santa problem that lies at the heart of our algorithmic approach. With a concrete formulation in place, we show that for a fixed number of machines and  dimensions, this problem admits an efficient polynomial-time approximation scheme with a slight feasibility violation, which will  be sufficient for our purposes in subsequent sections.

\subsection{Problem description and main result.} \label{sec:santaClausProblem}

We consider a feasibility formulation of the \Santa problem. Instances of this problem consist of the following ingredients:
\begin{itemize} 
\item We are given a set of $m$ unrelated machines. Each machine $i$ is associated with an upper bound of $k_i$ on the number of jobs it can be assigned and a $D$-dimensional vector $L_i \in \reals^D_+$ that specifies a lower bound on the load vector of this machine.

\item We have a collection of $n$ jobs, each of which can be assigned to at most one machine. When job $j$ is assigned to machine $i$, we incur a $D$-dimensional load, specified by the vector $\ell_{ij} \in \reals^D_+$.
\end{itemize}
With respect to such instances, a job-to-machine assignment is defined as a function that decides for each job   which machine it is assigned to. By slightly expanding the conventional term, an assignment is allowed to leave out any given job. We say that an assignment is \emph{feasible} when each machine $i$ is assigned at most $k_i$ jobs, accumulating an overall load of at least $L_i$. Our objective is to compute a feasible assignment, or to report that the given instance is infeasible.

\paragraph{Integer programming formulation.} Moving forward, it will be instructive to express this problem via the integer program~\eqref{eqn:IP_scheduling}, whose specifics are described below. For simplicity, we assume without loss of generality that the lower bound $L_i$ on the load of each machine $i$ is a binary vector; this assumption can easily be enforced by scaling. As such, ${\cal A}_i$ will stand for the subset of dimensions where $L_i$ is \emph{active}, i.e., ${\cal A}_i = \{ d \in [D] : L_{id} = 1 \}$. We also assume that $\ell_{ijd} \in [0,1]$ for all combinations of jobs, machines, and dimensions, as any  $\ell_{ijd} > 1$ can clearly be truncated at $1$. 
\begin{equation} \label{eqn:IP_scheduling} \tag{IP}
\begin{array}{ll}
\text{(I)} \qquad {\displaystyle x_{ij} \in \{ 0,1 \}} \qquad \qquad & \forall \, i \in [m], \, j \in [n] \\
\text{(II)} \qquad {\displaystyle \sum_{i \in [m]} x_{ij} \leq 1} \qquad \qquad & \forall \, j \in [n] \\
\text{(III)} \qquad {\displaystyle \sum_{j \in [n]} x_{ij} \leq k_i} \qquad \qquad & \forall \, i \in [m] \\
\text{(IV)} \qquad {\displaystyle \sum_{j \in [n]} \ell_{ijd} x_{ij} \geq 1} \qquad \qquad & \forall \, i \in [m], \, d \in {\cal A}_i
\end{array}
\end{equation}
In this formulation, the binary variable $x_{ij}$ indicates whether job $j$ is assigned to machine $i$. Constraints~(I) and~(II) restrict the decision variables to take binary values, assigning each job to at most one machine. Constraint~(III) ensures that each machine $i$ is assigned at most $k_i$ jobs, and constraint~(IV) guarantees that the load on this machine along any active dimension $d$ is at least~$1$.

\paragraph{Main result.} As formally stated below, we prove that for a fixed number of machines $m$ and for a fixed dimension $D$, the \Santa problem admits an efficient polynomial-time approximation scheme with a slight violation of the load constraint~(IV). 

\begin{theorem} \label{thm:result_santa_claus}
When~\eqref{eqn:IP_scheduling} is feasible, we can compute a random binary vector $X\in \{0,1\}^{m \times n}$ such that,  with probability at least $1/2$: 
\begin{enumerate} 
\item Constraints~$\mathrm{(I)}$-$\mathrm{(III)}$ are satisfied.

\item Constraint~$\mathrm{(IV)}$ is \emph{$\eps$-violated}, i.e., $\sum_{j \in [n]} \ell_{ijd} X_{ij} \geq 1 - \eps$ for every $i \in [m]$ and $d \in {\cal A}_i$.  
\end{enumerate}
Our algorithm runs in $O( f( \eps, m, D ) \cdot \poly( |{\cal I}| ) )$ time, where $|{\cal I}|$ stands for the input length in its binary representation. 
\end{theorem}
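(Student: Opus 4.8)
The plan is to reduce the $\rho$-feasible instance to a bounded-size integer program by a two-stage ``heavy/light'' job classification combined with a randomized rounding step, so that the running time depends on $\eps,\rho,m,D$ only through the number of job types we must track exactly. First I would fix a threshold $\delta = \delta(\eps,m,D)$ (think $\delta \approx \eps/(mD)$) and, for each machine $i$ and active dimension $d \in {\cal A}_i$, call a job $j$ \emph{heavy for $(i,d)$} if $\ell_{ijd} \geq \delta$, and \emph{light} otherwise. Since a $\rho$-feasible assignment puts load at most $\rho$ on each $(i,d)$, at most $\rho/\delta$ heavy jobs are assigned to any given machine in total (summing over its $\le D$ active dimensions, at most $D\rho/\delta$), so the number of heavy jobs actually used across all machines in the optimal $\rho$-feasible solution is bounded by a function of $\eps,\rho,m,D$ alone. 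The idea is then to enumerate/guess the contribution of heavy jobs exactly and handle light jobs by a relaxation-plus-rounding argument, where each light job contributes at most $\delta$ to any coordinate so that concentration controls the aggregate error.

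Concretely, the steps I would carry out are: (1) \textbf{Guessing the heavy profile.} For each $(i,d)$, guess a ``target vector'' $\mathbf{t}_i \in [0,\rho]^{|{\cal A}_i|}$ on a grid of width $\delta$ giving the intended total load split between heavy and light jobs; there are only $(\rho/\delta)^{mD}$ such guesses. Also guess, for the heavy jobs, a rounded ``type'' — the vector of its loads $(\ell_{ijd})$ rounded to the grid of width $\delta$ on $[\delta,\rho]$, of which there are $O((\rho/\delta)^{mD})$ possible types — and the multiplicity of each type assigned to each machine (again a bounded table). This determines the entire heavy part of the solution up to which concrete jobs realize each type, and a feasible realization can be found (if one exists) by a bipartite matching / transportation subroutine in $\poly(|{\cal I}|)$ time. (2) \textbf{LP for the light jobs.} After committing to the heavy assignment, write the residual covering problem over light jobs as the LP relaxation of~\eqref{eqn:IP_scheduling} restricted to light jobs, with residual demand $\ge 1 - (\text{heavy contribution})$ on each $(i,d)$ and residual cardinality budgets $k_i - (\#\text{heavy on }i)$; solve it in polynomial time. (3) \textbf{Randomized rounding.} Round the fractional light assignment independently per job (assign job $j$ to machine $i$ with probability $x_{ij}$). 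Because each light job contributes at most $\delta$ to any coordinate, a Hoeffding/Bernstein bound shows that with constant probability every coordinate's rounded light load is within an additive $O(\sqrt{D\rho}\cdot\delta\cdot\sqrt{\log(mD)})$ — absorbed into $\eps$ by the choice of $\delta$ — of its expectation, and the cardinality constraints are respected up to a tiny additive slack that I would remove by a standard alteration (discard the few over-budget jobs, whose total load loss is again $O(\delta)$ per coordinate). Combining (2) and (3), constraint~(IV) is met up to $1-\eps$ and constraints~(I)–(III) exactly; the overall success probability is at least $1/2$ after a constant number of independent repetitions, and the running time is $(\rho/\delta)^{O(mD)}\cdot\poly(|{\cal I}|) = f(\eps,\rho,m,D)\cdot\poly(|{\cal I}|)$ as claimed.

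The main obstacle I anticipate is making the heavy/light decomposition interact cleanly with the \emph{per-machine cardinality constraints~(III)} simultaneously with the covering constraints~(IV). Randomized rounding of the light jobs naturally violates both the demand lower bounds and the cardinality upper bounds by small additive amounts, but lower-bound (covering) violations cannot be repaired by simply discarding jobs, so I need the LP solution itself to leave enough slack — e.g. by first scaling up the fractional light solution by $(1+\Theta(\eps))$ (legitimate because we only need an $\eps$-violated, i.e.\ $\ge 1-\eps$, guarantee and the problem is $\rho$-feasible so there is room), and only then rounding and trimming. A secondary technical point is ensuring that the grid-rounding of heavy job loads does not itself eat more than an $O(\eps)$ fraction of any coordinate; since a machine carries $O(D\rho/\delta)$ heavy jobs each rounded by $\le\delta$, the total rounding error per coordinate is $O(D\rho)$ in the worst case, which is \emph{not} small — so I would instead round heavy loads \emph{down} multiplicatively to powers of $(1+\delta)$ (giving $O(\log_{1+\delta}(\rho/\delta)) = O(\tfrac1\delta\log\tfrac\rho\delta)$ types, still bounded) so that the relative error per heavy job is $\le\delta$ and hence the aggregate relative error on each coordinate is $\le\delta$ after accounting for the $\le\rho$ total load. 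Getting these two roundings and the $(1+\Theta(\eps))$ scaling to stack up to a net $\eps$-violation is the delicate bookkeeping, but it is routine once the parameters are chosen in the right order ($\eps \to \delta \to$ grid sizes).
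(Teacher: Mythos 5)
Your overall structure — bucket the per-machine, per-dimension loads geometrically, guess how many heavy jobs of each type go to each machine, write a residual LP over the light jobs, and round — is essentially the same as the paper's (it defines machine-dependent job types via geometric intervals ${\cal I}_0=[0,\delta],{\cal I}_1=(\delta,(1+\eps)\delta],\ldots$, guesses the per-type counts $\hat{k}_{(i,\nu)}$ and a light-load estimate $\hat{L}_{id}$, and solves a strengthened LP). The decisive difference is the rounding step, and this is where your proposal has a genuine gap.

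You propose \emph{independent} per-job randomized rounding followed by alteration: ``discard the few over-budget jobs, whose total load loss is again $O(\delta)$ per coordinate.'' This does not hold. With independent rounding, the number of jobs landing on machine $i$ exceeds its expectation by $\Theta(\sqrt{\mu_i \log m})$ with constant probability, where $\mu_i = \sum_j x_{ij}$ can be as large as $k_i$, which is \emph{not} a function of $\eps,\rho,m,D$. So you may have to discard $\Theta(\sqrt{k_i})$ light jobs, and their combined load along some active dimension is $\Theta(\sqrt{k_i}\,\delta)$ in the worst case, not $O(\delta)$; for $\delta$ depending only on $\eps,\rho,m,D$ this cannot be absorbed into $\eps$ uniformly in $n$. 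Moreover, your proposed fix of scaling the fractional light solution up by $(1+\Theta(\eps))$ addresses only the \emph{covering} side; it makes the cardinality (upper-bound) constraint~(III) \emph{worse}, and can push $x_{ij}$ above $1$, which is not legitimate as a rounding probability. $\rho$-feasibility gives slack in the lower bound on load, not in the per-machine cardinality budget or the $[0,1]$ box constraints, so ``there is room'' is not justified.

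The paper avoids this entirely by using the \emph{dependent rounding} of Gandhi et al.\ on the bipartite job/type graph. Dependent rounding has three properties: correct marginals, degree-preservation (each vertex's realized degree lies between the floor and ceiling of its fractional degree, with probability~$1$), and Chernoff-type concentration for weighted sums. Degree-preservation makes constraints~(I)--(III) hold \emph{deterministically} (no alteration needed), the exact type-count constraints $(S_3)$ make the heavy contribution exactly $\hat{k}_{(i,\nu)}$ per type so its load is deterministically within a $(1+\eps)$ factor, and concentration handles only the light contribution $(S_5)$, where every coefficient is at most $\delta$ so the failure probability can be pushed below $\tfrac{1}{2mD|{\cal V}|}$ by choosing $\delta$ as a function of $\eps,\rho,m,D$ alone. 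If you want to salvage your independent-rounding argument, you would need to show how to fix the cardinality overflow without a load loss that scales with $\sqrt{k_i}$; as written, that step fails, and switching to dependent rounding is the clean repair.
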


\paragraph{Outline.} To establish this result, our approach consists of formulating a strengthened LP relaxation of~\eqref{eqn:IP_scheduling}, followed by devising a randomized rounding procedure, ensuring  approximate satisfaction of the load constraintst~$\mathrm{(IV)}$. The key idea is to separate ``large" and ``small" jobs, handling the former via enumeration and the latter using concentration bounds. The proof proceeds in four steps:
\begin{enumerate}
    \item We start by choosing  small constants $\delta, \eps \in (0,1)$ and partitioning the  interval $[0,1]$ into disjoint segments $[0,\delta]$, $(\delta, (1+\eps) \cdot \delta]$, $((1+\eps) \cdot \delta, (1+\eps)^2 \cdot \delta]$, and so on. This partition allows us to define a ``type'' for every job, where any two jobs $j \neq j'$ of the same type will have $\ell_{ijd}$ and $\ell_{ij'd}$ reside in the same segment for all machines and dimensions. As such, the total number of job types is only a constant, since the number of machines and dimensions is fixed.

    \item Then, for each machine $i$, we approximately guess how many jobs of each type it receives in the optimal integral solution. In particular, we guess this number exactly when machine $i$ receives fewer than $2/\delta$ jobs of some type; otherwise, we guess this quantity up to a $(1-\xi)$-factor, for some constant $\xi = \Theta(\eps)$. 

    \item This step strengthens  \eqref{eqn:IP_scheduling} by adding, for each machine $i$, a cardinality constraint on the number of jobs it receives of each type as well as on the total contribution of small jobs using the guesses mentioned in item~2.  
    
    \item Our final step employs the dependent rounding algorithm of Gandhi et al.~\cite{GandhiKPS06} on the solution of the corresponding LP relaxation. This algorithm has the property that it preserves cardinality constraints while providing a concentration bound on the total size of small jobs. These properties allow us to argue that the total load due to large jobs is (nearly) preserved due to our cardinality guesses, and that the total load due to small jobs is approximately preserved due to concentration.
\end{enumerate}

\subsection{Step 1: Job classification based on types.} \label{subsec:scheduling_classes}

We begin by introducing a way to classify jobs according to their load contributions along any dimension. To this end, let $\delta = \delta( \eps, m, D ) \in (0,1)$ be a parameter whose value will be determined later on, ensuring in particular that $\frac{ 1 }{ \delta }$ is an integer. For every machine $i \in [m]$ and for every $i$-active dimension $d \in {\cal A}_i$, we say that job $j$ is {\em $(i,d)$-large} when $\ell_{ijd} \in (\delta,1]$; otherwise, $\ell_{ijd} \in [0,\delta]$, and this job is called {\em $(i,d)$-small}. Now, let us geometrically partition the interval $[0,1]$ into pairwise-disjoint \emph{segments} ${\cal I}_0, \ldots, {\cal I}_Q$, such that ${\cal I}_0 = [0,\delta]$, ${\cal I}_1 = (\delta, (1+\eps) \cdot \delta]$, ${\cal I}_2 = ((1+\eps) \cdot \delta, (1+\eps)^2 \cdot \delta]$, and so on. Here, $Q$ is the smallest integer for which $(1+\eps)^Q \cdot \delta \geq 1$, implying that $Q = O( \frac{ 1 }{ \eps } \log \frac{ 1 }{ \delta } )$. 

For convenience, we make use of ${\cal A}$ to designate the collection of pairs $(i,d)$ for which $d$ is an $i$-active dimension, i.e., ${\cal A} = \{ (i,d): i \in [m], d \in {\cal A}_i \}$. With this notation, let us associate each job $j \in [n]$ with a $|{\cal A}|$-dimensional \emph{type vector} ${\cal T}^j = ({\cal T}^j_{id})_{ (i,d) \in {\cal A} }$, where ${\cal T}^j_{id}$ is the unique index $q$ for which $\ell_{ijd} \in {\cal I}_q$. In other words, the marginal load $\ell_{ijd}$ we would incur along dimension $d$ by assigning job $j$ to machine $i$ resides within the segment ${\cal I}_q$. Clearly, ${\cal T}^j_{id} = 0$ when this job is $(i,d)$-small, whereas ${\cal T}^j_{id} \in [Q]$ in the $(i,d)$-large case. Since different jobs may be associated with the same type vector, let ${\cal T}^{(1)}, \ldots, {\cal T}^{(T)}$ be the collection of  distinct type vectors within $\{ {\cal T}^j \}_{j \in [n]}$. Clearly, the number of type vectors is $T = O( Q^{ |{\cal A}| } ) = O( (\frac{ 1 }{ \eps } \log \frac{ 1 }{ \delta })^{ O(mD) } )$, as any such vector is $|{\cal A}|$-dimensional with each coordinate being one of the values $0, \ldots, Q$. 

Finally, to introduce our job classification, we partition the set of jobs into classes ${\cal J}_1, \ldots, {\cal J}_T$, where each class ${\cal J}_t$ stands for the collection of jobs associated with the type vector ${\cal T}^{ (t) }$, namely, ${\cal J}_t = \{ j \in [n]: {\cal T}^j = {\cal T}^{ (t) } \}$.

\subsection{Step 2: Guessing.} \label{subsec:guessing_santa}

Our intermediate goal is to define a set of valid inequalities with respect to~\eqref{eqn:IP_scheduling} that will appropriately restrict the number of jobs assigned to each machine from each of the classes ${\cal J}_1, \ldots, {\cal J}_T$. Further inequalities will ensure that, for any machine-dimension-class triplet, we are obtaining a sufficiently-large load contribution out of jobs whose individual contribution is very small by itself. Toward forming these inequalities, we remind the reader that Theorem~\ref{thm:result_santa_claus} is conditional on the feasibility of~\eqref{eqn:IP_scheduling}, in which case we use $x^{ \eqref{eqn:IP_scheduling} } \in \{0,1\}^{m \times n}$ to denote  an arbitrary feasible solution to this integer program. We proceed by efficiently guessing a number of structural properties related to this vector.

\paragraph{Number of pairwise assignments.} Our first step will focus on obtaining a pair of parameters, $N^{ \downarrow }_{it}$ and $N^{ \uparrow }_{it}$, for every machine $i \in [m]$ and class index $t \in [T]$. These quantities will be related to the unknown value of $N^{ \eqref{eqn:IP_scheduling} }_{it} = \sum_{j \in {\cal J}_t} x^{ \eqref{eqn:IP_scheduling} }_{ij}$, which is the number of class-${\cal J}_t$ jobs that are assigned by $x^{ \eqref{eqn:IP_scheduling} }$ to machine $i$. Specifically, we begin by guessing the exact value of $\min \{ N^{ \eqref{eqn:IP_scheduling} }_{it}, \frac{ 2 }{ \delta } + 1 \}$ over all machine-class pairs, meaning that there are only $O( ( \frac{ 1 }{ \delta } )^{O(mT)} )$ guesses to consider. Subsequently, $N^{ \downarrow }_{it}$ and $N^{ \uparrow }_{it}$ are defined as follows:
\begin{itemize}
    \item {\em When $N^{ \eqref{eqn:IP_scheduling} }_{it} \leq \frac{ 2 }{ \delta }$}, which is equivalent to having $\min \{ N^{ \eqref{eqn:IP_scheduling} }_{it}, \frac{ 2 }{ \delta } + 1 \} \leq \frac{ 2 }{ \delta }$, we set $N^{ \downarrow }_{it} = N^{ \uparrow }_{it} = N^{ \eqref{eqn:IP_scheduling} }_{it}$, noting that $N^{ \eqref{eqn:IP_scheduling} }_{it}$ is indeed known in this case.

    \item {\em When $N^{ \eqref{eqn:IP_scheduling} }_{it} > \frac{ 2 }{ \delta }$}, we set $N^{ \downarrow }_{it} = \frac{ 1 }{ \delta}$ and additionally guess an integer value $N^{ \uparrow }_{it} \in [(1-\xi) \cdot N^{ \eqref{eqn:IP_scheduling} }_{it}, N^{ \eqref{eqn:IP_scheduling} }_{it}]$, where $\xi = \xi( \eps, m, D ) \in (0,1)$ is a parameter whose value will be determined later on. Across all such machine-class pairs, there are $O( (\frac{ 1 }{ \xi } \log n)^{ mT } ) = O( n \cdot 2^{ O(m^2 T^2) } \cdot (\frac{ 1 }{ \xi } )^{ m T } )$ guesses to examine.
\end{itemize}

\paragraph{Load contribution of small jobs.} As a second step, we will be guessing an underestimate $S^{ \downarrow }_{idt}$ of $S^{ \eqref{eqn:IP_scheduling} }_{idt} = \sum_{j \in {\cal J}_t} \ell_{ijd} x^{ \eqref{eqn:IP_scheduling} }_{ij}$, for every machine $i \in [m]$, active dimension $d \in {\cal A}_i$, and class index $t$ with ${\cal T}^{ (t) }_{id} = 0$. To better understand this expression, recall that job $j$ is $(i,d)$-small when $\ell_{ijd} \leq \delta$, which is equivalent to having a type vector with ${\cal T}^j_{id} = 0$. Therefore, $S^{ \eqref{eqn:IP_scheduling} }_{idt}$ is exactly the total $x^{ \eqref{eqn:IP_scheduling} }$-induced load on machine $i$ along dimension $d$ due to the jobs in class ${\cal J}_t$, which are $(i,d)$-small. In this case, our guess is of the form $S^{ \downarrow }_{idt} = \lfloor S^{ \eqref{eqn:IP_scheduling} }_{idt} \rfloor_{\eps/T}$, where  $\lfloor \cdot \rfloor_{\eps/T}$ is an operator that rounds its argument down to the nearest integer multiple of $\eps/T$. Consequently, over all machine-dimension-class triplets, the number of guesses to be considered is $O( ( \frac{ T }{ \eps } )^{ O(mDT) } )$.

\subsection{Step 3: Introducing the linear program.}

To better understand the upcoming discussion, it is useful to view our linear problem as a relaxation of an edge-selection problem on a bipartite graph. One side of this graph has a separate vertex for each job $j \in [n]$. On the other side, we create a unique vertex for each machine-class pair $(i,t)$. These two vertices are connected by an edge if and only if job $j$ belongs to class ${\cal J}_t$, in which case this edge is labeled by the load vector $\ell_{ij}$. Recalling that ${\cal J}_1, \ldots, {\cal J}_T$ partitions the overall collection of jobs based on their types, it follows that job $j$ is connected to exactly one pair $(i,t)$ for each machine $i$.

Given this graph, for every job-vertex $j \in [n]$ and pair-vertex $(i,t) \in [m] \times [T]$, we create a $[0,1]$-valued decision variable $y_{(i,t),j}$. Had the latter been $\{0,1\}$-valued, it could have been perceived as indicating whether we pick the edge $(j,(i,t))$ or not, which represents the decision of assigning job $j$ to machine $i$. Due to potentially taking values strictly between $0$ and $1$, we will interpret $y_{(i,t),j}$ as the selected fraction of $(j,(i,t))$. The feasibility region of our linear program will incorporate three types of constraints:
\begin{itemize}
    \item Each job-vertex has a fractional degree of at most $1$.

    \item Each pair-vertex $(i,t)$ has $N^{ \downarrow }_{it}$ and $N^{ \uparrow }_{it}$ as lower and upper bounds on its fractional degree.

    \item For every machine $i \in [m]$, active dimension $d \in {\cal A}_i$, and class index $t$ with ${\cal T}^{ (t) }_{id} = 0$, we place a lower bound of $(1 - \xi) \cdot S^{ \downarrow }_{idt}$ on the total fractional load on machine $i$ along dimension $d$ due to the jobs in class ${\cal J}_t$.
\end{itemize}
Formally, we define the following linear feasibility problem, where in constraint~$(S_2)$ we use $\rho( j )$ to denote the unique index $t \in [T]$ for which $j \in {\cal J}_t$:  
\begin{equation} \label{eqn:linear_relax} \tag{LP}
\begin{array}{ll}
(S_1) \quad y_{(i,t),j} \in [ 0,1 ]  \qquad & \forall \, i \in [m], \, t \in [T], \, j \in {\cal J}_t  \\
(S_2) \quad {\displaystyle \sum_{i \in [m]} y_{(i,\rho( j )),j} \leq 1}  \qquad & \forall \, j \in [n] \\
(S_3) \quad {\displaystyle \sum_{j \in {\cal J}_t} y_{(i,t),j} \in [N^{ \downarrow }_{it}, N^{ \uparrow }_{it}] } \qquad & \forall \, i \in [m], \, t \in [T] \\
(S_4) \quad {\displaystyle \sum_{j \in {\cal J}_t} \ell_{ijd} y_{(i,t),j} \geq (1 - \xi) \cdot S^{ \downarrow }_{idt} \qquad} & \forall \, i \in [m], \, d \in {\cal A}_i, \, t \in [T] : {\cal T}^{ (t) }_{id} = 0
\end{array}
\end{equation}
 The next lemma, whose proof is provided in Section~\ref{subsec:proof_clm_bound}, shows that~\eqref{eqn:linear_relax} is necessarily feasible, under our running assumption that~\eqref{eqn:IP_scheduling} is feasible. 

\begin{lemma} \label{lem:lp_feasible_givenip}
\eqref{eqn:linear_relax} is feasible. 
\end{lemma}

\subsection{Step 4: The rounding algorithm.} \label{subsec:round_scheduling}

As our final step, we show that the dependent rounding framework of Gandhi et al.~\cite{GandhiKPS06} can be exploited to convert any feasible solution to the linear relaxation~\eqref{eqn:linear_relax} into a nearly-feasible assignment for our original problem~\eqref{eqn:IP_scheduling}.

\paragraph{Background.} In our  bipartite graph viewpoint, one side  has a separate vertex for each job $j \in [n]$ and the other side has a unique vertex for each machine-class pair $(i,t)$.
Translating the main result of Gandhi et al.~\cite{GandhiKPS06}  to this setting, given a vector $y$ that meets constraints~$(S_1)$-$(S_3)$, their work provides a randomized construction of a vector $Y = ( Y_{(i,t),j} )_{i \in [m], t \in [T], j \in {\cal J}_t}$ of Bernoulli random variables  that satisfies the following properties:
\begin{enumerate} 
\item[$(P_1)$] {\em Marginal distribution}: $\prpar{ Y_{(i,t),j} = 1 } = y_{(i,t),j}$, for every $i \in [m]$, $t \in [T]$, and $j \in {\cal J}_t$.

\item[$(P_2)$] {\em Degree preservation}: $Y$ satisfies constraints~$(S_2)$ and~$(S_3)$ with probability $1$.

\item[$(P_3)$] {\em Concentration inequalities}: For every pair-vertex $(i,t)$, for every $[0,1]$-valued coefficients $\{ \alpha_{(i,t),j} \}_{j \in {\cal J}_t}$, for every $\lambda \leq \expar{  \sum_{j \in {\cal J}_t} \alpha_{(i,t),j} Y_{(i,t),j} }$, and for every $\eta \in (0,1)$,
\[ \pr{ \sum_{j \in {\cal J}_t} \alpha_{(i,t),j} Y_{(i,t),j} \leq (1-\eta) \cdot \lambda } ~~\leq~~ \exp \left( - \frac{ \lambda \eta^2 }{ 2 } \right) \ . \]
\end{enumerate}

\paragraph{Calibration of parameters.} We remind the reader that our construction involves two parameters, $\delta = \delta( \eps, m, D )$ and $\xi = \xi( \eps, m, D )$, whose values have not been specified up until now. Regarding the latter, we simply set $\xi = \frac{ \eps }{ 2 }$. In contrast, the choice of $\delta$ is somewhat more cumbersome, due to technical reasons related to the appearance of $\delta$ in future concentration inequalities. To this end, as explained in Section~\ref{subsec:scheduling_classes}, we know that $T = O( (\frac{ 1 }{ \eps } \log \frac{ 1 }{ \delta })^{ O(mD) } )$, implying that there exists an efficiently-computable constant $\psi = \psi( \eps, m, D )$ satisfying $T \leq ( \frac{ \psi }{ \delta } )^{1/3}$ for all $\delta \in (0,1)$. As such, we pick $\delta = ( \frac{ \eps^3 }{ 32mD\psi^{2/3} } )^3$, assuming without loss of generality that $\frac{ 1 }{ \delta }$ is an integer.

\paragraph{The algorithm.} We first compute a feasible solution $y^*$ to the linear program~\eqref{eqn:linear_relax}, whose existence is guaranteed by Lemma~\ref{lem:lp_feasible_givenip}.  With respect to this solution, we then create the vector $Y = ( Y_{(i,t),j} )_{i \in [m], t \in [T], j \in {\cal J}_t}$ of Bernoulli variables  via the dependent rounding approach described above. These variables are translated in turn to the random vector $X = ( X_{ij} )_{ i \in [m], j \in [n] }$, where $X_{ij} = Y_{(i,\rho(j)),j}$, recalling that $\rho( j )$ denotes the unique index $t \in [T]$ for which $j \in {\cal J}_t$.

\subsection{Analysis.} \label{subsec:scheduling_analysis}

In order to derive Theorem~\ref{thm:result_santa_claus}, we first observe that $X$ satisfies constraints~(I)-(III) of the integer program \eqref{eqn:IP_scheduling} with probability 1:
\begin{itemize}
    \item {\em Constraint~(I)}: Since $Y$ is a vector of Bernoulli random variables, $X_{ij} \in \{ 0, 1 \}$ for every job $j \in [n]$ and machine $i \in [m]$.

    \item {\em Constraint~(II)}: To validate that each job $j \in [n]$ is assigned to at most one machine, note that with probability $1$, the random number of machines to which we assign this job is exactly $\sum_{i \in [m]} X_{ij} = \sum_{i \in [m]} Y_{(i,\rho(j)),j} \leq 1$, where the last inequality holds since $Y$ satisfies constraint~$(S_2)$ with probability $1$, by property~$(P_2)$.

     \item {\em Constraint~(III)}: We argue that the random number of jobs assigned to each machine $i \in [m]$ does not exceed its capacity $k_i$ with probability $1$, since
\begin{eqnarray}
\sum_{j \in [n]} X_{ij} & = & \sum_{t \in [T]} \sum_{j \in {\cal J}_t} Y_{(i,t),j} \nonumber \\
& \leq & \sum_{t \in [T]} N^{ \uparrow }_{it} \label{eqn:third_constraint_1} \\
& \leq & \sum_{t \in [T]} N^{ \eqref{eqn:IP_scheduling} }_{it} \label{eqn:third_constraint_2} \\
& = & \sum_{t \in [T]} \sum_{j \in {\cal J}_t} x^{ \eqref{eqn:IP_scheduling} }_{ij} \nonumber \\
& = & \sum_{j \in [n]}x^{ \eqref{eqn:IP_scheduling} }_{ij} \nonumber\\
& \leq & k_i \ . \label{eqn:third_constraint_3}
\end{eqnarray}
Here, inequality~\eqref{eqn:third_constraint_1} follows by noting that $\sum_{j \in {\cal J}_t} Y_{(i,t),j} \leq N^{ \uparrow }_{it}$, according to property~$(P_2)$. Inequality~\eqref{eqn:third_constraint_2} follows by definition of $N^{ \uparrow }_{it}$ in Section~\ref{subsec:guessing_santa}. Finally, inequality~\eqref{eqn:third_constraint_3} holds since $x^{ \eqref{eqn:IP_scheduling} }$ is a feasible solution to~\eqref{eqn:IP_scheduling}.
\end{itemize}

The challenging part of our analysis resides in showing that, with probability at least $1/2$, constraint~(IV) is only $O(\eps)$-violated. This result is precisely where we will be exploiting the concentration inequalities stated in property~$(P_3)$ along with the lower bounds $\{ N^{ \downarrow }_{it} \}_{i \in [m], t \in [T]}$ in constraint~$(S_3)$, which were left aside up until now.

\begin{lemma} \label{lem:fourth_constraint}
$\prpar{ X \text{ $\mathrm{violates}$ } (\mathrm{IV}) \text{ by at most  $2\eps$} } \geq 1/2$.
\end{lemma}
\proof{Proof.}
Our proof is based on proving that for every machine $i \in [m]$ and $i$-active dimension $d \in {\cal A}_i$, we have  $\prpar{ \sum_{j \in [n]} \ell_{ijd} X_{ij} < 1 - 2\eps } \leq \frac{ 1 }{2mD}$. This claim would imply that
\begin{eqnarray*}
\pr{ X \text{ $\mathrm{violates}$ } (\mathrm{IV}) \text{ by at most  $2\eps$} }  & = & \pr{ \bigcap_{i \in [m], d \in {\cal A}_i} \Big\{ \sum_{j \in [n]} \ell_{ijd} X_{ij} \geq 1 - 2\eps \Big\} } \\
& \geq & 1 - \sum_{i \in [m]} \sum_{d \in {\cal A}_i} \pr{ \sum_{j \in [n]} \ell_{ijd} X_{ij} < 1 - 2\eps } \\
& \geq &\frac{ 1 }{ 2 } \ .
\end{eqnarray*}

Focusing on one such machine-dimension pair $(i,d)$, let us decompose the random load $\sum_{j \in [n]} \ell_{ijd} X_{ij}$ into the contributions of $(i,d)$-small and $(i,d)$-large jobs as follows:
\begin{eqnarray}
\sum_{j \in [n]} \ell_{ijd} X_{ij} & = & \sum_{t \in [T]} \sum_{j \in {\cal J}_t} \ell_{ijd} Y_{(i,t),j} \nonumber \\
& = & \sum_{t \in [T]: {\cal T}^{ (t) }_{id} = 0} \underbrace{ \sum_{j \in {\cal J}_t} \ell_{ijd} Y_{(i,t),j} }_{ \mathsf{(A)}_{idt} } + \sum_{t \in [T]: {\cal T}^{ (t) }_{id} \geq 1} \underbrace{ \sum_{j \in {\cal J}_t} \ell_{ijd} Y_{(i,t),j} }_{ \mathsf{(B)}_{idt} } \ . \label{eqn:decompose_load}
\end{eqnarray}
In this decomposition, each of the terms $\mathsf{(A)}_{idt}$ aggregates load contributions due to $(i,d)$-small jobs belonging to class ${\cal J}_t$. As stated in Claim~\ref{clm:bound_first}, we relate $\mathsf{(A)}_{idt}$ to our estimate $S^{ \downarrow }_{idt}$ for the analogous quantity with respect to $x^{ \eqref{eqn:IP_scheduling} }$ via the concentration inequalities of property~$(P_3)$. In contrast, $\mathsf{(B)}_{idt}$ represents the total load contribution of $(i,d)$-large jobs belonging to class ${\cal J}_t$, shown in Claim~\ref{clm:bound_second} to nearly-match their analogous quantity with respect to $x^{ \eqref{eqn:IP_scheduling} }$ with probability $1$. For readability purposes, we provide the proofs of these claims in Section~\ref{subsec:proof_clm_bound}. 

\begin{claim} \label{clm:bound_first}
$\prpar{ \mathsf{(A)}_{idt} \geq (1-\eps) \cdot S^{ \downarrow }_{idt} } \geq 1 - \frac{ 1 }{ 2mDT }$. 
\end{claim}

\begin{claim} \label{clm:bound_second}
$\mathsf{(B)}_{idt} \geq (1 - \eps) \cdot  \min \{ \sum_{j \in {\cal J}_t} \ell_{ijd} x^{ \eqref{eqn:IP_scheduling} }_{ij}, 1 \}$, with probability $1$. 
\end{claim}

By combining these bounds and  decomposition~\eqref{eqn:decompose_load}, with probability at least $1 - \frac{ 1 }{ 2mD }$, we have 
\begin{eqnarray}
\sum_{j \in [n]} \ell_{ijd} X_{ij} & \geq & (1-\eps) \cdot \sum_{t \in [T]: {\cal T}^{ (t) }_{id} = 0} S^{ \downarrow }_{idt} + (1 - \eps) \cdot   \sum_{t \in [T]: {\cal T}^{ (t) }_{id} \geq 1} \min \left\{  \sum_{j \in {\cal J}_t} \ell_{ijd} x^{ \eqref{eqn:IP_scheduling} }_{ij}, 1 \right\} \nonumber \\
& \geq & (1 - \eps) \cdot \left( \sum_{t \in [T]: {\cal T}^{ (t) }_{id} = 0} \sum_{j \in {\cal J}_t} \ell_{ijd} x^{ \eqref{eqn:IP_scheduling} }_{ij} - \eps \right) + (1 - \eps) \cdot  \min \left\{ \sum_{t \in [T]: {\cal T}^{ (t) }_{id} \geq 1} \sum_{j \in {\cal J}_t} \ell_{ijd} x^{ \eqref{eqn:IP_scheduling} }_{ij}, 1 \right\} 
\nonumber \\
& \geq & (1 - \eps) \cdot \min \left\{ \sum_{j \in [n]} \ell_{ijd} x^{ \eqref{eqn:IP_scheduling} }_{ij}, 1 \right\} - \eps \nonumber \\
& \geq & 1 - 2 \eps \ . \label{eqn:const_IV_2}
\end{eqnarray}
Here, the second inequality holds since $S^{ \downarrow }_{idt} = \lfloor S^{ \eqref{eqn:IP_scheduling} }_{idt} \rfloor_{\eps/T} = \lfloor \sum_{j \in {\cal J}_t} \ell_{ijd} x^{ \eqref{eqn:IP_scheduling} }_{ij} \rfloor_{\eps/T}$, whereas inequality~\eqref{eqn:const_IV_2} follows by noting that $\sum_{j \in [n]} \ell_{ijd} x^{ \eqref{eqn:IP_scheduling} }_{ij} \geq 1$, as implied by the feasibility of $x^{ \eqref{eqn:IP_scheduling} }$ for~\eqref{eqn:IP_scheduling}. \halmos
\endproof

\subsection{Additional proofs.} \label{subsec:proof_clm_bound}

\proof{Proof of Lemma~\ref{lem:lp_feasible_givenip}.} 

We establish the feasibility of~\eqref{eqn:linear_relax} by constructing a candidate solution $\hat{y}$ and proving that it satisfies each of the constraints~$(S_1)$-$(S_4)$. To this end, for every pair-vertex $(i,t)$ and job-vertex $j \in {\cal J}_t$, we define
\[ \hat{y}_{(i,t),j} ~~=~~ \begin{cases}
    x^{ \eqref{eqn:IP_scheduling} }_{ij}, & \text{if } N^{ \eqref{eqn:IP_scheduling} }_{it} \leq \frac{ 2 }{ \delta } \\
    (1 - \xi) \cdot x^{ \eqref{eqn:IP_scheduling} }_{ij}, & \text{if } N^{ \eqref{eqn:IP_scheduling} }_{it} > \frac{ 2 }{ \delta }
\end{cases}\]
We proceed by arguing that $\hat{y}$ is indeed a feasible solution to~\eqref{eqn:linear_relax}:
\begin{itemize}
    \item {\em Constraint~$(S_1)$}: Since $x^{ \eqref{eqn:IP_scheduling} }_{ij} \in \{ 0,1 \}$, it follows that $\hat{y}_{(i,t),j} \leq x^{ \eqref{eqn:IP_scheduling} }_{ij} \in  [0,1]$.

    \item {\em Constraint~$(S_2)$}: For every job-vertex $j \in [n]$, we have
    $\sum_{i \in [m]} \hat{y}_{(i,\rho( j )),j} \leq \sum_{i \in [m]} x^{ \eqref{eqn:IP_scheduling} }_{ij}  \leq 1$, where the last inequality is implied by the feasibility of $x^{ \eqref{eqn:IP_scheduling} }$ for~\eqref{eqn:IP_scheduling}.

    \item {\em Constraint~$(S_3)$}: For every pair-vertex $(i,t)$ with $N^{ \eqref{eqn:IP_scheduling} }_{it} \leq \frac{ 2 }{ \delta }$, we have $\sum_{j \in {\cal J}_t} \hat{y}_{(i,t),j} = \sum_{j \in {\cal J}_t} x^{ \eqref{eqn:IP_scheduling} }_{ij} = N^{ \eqref{eqn:IP_scheduling} }_{it} \in [N^{ \downarrow }_{it}, N^{ \uparrow }_{it}]$, since $N^{ \downarrow }_{it} = N^{ \uparrow }_{it} = N^{ \eqref{eqn:IP_scheduling} }_{it}$ in this case. In the opposite scenario, where $N^{ \eqref{eqn:IP_scheduling} }_{it} > \frac{ 2 }{ \delta }$, we have
    \[ \sum_{j \in {\cal J}_t} \hat{y}_{(i,t),j} ~~=~~ (1 - \xi) \cdot \sum_{j \in {\cal J}_t} x^{ \eqref{eqn:IP_scheduling} }_{ij} ~~=~~ (1 - \xi) \cdot N^{ \eqref{eqn:IP_scheduling} }_{it} ~~\in~~ [N^{ \downarrow }_{it}, N^{ \uparrow }_{it}] \ . \]   
    To better understand the last transition, note that $(1 - \xi) \cdot N^{ \eqref{eqn:IP_scheduling} }_{it} \geq \frac{ N^{ \eqref{eqn:IP_scheduling} }_{it} }{ 2 } > \frac{ 1 }{ \delta } = N^{ \downarrow }_{it}$. In addition, since $N^{ \uparrow }_{it} \in [(1-\xi) \cdot N^{ \eqref{eqn:IP_scheduling} }_{it}, N^{ \eqref{eqn:IP_scheduling} }_{it}]$, we also have $(1 - \xi) \cdot N^{ \eqref{eqn:IP_scheduling} }_{it} \leq N^{ \uparrow }_{it}$.

    \item {\em Constraint~$(S_4)$}: For every machine $i \in [m]$, active dimension $d \in {\cal A}_i$, and class index $t$ with ${\cal T}^{ (t) }_{id} = 0$, we observe that
    \[ \sum_{j \in {\cal J}_t} \ell_{ijd} \hat{y}_{(i,t),j} ~~\geq~~ (1 - \xi) \cdot \sum_{j \in {\cal J}_t} \ell_{ijd} x^{ \eqref{eqn:IP_scheduling} }_{ij} ~~=~~ (1 - \xi) \cdot S^{ \eqref{eqn:IP_scheduling} }_{idt} ~~\geq~~ (1 - \xi) \cdot S^{ \downarrow }_{idt} \ , \]
    where the last inequality holds since $S^{ \downarrow }_{idt} = \lfloor S^{ \eqref{eqn:IP_scheduling} }_{idt} \rfloor_{\eps/T} \leq S^{ \eqref{eqn:IP_scheduling} }_{idt}$.  
\end{itemize}
\halmos
\endproof

\proof{Proof of Claim~\ref{clm:bound_first}.} 
Clearly, the claim becomes trivial when $S^{ \downarrow }_{idt} = 0$, and we therefore assume for the remainder of this proof that $S^{ \downarrow }_{idt} > 0$. In this case, since $S^{ \downarrow }_{idt} = \lfloor S^{ \eqref{eqn:IP_scheduling} }_{idt} \rfloor_{\eps/T}$, it follows that $S^{ \downarrow }_{idt} \geq \eps/T$. Let us first observe that, by property~$(P_1)$,
\[ \ex{ \mathsf{(A)}_{idt} } ~~=~~ \ex{  \sum_{j \in {\cal J}_t} \ell_{ijd} Y_{(i,t),j} } ~~=~~ \underbrace{  \sum_{j \in {\cal J}_t} \ell_{ijd} y^*_{(i,t),j} }_{ \lambda } \ . 
\]
Consequently,
\begin{eqnarray}
\pr{ \mathsf{(A)}_{idt} < (1-\eps) \cdot S^{ \downarrow }_{idt} } & \leq & \pr{ \sum_{j \in {\cal J}_t} \frac{ \ell_{ijd} }{ \delta } \cdot Y_{(i,t),j} \leq \left( 1- \frac{ \eps }{ 2 } \right) \cdot \frac{ \lambda }{ \delta } }\label{eqn:fourth_constraint_1} \\
& \leq & \exp \left( - \frac{ \lambda \eps^2 }{ 8 \delta } \right) \label{eqn:fourth_constraint_2} \\
& \leq & \exp \left( - \frac{ \eps^3 }{ 16 \delta T } \right) \label{eqn:fourth_constraint_3} \\
& \leq & \frac{ 1 }{ 2mDT } \ .  \label{eqn:fourth_constraint_4}
\end{eqnarray}
Here, inequality~\eqref{eqn:fourth_constraint_1} follows by noting that
\[ \left( 1- \frac{ \eps }{2} \right) \cdot \lambda ~~=~~ \left( 1- \frac{ \eps }{2} \right) \cdot \sum_{j \in {\cal J}_t} \ell_{ijd} y^*_{(i,t),j} ~~\geq~~ \left( 1- \frac{ \eps }{2} \right) (1 - \xi) \cdot S^{ \downarrow }_{idt} ~~\geq ~~ (1-\eps) \cdot S^{ \downarrow }_{idt} \ , \]
where the last two inequalities respectively hold since $y^*$ satisfies constraint~$(S_4)$ and since we have previously chosen $\xi = \frac{ \eps }{ 2 }$. For inequality~\eqref{eqn:fourth_constraint_2}, we instantiate property~$(P_3)$ with $\eta = \frac{ \eps }{ 2 }$; here, it is important to mention that $\frac{ \ell_{ijd} }{ \delta } \in [0,1]$, since $\ell_{ijd} \in [0,\delta]$ for every $(i,d)$-small job $j$. Inequality~\eqref{eqn:fourth_constraint_3} is obtained by noticing that $\lambda = \sum_{j \in {\cal J}_t} \ell_{ijd} y^*_{(i,t),j} \geq (1 - \xi) \cdot S^{ \downarrow }_{idt}  \geq \frac{ \eps }{ 2T }$, due to considering the case where  $S^{ \downarrow }_{idt} \geq \eps/T$. Finally, to establish inequality~\eqref{eqn:fourth_constraint_4}, note that it holds if and only if $\delta \leq \frac{ \eps^3 }{ 16 T \ln(2mDT) }$. We show that our specific choice of $\delta$ forces this equivalent inequality to be satisfied, since starting from its right-hand-side,
\[ \frac{ \eps^3 }{ 16 T \ln(2mDT) } ~~\geq~~ \frac{ \eps^3 }{ 32mD T^2 } ~~\geq~~ \frac{ \eps^3 \delta^{2/3} }{ 32mD \psi^{2/3} } ~~=~~ \delta \ . \]
Here, the first inequality holds since $\ln x \leq x$ for all $x > 0$. The second inequality follows by recalling that $T \leq ( \frac{ \psi }{ \delta } )^{1/3}$, as explained in Section~\ref{subsec:round_scheduling}. The concluding equality is precisely the definition of $\delta = ( \frac{ \eps^3 }{ 32mD\psi^{2/3} } )^3$ in rearranged form.
\halmos
\endproof

\proof{Proof of Claim~\ref{clm:bound_second}.} 

Letting $q = {\cal T}^{ (t) }_{id} \geq 1$, we have 
\begin{eqnarray}
\mathsf{(B)}_{idt} & = & \sum_{j \in {\cal J}_t} \ell_{ijd} Y_{(i,t),j} \nonumber \\
& \geq & (1+\eps)^{q-1} \cdot \delta \cdot \sum_{j \in {\cal J}_t} Y_{(i,t),j} \label{eqn:bound2_1} \\
& \geq & (1+\eps)^{q-1} \cdot \delta \cdot N^{ \downarrow }_{it} \label{eqn:bound2_3} \\
& \geq & (1+\eps)^{q-1} \cdot \delta \cdot \min \left\{ \sum_{j \in {\cal J}_t} x^{ \eqref{eqn:IP_scheduling} }_{ij}, \frac{ 1 }{ \delta } \right\} \label{eqn:bound2_4} \\
& \geq & \frac{ 1 }{ 1 + \eps } \cdot \min \left\{ (1+\eps)^q \cdot \delta \cdot \sum_{j \in {\cal J}_t} x^{ \eqref{eqn:IP_scheduling} }_{ij}, 1 \right\} \nonumber \\
& \geq & ( 1 - \eps ) \cdot \min \left\{ \sum_{j \in {\cal J}_t} \ell_{ijd} x^{ \eqref{eqn:IP_scheduling} }_{ij}, 1 \right\} \ . \label{eqn:bound2_5}
\end{eqnarray}
Here, inequalities~\eqref{eqn:bound2_1} and~\eqref{eqn:bound2_5} hold since, for any job $j \in {\cal J}_t$, having ${\cal T}^{ (t) }_{id} = q \geq 1$ implies that $\ell_{ijd} \in {\cal I}_q = ((1+\eps)^{q-1} \cdot \delta, (1+\eps)^q \cdot \delta]$. Inequality~\eqref{eqn:bound2_3} follows from property~$(P_2)$, stating that $Y$ satisfies constraint~$(S_3)$ with probability 1, meaning in particular that $\sum_{j \in {\cal J}_t} Y_{(i,t),j} \geq N^{ \downarrow }_{it}$. To obtain inequality~\eqref{eqn:bound2_4}, we observe that $N^{ \downarrow }_{it} \geq \min \{ N^{ \eqref{eqn:IP_scheduling} }_{it}, \frac{ 1 }{ \delta } \} = \min \{ \sum_{j \in {\cal J}_t} x^{ \eqref{eqn:IP_scheduling} }_{ij}, \frac{ 1 }{ \delta } \}$,  according to the definitions of $N^{ \downarrow }_{it}$ and $N^{ \eqref{eqn:IP_scheduling} }_{it}$ in Section~\ref{subsec:guessing_santa}.
\halmos
\endproof

\section{\Prophets.} \label{sec:freeOrder}

In this section, we employ our approximation scheme for \Santa to derive an EPTAS for the \Prophets problem, as stated in Theorem~\ref{thm:Prophets}. We start with this application of our approach as it only involves a reduction to the \SingleSantano problem, which is simpler to describe and analyze.

\subsection{Problem description and outline.} \label{sec:prophProbDiscrip}

In the \Prophets problem, we are given $n$ independent random variables $X_1, \ldots, X_n$. Our goal is to find a permutation $\sigma \in S_n$ by which the outcomes $X_{\sigma(\var)}$ will be observed and a stopping rule $\tau$ so as to maximize the expected value of $X_{\sigma(\tau)}$. As mentioned in Section~\ref{sec:intro}, there exists an optimal adaptive policy for this problem which is in fact non-adaptive~\cite{Hill-Journal83}, and we therefore assume that the optimal probing permutation $\sigma^*$ is chosen non-adaptively, i.e., this permutation is determined a-priori, without any dependence on the observed outcomes.

For any fixed permutation $\sigma$, let $V(\sigma)$ denote the expected value obtained by an algorithm that utilizes the optimal stopping rule on $\sigma$; it is easy to verify that this quantity can be computed in $\poly(n)$ time via dynamic programming. Specifically, with respect to the optimal permutation $\sigma^*$, we recursively define $V^*_n = \expar{\max \{X_{\sigma^*(n)},0 \}}$, $V^*_{n-1} = \expar{\max \{ X_{\sigma^*(n-1)},V^*_n \} }$, and so on, up to $V^*_1 = \expar{ \max \{ X_{\sigma^*(1)},V^*_2 \} }$. Letting $\OPT= V(\sigma^*) = V^*_1$ be the optimal expected reward, our main result is the following.

\begin{theorem} \label{thm:prophet}
Suppose that, for every random variable $X_i$ and every $v \geq 0$, we can compute $\expar{\max \{ X_i,v \}}$ in $\poly(n)$ time. Then, for any $\eps>0$, there exists a $t(\eps)\cdot \poly(n)$-time algorithm that finds a permutation $\sigma$ with expected value $V({\sigma}) \geq (1-\eps)\cdot \OPT$.
\end{theorem}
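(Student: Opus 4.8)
The plan is to reduce \Prophets to the single-dimensional \Santa problem (i.e., $D=1$, and in fact a single machine per guessed ``level'', so $m$ also bounded by a function of $\eps$) by bucketing the random variables according to their role in the optimal stopping rule. Concretely, recall the backward thresholds $V^*_1 \geq V^*_2 \geq \dots \geq V^*_n$; the key fact is that, for a fixed probing order, the optimal stopping rule on step $i$ accepts $X_{\sigma(i)}$ iff $X_{\sigma(i)} \geq V^*_{i+1}$, so the whole contribution of the algorithm can be reorganized ``from the back'': the value equals $\sum_i \E[(X_{\sigma(i)} - V^*_{i+1})^+ \cdot \mathbf{1}(\text{not stopped before } i)]$, and in particular $\OPT$ depends on the order only through how the variables are partitioned into threshold intervals, not on the order within an interval. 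This is what lets us treat variables as unordered ``jobs.''

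First I would discretize the relevant range of thresholds. Since $\OPT = V^*_1$, and assuming (after a standard preprocessing that caps variables at $O(\OPT/\eps)$ and rescales so $\OPT \in [1, \poly(1/\eps))$) the thresholds lie in a bounded range, I partition $[0,\OPT]$ geometrically into $L = O(\tfrac1\eps \log \tfrac1\eps)$ levels $\{v_0 > v_1 > \dots > v_L\}$ with $v_{\ell}/v_{\ell-1} \approx 1-\eps$, plus a ``zero'' bucket for the tail where the running threshold is negligibly small. The hyper-parameters to guess are: for each level $\ell$, the number $n_\ell$ of variables that the optimal order places with a stopping threshold in level $\ell$; this is at most $n$ values each, but it suffices to guess each $n_\ell$ up to a $(1-\eps)$ factor, so only $O(\tfrac1\eps \log n)$ candidates per level and $(\tfrac1\eps\log n)^{O(L)}$ total — polynomial in $n$ for fixed $\eps$. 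Having guessed the $n_\ell$'s, the surviving-probability weight that a variable assigned to level $\ell$ contributes is (approximately) a fixed constant $p_\ell$ depending only on the $n_\ell$'s and the geometric ratio (the probability of not having stopped among the later levels, which is a product over later levels of ``reject'' probabilities that I also round/guess). The objective of matching the optimal solution then becomes: assign each variable $i$ to at most one level $\ell$, so that the total ``promised reward'' $\sum_\ell \sum_{i \to \ell} p_\ell \cdot \E[(X_i - v_\ell)^+]$ is at least $(1-O(\eps))\OPT$, while respecting a count constraint $n_\ell$ per level. This is exactly a \Santa instance with one machine per level, load in a single dimension ($D=1$) given by $\ell_{ \ell, i} = p_\ell \E[(X_i - v_\ell)^+]$ normalized against a target, and a job-count cap $k_\ell = n_\ell$; $\rho$-feasibility holds because within a level each job's contribution is at most a $\poly(1/\eps)$ fraction of the target. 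Applying Theorem~\ref{thm:result_santa_claus} yields an assignment meeting each level's reward target up to an $\eps$-violation, which — summed over the $O(\tfrac1\eps\log\tfrac1\eps)$ levels and absorbed into the geometric slack — still gives total value $(1-O(\eps))\OPT$; a final rescaling of $\eps$ gives the claimed $(1-\eps)$ bound, and the hypothesis that $\E[\max\{X_i,v\}]$ is computable in $\poly(n)$ time is exactly what makes the loads $\ell_{\ell,i}$ computable.

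The output of the \Santa algorithm is a set of variables assigned to each level together with the counts; from this I reconstruct a permutation by listing level $0$'s variables first, then level $1$'s, and so on down to the zero bucket (order within a level is irrelevant by the reorganization above), and then just run the $\poly(n)$ optimal-stopping DP on this permutation, which can only do better than the threshold rule we analyzed. The remaining bookkeeping is to (i) argue the optimal solution, restricted to the guessed levels and counts, loses at most a $(1-\eps)$ factor — this is where the geometric discretization of thresholds and the rounding of the $n_\ell$'s both get charged — and (ii) handle the tiny-threshold tail, where once $V^*_i \leq \eps \cdot \OPT / n$ (say) we can lump all remaining variables into one bucket with threshold $0$ and lose only $O(\eps)\OPT$.

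The main obstacle I anticipate is step (i): making precise the claim that the optimal value is (nearly) a function only of the multiset of threshold-levels assigned to variables, and that rounding the counts $n_\ell$ and the survival probabilities $p_\ell$ to a $(1\pm\eps)$ grid perturbs $\OPT$ by only a $(1\pm O(\eps))$ factor. The subtlety is that the survival probability $p_\ell$ depends multiplicatively on the reject-probabilities of \emph{all} later levels, so a naive error analysis compounds across $L$ levels; the fix is that these are probabilities in $[0,1]$ and the relevant quantity is a sum of the form $\sum_\ell (\text{contribution at }\ell)\cdot(\text{survival past }\ell)$, which telescopes, so the total error stays $O(\eps)\OPT$ rather than $O(L\eps)\OPT$ — but this telescoping argument, essentially Hill's identity $\OPT = \sum_i \E[(X_{\sigma(i)}-V^*_{i+1})^+]\prod_{j<i}\Pr[X_{\sigma(j)}<V^*_{j+1}]$, needs to be set up carefully and is the technical heart of the section.
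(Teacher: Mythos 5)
Your high-level plan — reduce \Prophets to a single-dimensional \Santa instance by binning the random variables according to the threshold they face, apply Theorem~\ref{thm:result_santa_claus}, and reconstruct a permutation — matches the paper's. However, two of your key ingredients are off, and one of them is a genuine mathematical error that drives the rest of the design in the wrong direction.

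The central error is the identity you call ``Hill's identity.'' You write that $V(\sigma)$ equals $\sum_i \E\big[(X_{\sigma(i)} - R_{i+1})^+\big]\prod_{j<i}\prpar{X_{\sigma(j)}<R_{j+1}}$. This is incorrect: taking $n=2$ with $X_1,X_2 \sim \mathrm{Unif}[0,1]$ gives $R_2 = 1/2$, $R_1 = 5/8$, but your formula evaluates to $1\cdot\tfrac18 + \tfrac12\cdot\tfrac12 = \tfrac38 \ne \tfrac58$. The correct identity, obtained directly from $R_i - R_{i+1} = \E[(X_{\sigma(i)} - R_{i+1})^+]$ and $V(\sigma) = R_1 - R_{n+1}$, is simply
\[
V(\sigma) ~=~ \sum_{i\in[n]} \E\big[(X_{\sigma(i)} - R_{i+1})^+\big] \ ,
\]
with \emph{no} survival-probability factor. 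Because the summands already telescope $R_1$, conditioning on reaching step $i$ would double-discount. This matters: your proposal's reliance on guessing the survival probabilities $p_\ell$ — and the entire circular dependency you flag as the ``technical heart'' (the assigned set determines the reject probabilities, which determine $p_\ell$, which determine the loads) — is an artifact of this wrong identity and is simply not needed. Moreover, if you did want to enforce that the assignment matches guessed reject probabilities, you would need an additional covering constraint per level, which makes the instance multi-dimensional and contradicts the $D=1$ claim.

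The second, related gap is the missing step that replaces survival-probability control. The paper does not guess or control these probabilities. Instead, it compares the running threshold $R_{t+1}$ of the constructed permutation $\sigma$ to the guessed base value $\baseG$ of the bucket containing $\sigma(t)$: while ``behind schedule'' ($R_{t+1} < \baseG$), monotonicity of $r \mapsto \E[(X-r)^+]$ gives $\E[(X_{\sigma(t)}-R_{t+1})^+] \ge \E[(X_{\sigma(t)}-\baseG)^+]$, i.e.\ the per-step gain is at least the Santa load contribution; at the first ``ahead of schedule'' time $t_{\min}$, the single term $R_{t_{\min}}$ already lower-bounds all the remaining value. Splitting $V(\sigma) = R_{t_{\min}} + \sum_{t<t_{\min}}(R_t - R_{t+1})$ and bounding each piece against the guessed $\baseV/\deltaGuess$ quantities is what makes the error analysis local rather than compounding. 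Your proposal has no analogue of this argument, and it is where the real work happens.

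Finally, a more minor structural difference: the paper does not use a geometric discretization of the thresholds, but rather defines buckets by the \emph{jumps} of the monotone sequence $V^*_1 \ge \cdots \ge V^*_n$ across multiples of $\eps\cdot\OPT$. This yields $O(1/\eps)$ buckets alternating between singleton ``jump'' buckets (cardinality cap $1$ in the Santa instance) and ``stable'' buckets whose total contribution is at most $\eps\cdot\OPT$ and hence can absorb additive losses. The two quantities guessed per bucket ($\baseG$ and $\deltaGuess$, each to additive precision $\eps^2\calE$) range over $O(1/\eps^2)$ values, so the total number of guesses is a function of $\eps$ alone. Your plan of geometric levels with $n_\ell$ guessed to a $(1-\eps)$ factor would introduce a $(\log n)^{O_\eps(1)}$ term in the guessing; this is still $t(\eps)\cdot\poly(n)$ so it does not kill the EPTAS, but the jump-based bucketing is what lets the paper keep the cardinality constraints trivial (cap $1$ or $\infty$, no per-level count guessing at all) and the guessing fully $n$-independent.
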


\paragraph{Outline.} Broadly speaking, our reduction begins in Sections~\ref{subsec:prophet_partition} and~\ref{subsec:prophet_guess}, where we partition the optimal permutation $\sigma^*$ into ``buckets'', distinguishing between those making ``small'' contributions to the optimal value and those potentially making ``large'' contributions via singleton variables. In Section~\ref{subsec:prophet_santa}, this bucketing scheme will allow us to rephrase the \Prophets problem via \SingleSantano terminology.  In particular, our formulation will introduce machines corresponding to buckets and jobs corresponding to random variables, with the objective of assigning variables to buckets such that every bucket receives essentially the same ``contribution'' as in the optimal permutation $\sigma^*$. 
In Section~\ref{subsec:prophet_analysis}, we will show that this approach  gaurantees a $(1-\eps)$-approximation.

\subsection{Step 1: Partitioning the optimal permutation into buckets.} \label{subsec:prophet_partition}

For simplicity of presentation, we assume without loss of generality that the inverse accuracy level $1/\eps$ is an integer, and moreover, that we have an estimate ${\cal E} \in [(1 - \eps) \cdot \OPT, \OPT]$ for the optimal expected reward. While the former assumption is trivial, the latter can be justified by considering an arbitrary permutation $\sigma$ and computing $V(\sigma)$ as explained in Section~\ref{sec:prophProbDiscrip}. The classical prophet inequality~\cite{Krengel-Journal77,Krengel-Journal78,Samuel-Annals84} shows that $V(\sigma) \geq \frac12 \expar{\max_{\var \in [n]} X_\var} \geq \frac12 \OPT$. Hence, we can simply employ the resulting algorithm for all powers of $1+\eps$ within the interval $[V(\sigma), 2\cdot V(\sigma)]$; at least one of these values corresponds to the required estimate ${\cal E}$.

As illustrated in Figure~\ref{fig:bucketing}, we say that time $t \in [n]$ is a ``jump'' if, when moving from $V^*_{t-1}$ to $V^*_t$, we cross an integer multiple of $\eps V^*_1$, meaning that the interval $[V^*_t, V^*_{t-1}]$ contains at least one such multiple. Let $\jump$ denote the number of jumps, say at times $T_1 < T_2 < \cdots < T_\jump$. Since $\OPT = V_1^* \geq \cdots \geq V_n^* \geq 0$, we clearly have $\jump \leq 1/\eps$. With respect to these jumps, for purposes of analysis, we partition the random variables $X_1, \ldots, X_n$ into $2\jump+1$ buckets (some of which could potentially be empty), alternating between ``stable'' and ``jump'' types as follows: 
\begin{itemize}
\item Stable bucket $B_1$: $X_{\sigma^*(n)},  X_{\sigma^*(n-1)}, \ldots, X_{\sigma^*(T_\jump+1)}$, i.e., all variables appearing in the permutation $\sigma^*$ after the last jump, $T_\jump$. 
    
\item Jump bucket $B_2$, consisting only of the variable $X_{\sigma^*(T_\jump)}$.
    
\item Stable bucket $B_3$: $X_{\sigma^*(T_{\jump}-1)}, \ldots,  X_{\sigma^*(T_{\jump-1}+1)}$, namely, all variables appearing strictly between the jumps $T_{\jump-1}$ and $T_\jump$. 

\item Jump bucket $B_4$, consisting only of the variable $X_{\sigma^*(T_{\jump-1})}$.

\item So on and so forth.
\end{itemize}
For each such bucket $B_\block$, we define $\baseV( B_\block )$ as the minimal value $V^*_t$ that corresponds to the variable $X_t$ appearing in the optimal permutation $\sigma^*$ right after this bucket (time-wise), meaning that $\baseV( B_\block ) = V^*_{ \max \{ t : X_{\sigma^*(t)} \in B_\block \} + 1 }$, where $V^*_{n+1} = 0$ by convention. As a result, $\baseV(B_1) = 0$, $\baseV(B_2) = V^*_{T_\jump+1}$, $\baseV(B_3) =  V^*_{T_\jump}$, and so on. 

\begin{figure}[ht]
\centering
\includegraphics[scale=0.6]{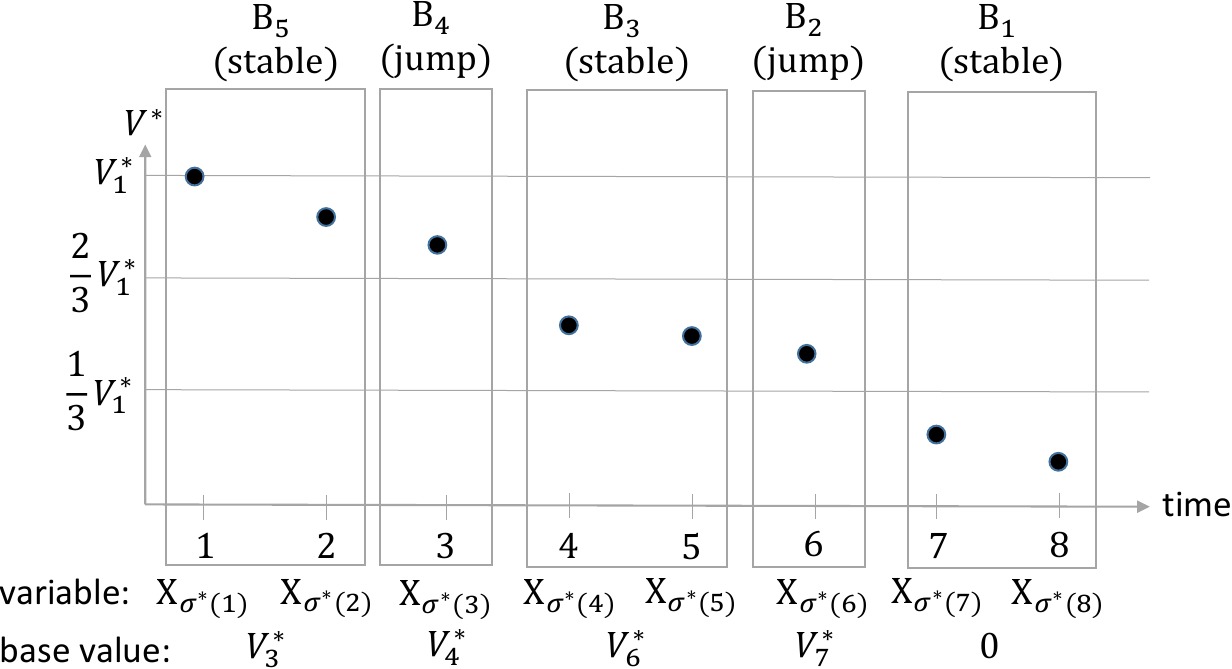} 
\caption{An illustration of our bucketing scheme.} \label{fig:bucketing}
\end{figure}

\subsection{Step 2: Guessing \baseV.} \label{subsec:prophet_guess}

We next explain why one can assume that the \baseV parameters are approximately known. To this end, we guess the \baseV of each bucket from below within an additive factor of $\eps^2 \calE$. That is, using $\baseG(B_\block)$ to denote our guess for bucket $B_\block$, we have
\begin{align} \label{eq:baseGuess}
\baseV(B_\block) - \eps^2 \calE ~~\leq ~~ \baseG(B_\block) ~~\leq~~ \baseV(B_\block) \ .
\end{align}
Since each bucket's \baseG can be enumerated over $\{ 0, \eps^2 \calE, 2\eps^2 \calE, \ldots, (1 - \eps^2) \cdot \calE \}$, there are $1/\eps^2$ options to consider, meaning that the total number of guesses is at most $(1/\eps^2)^{2\jump+1} \leq (1/\eps^2)^{(2/\eps)+1}$, which is a function of only $\eps$. Similarly, we guess the difference between the \baseV of successive buckets, $B_\block$ and $B_{\block+1}$, again within an additive factor of $\eps^2 \calE$. That is, letting $\deltaGuess(B_\block)$ designate our guess for this quantity, 
\begin{align} \label{eq:deltaGuess}
\baseV(B_{\block+1}) - \baseV(B_{\block}) - \eps^2 \calE ~~\leq~~ \deltaGuess(B_\block) 
~~\leq~~ \baseV(B_{\block+1}) - \baseV(B_\block) \ .
\end{align}
As before, each bucket's \deltaGuess can be enumerated over $\{ 0, \eps^2 \calE, 2\eps^2 \calE, \ldots, (1 - \eps^2) \cdot \calE \}$.

\subsection{Step 3: The  \SingleSantano   instance.} \label{subsec:prophet_santa}

Given $\baseG(\cdot)$ and $\deltaGuess (\cdot)$ for all buckets, we proceed by viewing \Prophets as an allocation problem, where random variables are assigned to buckets such that each bucket receives a total load of at least $\deltaGuess(\cdot)$, up to a small error. Specifically, we define the following \SingleSantano instance: 
\begin{itemize}
    \item {\em Jobs and machines}: There are $n$ jobs, corresponding to the random variables $X_1, \ldots, X_n$. In addition, we have $2\jump+1$ machines, corresponding to the buckets $B_1, \ldots, B_{2\jump+1}$.
    
    \item  {\em Assignment loads}: When job $\var \in [n]$ is assigned to machine $\block \in [2\jump+1]$, we incur a marginal load contribution of  $\ell_{\block\var} = \expar{[ X_\var - \baseG(B_\block) ]^+}$ .
    
    \item {\em Cardinality constraints}: Jump machines can be assigned at most one job, whereas stable machines can be assigned any number of jobs.
    
    \item {\em Load constraints}: Each machine $\block \in [2\jump+1]$ has a lower bound of $\deltaGuess(B_\block)$ on its total load.
\end{itemize}
Rephrasing this instance in integer programming terms, as a specialization of our generic \Santa formulation~\eqref{eqn:IP_scheduling}, we obtain:
\begin{equation} \label{eqn:IP_Santa_Prophets} \tag{IP$_{\ProphetsIP}$}
\begin{array}{ll}
(1) \quad {\displaystyle \xi_{\block\var} \in \{ 0,1 \}} \qquad \qquad & \forall \, \block \in [2\jump+1], \, \var \in [n] \\
(2) \quad {\displaystyle \sum_{\block \in [2\jump+1]} \xi_{\block\var} \leq 1} \qquad \qquad & \forall \, \var \in [n] \\
(3) \quad {\displaystyle \sum_{\var \in [n]} \xi_{\block\var} \leq 1} \qquad \qquad & \forall \, \block \in [2\jump+1] : \text{$B_\block$ jump bucket} \\
(4) \quad {\displaystyle \sum_{\var \in [n]} \ex{[ X_\var - \baseG(B_\block) ]^+} \cdot \xi_{\block\var}} \qquad \mbox{} & \forall \, \block \in [2\jump+1]  \\
{\displaystyle \qquad \qquad \qquad \qquad \qquad \geq \deltaGuess(B_\block)} 
\end{array}
\end{equation}
We observe that our bucketing partition, formally defined in Section~\ref{subsec:prophet_partition}, ensures that this problem is feasible, as shown in the next lemma.  

\begin{lemma} \label{lem:ProphetsRhoFeasibility}
\eqref{eqn:IP_Santa_Prophets} admits a feasible solution.
\end{lemma}
\proof{Proof.} We argue that our bucketing partition of $X_1, \ldots, X_n$ according to the optimal permutation $\sigma^*$, whose specifics were discussed in Section~\ref{subsec:prophet_partition}, can be exploited to define a feasible solution $\xi^*$ to~\eqref{eqn:IP_Santa_Prophets}. 
For this purpose, we distinguish between two cases, depending on bucket type:
\begin{itemize}
    \item {\em $B_\block$ is a jump bucket}: In this case, $B_\block$ consists of a single variable, say $X_{\sigma^*(t)}$, and we simply set $\xi^*_{\block \sigma^*(t)} = 1$ and $\xi^*_{\block\var} = 0$ for every other variable.
    
    \item {\em $B_\block$ is a stable bucket}: Here, $B_\block$ consists of successive variables, say $X_{\sigma^*(t)}, \ldots, X_{\sigma^*(t+k)}$. We set $\xi^*_{\block \sigma^*(t)} = \cdots = \xi^*_{\block \sigma^*(t + {k})} = 1$ and $\xi^*_{\block\var} = 0$ for every other variable.
\end{itemize}
We first notice that constraints~(1)-(3) of~\eqref{eqn:IP_Santa_Prophets} are  satisfied since $\xi^*$ is a binary vector, assigning each random variable to at most one machine, and exactly one variable to each jump machine. To show that constraint~(4) is satisfied as well, it suffices to argue that for every bucket $B_\block$ with $\deltaGuess(B_\block) > 0$,
\[ \sum_{\var \in [n]} \ex{[ X_\var - \baseG(B_\block) ]^+} \cdot \xi^*_{\block\var} ~~\geq~~ \deltaGuess(B_\block) \ . \]
 We diverge by bucket type:
\begin{itemize}
    \item {\em $B_\block$ is a jump bucket}: In this case, letting $X_{\sigma^*(t)}$ be the single variable residing in this bucket, the required lower bound is derived by noting that
    \begin{eqnarray*}
    {\textstyle \sum_{\var \in [n]}} \ex{[ X_\var - \baseG(B_\block) ]^+} \cdot \xi^*_{\block\var} & = & \ex{[ X_{\sigma^*(t)} - \baseG(B_\block) ]^+} \\
    & \geq & \ex{[ X_{\sigma^*(t)} - \baseV(B_\block) ]^+} \\
    & = & \baseV(B_{\block+1}) - \baseV(B_{\block}) \\
    & \geq & \deltaGuess(B_\block) \ ,
    \end{eqnarray*}
    where the inequalities above follow from~\eqref{eq:baseGuess} and~\eqref{eq:deltaGuess}, respectively.
    
    \item {\em $B_\block$ is a stable bucket}: Similarly, letting $X_{\sigma^*(t)}, \ldots, X_{\sigma^*(t+k)}$ be the sequence of random variables residing in $B_i$, we have
    \begin{eqnarray*}
    \textstyle \sum_{\var \in [n]} \ex{[ X_\var - \baseG(B_\block) ]^+} \cdot \xi^*_{\block\var} & = &  \textstyle \sum_{\kappa = 0}^{k} \ex{[ X_{\sigma^*(t+\kappa)} - \baseG(B_\block) ]^+} \\
    & \geq & \textstyle \sum_{\kappa = 0}^{k} \ex{[ X_{\sigma^*(t+\kappa)} - V^*_{t+\kappa+1} ]^+} \\
    & = & \baseV(B_{\block+1}) - \baseV(B_{\block}) \\
    & \geq & \deltaGuess(B_\block) \ ,
    \end{eqnarray*}
    where the first inequality holds since $V_t^* \geq \cdots \geq V_{t+k+1}^* = \baseV(B_\block) \geq \baseG(B_\block)$.   \halmos
\end{itemize}
\endproof

Consequently, by recalling that the number of machines involved is $2\jump+1 = O(1/\eps)$, Theorem~\ref{thm:result_santa_claus} immediately leads to the following result.

\begin{corollary}\label{cor:schedProphets}
There is an EPTAS for computing a cardinality-feasible variable-to-machine assignment ${x}\in \{0,1\}^{(2\jump+1) \times n}$ such that every machine $\block \in [2\jump+1]$ receives a total load of $\sum_{ \var \in [n] } \ell_{\block\var} x_{\block\var} \geq (1 - \eps ) \cdot \deltaGuess(B_\block)$.
\end{corollary}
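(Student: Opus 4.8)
The plan is to invoke Theorem~\ref{thm:result_santa_claus} on the single-dimensional Santa Claus instance built in \S\ref{subsec:prophet_santa}. First I would record that this instance meets the theorem's hypotheses: its dimension is $D=1$, it has $m=2J+1\le 2/\eps+1$ machines (a constant, since $J\le 1/\eps$), and -- the only substantive point -- it admits a $\rho$-feasible solution with $\rho=1/\eps^2$. After rescaling each machine $j$'s load lower bound $\deltaGuess(B_j)$ to $1$ (dividing the loads $\ell_{ji}$ by $\deltaGuess(B_j)$; a machine with $\deltaGuess(B_j)=0$ simply has no active dimension, hence no load constraint), Theorem~\ref{thm:result_santa_claus} returns, with probability at least $1/2$ and in time $f(\eps,1/\eps^2,2/\eps+1,1)\cdot\poly(n)=t(\eps)\cdot\poly(n)$, a binary $\mathbf x$ obeying~(I)--(III) -- which here are exactly ``each variable to at most one bucket'' and the per-bucket cardinality bounds -- and $\eps$-violating~(IV), i.e.\ $\sum_{i\in[n]}\ell_{ji}x_{ji}\ge(1-\eps)\cdot\deltaGuess(B_j)$ for every machine carrying a load constraint; for a machine with $\deltaGuess(B_j)=0$ the inequality is vacuous. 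This is precisely the assertion of the corollary.

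The work is therefore in exhibiting the $1/\eps^2$-feasible witness. I would take the assignment induced by the optimal permutation $\sigma^*$ -- route $X_{\sigma^*(t)}$ to the bucket $B_j$ containing it in the partition of \S\ref{subsec:prophet_partition} -- and, on each stable bucket with $\deltaGuess(B_j)>0$, keep only a shortest prefix of its variables (in $\sigma^*$-order) whose cumulative load first reaches $\deltaGuess(B_j)$; jump buckets hold their single variable, and buckets with $\deltaGuess(B_j)=0$ hold nothing. Two bounds must be checked. (i) \emph{Load lower bound.} Telescoping the identity $V^*_t=V^*_{t+1}+\E\big[[X_{\sigma^*(t)}-V^*_{t+1}]^+\big]$ over the time-interval of $B_j$ gives $\sum_{t\in B_j}\E\big[[X_{\sigma^*(t)}-V^*_{t+1}]^+\big]=\baseV(B_{j+1})-\baseV(B_j)$; since $\baseG(B_j)\le\baseV(B_j)\le V^*_{t+1}$ for every time $t$ in $B_j$ (by~\eqref{eq:baseGuess} and monotonicity of $V^*$), the whole bucket's load $\sum_{i\in B_j}\ell_{ji}$ is at least $\baseV(B_{j+1})-\baseV(B_j)$, which is $\ge\deltaGuess(B_j)$ by the upper bound in~\eqref{eq:deltaGuess}; so the prefix exists and meets the threshold. (ii) \emph{$\rho$-bound.} A minimal prefix overshoots $\deltaGuess(B_j)$ by at most one variable's load, so it suffices that a single variable of a stable bucket contributes $\ell_{ji}<3\eps\OPT$ and that a positive $\deltaGuess(B_j)$ is $\ge\eps^2\calE\ge\eps^2(1-\eps)\OPT$ (which holds because the guesses are multiples of $\eps^2\calE$): the former because $\ell_{ji}\le\E\big[[X_{\sigma^*(t)}-V^*_{t+1}]^+\big]+\big(V^*_{t+1}-\baseV(B_j)\big)+\eps^2\calE$ and all $V^*$-values arising inside a stable bucket lie in a common length-$\eps\OPT$ interval; combining, the prefix load is at most $\deltaGuess(B_j)+3\eps\OPT\le\tfrac{1}{\eps^2}\deltaGuess(B_j)$ for $\eps$ below an absolute constant. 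A jump bucket's single variable has $\ell_{ji}\le\big(\baseV(B_{j+1})-\baseV(B_j)\big)+\eps^2\calE\le 3\deltaGuess(B_j)$, again by~\eqref{eq:baseGuess}--\eqref{eq:deltaGuess}, so the bound holds there too. Hence the $\sigma^*$-induced (trimmed) assignment is $1/\eps^2$-feasible.

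To finish I would amplify the success probability: run the algorithm of Theorem~\ref{thm:result_santa_claus} $O(\log(1/\eps))$ times and keep any output that is cardinality-feasible and satisfies $\sum_i\ell_{ji}x_{ji}\ge(1-\eps)\deltaGuess(B_j)$ for all $j$ -- both conditions are verifiable in $\poly(n)$ time -- so the failure probability drops below any prescribed constant without affecting the $t(\eps)\cdot\poly(n)$ bound.

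The one genuinely delicate step is the $1/\eps^2$-feasibility claim (ii): one cannot take the full $\sigma^*$-induced partition as the witness, because a stable bucket may collect arbitrarily many low-variance variables whose loads sum to far more than its $\deltaGuess$. The fix is to trim to a minimal prefix, which costs at most one variable's load; the stable/jump dichotomy caps that at $O(\eps\OPT)$ while a meaningful $\deltaGuess$ is $\Omega(\eps^2\OPT)$, and the factor $1/\eps^2$ in the definition of $\rho$-feasibility is exactly the slack that absorbs the difference. Everything else -- the telescoping identity, the rescaling to $0/1$ thresholds, and the probability amplification -- is routine.
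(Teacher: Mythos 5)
Your proof is correct and follows the paper's overall route — invoke Theorem~\ref{thm:result_santa_claus} on the Santa Claus instance of \S\ref{subsec:prophet_santa} after rescaling each machine's lower bound to~$1$ — but you have also identified and repaired a genuine flaw in the paper's one-line justification. The paper asserts that ``our bucketing partition of $X_1,\ldots,X_n$ according to the optimal permutation $\sigma^*$\dots\ is a $1/\eps^2$-feasible solution,'' i.e.\ that the \emph{full} $\sigma^*$-induced assignment already obeys the $\rho$-feasibility upper bound $\sum_i\ell_{ji}\le\rho\cdot\deltaGuess(B_j)$. That claim is false in general. For instance, take $n$ variables that are all (nearly) deterministically equal to $1$: there are no jumps, so every variable lands in the single stable bucket $B_1$ with $\baseG(B_1)=0$; each contributes load $\ell_{1i}=\E[X_i]\approx\OPT$, so the full bucket load is about $n\cdot\OPT$ while $\deltaGuess(B_1)\approx\OPT$, and $\rho$-feasibility fails for any $\rho=\rho(\eps)$ once $n$ is large. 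The upper bound in~\eqref{eq:deltaGuess} that the paper cites only gives the load \emph{lower} bound for the witness; it says nothing about the required upper bound. Your fix — trimming each stable bucket to a shortest prefix whose cumulative load first reaches $\deltaGuess(B_j)$ — is exactly what is needed: a minimal prefix overshoots by at most one variable's load, a stable-bucket variable carries load at most $3\eps\OPT$ (via your telescoping bound on $V^*_t-V^*_{t+1}$ and the stable-bucket width), and any positive $\deltaGuess(B_j)$ is $\geq\eps^2\calE$, which together yield $\rho=1/\eps^2$ as claimed. Your observation that $\rho$-feasibility is not cosmetic is also right: it is precisely what caps the guessing stage of \S\ref{subsec:scheduling_guess} at $f(\eps)\cdot\poly(|\mathcal{I}|)$ rather than $n^{g(\eps)}$, so without the trimming the corollary would only give a PTAS. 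The rescaling, the handling of machines with $\deltaGuess(B_j)=0$ by dropping the inactive dimension, and the probability amplification by independent repetition are all routine and consistent with the framework.
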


\subsection{Final permutation and its value guarantee.} \label{subsec:prophet_analysis}

Given the assignment vector ${x}$ mentioned in  Corollary~\ref{cor:schedProphets}, we construct a probing permutation as follows. The last variables to be inspected are those assigned to the first machine (i.e., the one corresponding to $B_1$); their internal order is arbitrary. Just before them, we will inspect those assigned to the second machine (corresponding to $B_2)$, again in an arbitrary order, so on and so forth. Letting $\sigma$ be the resulting permutation, we show in the remainder of this section that $V(\sigma) \geq (1-7\eps) \cdot \OPT$, thereby establishing Theorem~\ref{thm:prophet}.

We begin by defining the sequence of values $R_1, \ldots, R_n$, attained for suffixes of the permutation $\sigma$. Letting $R_{n+1}=0$, we recursively set $R_t = \expar{\max \{ X_{\sigma(t)},R_{t+1} \} }$ for $1 \leq t \leq n$, noting that $R_1 = V(\sigma)$.  We say that our permutation $\sigma$ is \emph{behind schedule} at time $t \in [n]$ if $R_{t+1} < \baseG(B^{-1}_{ X_{\sigma(t) } })$, where $B^{-1}_{ X_{\sigma(t) } }$ designates the bucket to which $X_{\sigma(t)}$ is assigned; otherwise, $\sigma$ is \emph{ahead of schedule} at that time. Let $t_{\min}$ be the minimal time when we are ahead of schedule, i.e., satisfy $R_{t_{\min}+1} \geq \baseG(B^{-1}_{ X_{\sigma(t_{\min}) } })$. This index is well-defined, since we are always ahead of schedule at time $n$, as $R_{n+1} = 0 = \baseV(B_1) = \baseG(B_1)$, where the last equality follows from~\eqref{eq:baseGuess} and the non-negativity of $\baseG$. 

Now, noting that due to the permutation $\sigma$, we collect an expected reward of 
\begin{align}
\textstyle V( \sigma )~~ = ~~ R_1~~=~~R_{t_{\min}} + \sum_{t < t_{\min}} (R_t - R_{t+1}) \ , \label{eq:ProphBreakValue}
\end{align}
Claims~\ref{claim:free-Order-tMin} and~\ref{claim:free-Order-beforetMin} below lower-bound the last two terms.

\begin{claim} \label{claim:free-Order-tMin}
$R_{t_{\min}} \geq \baseV ( B^{-1}_{ X_{\sigma(t_{\min}) } } ) + (1 - \eps) \cdot \deltaGuess ( B^{-1}_{ X_{\sigma(t_{\min}) } } ) - 2\eps \cdot \OPT$. 
\end{claim}
\proof{Proof.}
First, we know that 
\begin{eqnarray*}
R_{t_{\min}} & = & 
\ex{ \max \{ X_{\sigma(t_{\min})}, R_{t_{\min}+1} \} } \\ 
&\geq & \ex{ \max \left\{ X_{\sigma(t_{\min})}, \baseG \left( B^{-1}_{ X_{\sigma(t_{\min}) } } \right) \right\} } \\
& = & \baseG \left( B^{-1}_{ X_{\sigma(t_{\min}) } } \right) + \ex{ \left[ X_{\sigma( t_{\min} )} - \baseG \left( B^{-1}_{ X_{\sigma(t_{\min}) } } \right) \right]^+ } \ ,
\end{eqnarray*}
where the sole inequality above holds since, by definition, we are ahead of schedule at time $t_{\min}$. The crucial observation is that the term $\expar{ [ X_{\sigma( t_{\min} )} - \baseG ( B^{-1}_{ X_{\sigma(t_{\min}) } }) ]^+ }$ represents the load contribution due to assigning job $\sigma(t_{\min})$ to the machine corresponding to bucket $B^{-1}_{ X_{\sigma(t_{\min}) } }$. When the latter is a jump bucket, our cardinality constraints ensure that job $\sigma(t_{\min})$ is the only one assigned, implying that $\expar{ [ X_{\sigma( t_{\min} )} - \baseG ( B^{-1}_{ X_{\sigma(t_{\min}) } }) ]^+ } \geq (1 - \eps) \cdot \deltaGuess(B^{-1}_{ X_{\sigma(t_{\min}) } })$ by Corollary~\ref{cor:schedProphets}. In the complementary scenario of a stable bucket, we actually have $\deltaGuess(B^{-1}_{ X_{\sigma(t_{\min}) } }) \leq \eps \cdot \OPT$, since this bucket does not cross over an integer multiple of $\eps \cdot \OPT$, by definition (see Section~\ref{subsec:prophet_partition}). In either case, we have just shown that
\begin{eqnarray} \label{eq:lowerBndRt}
 R_{t_{\min}} &\geq & \baseG \left( B^{-1}_{ X_{\sigma(t_{\min}) } } \right) + (1 - \eps) \cdot \deltaGuess \left( B^{-1}_{ X_{\sigma(t_{\min}) } } \right) - \eps \cdot \OPT \notag \\
& \geq&  \baseV \left( B^{-1}_{ X_{\sigma(t_{\min}) } } \right) + (1 - \eps) \cdot \deltaGuess \left( B^{-1}_{ X_{\sigma(t_{\min}) } } \right) - 2\eps \cdot \OPT  \ , 
\end{eqnarray}
where the second inequality holds since $\baseV(B_\block) - \eps^2 \calE \leq \baseG(B_\block)$, by inequality~\eqref{eq:baseGuess}, and since ${\cal E}\leq \OPT$.
\halmos
\endproof

\begin{claim} \label{claim:free-Order-beforetMin}
$\sum_{t < t_{\min}} (R_t - R_{t+1}) \geq (1 - \eps) \cdot \sum_{\block > B^{-1}_{ X_{\sigma(t_{\min}) } }} ( \baseV(B_{\block+1}) - \baseV(B_\block) ) - 3 \eps \cdot \OPT$. 
\end{claim}
\proof{Proof.}
We first rewrite the LHS of the desired inequality,
\begin{eqnarray*}
\sum_{t < t_{\min}} (R_t - R_{t+1}) ~~=~~ \sum_{t < t_{\min}} \left( \ex{ \max \{ X_{\sigma(t)},R_{t+1} \} } - R_{t+1} \right) 
~~=~~ \sum_{t < t_{\min}} \ex{ [X_{\sigma(t)} - R_{t+1}]^+ } \ .
\end{eqnarray*}
Recalling that $t_{\min}$ is the minimal time when we are ahead of schedule, it follows that we are behind schedule at any time $t < t_{\min}$, implying that $\expar{ [X_{\sigma(t)} - R_{t+1}]^+ } \geq \expar{ [X_{\sigma(t)} - \baseG(B^{-1}_{ X_{\sigma(t) } }) ]^+ }$. Summing this bound over all variables assigned to bucket $\block \in [2\jump+1]$, we have
\begin{eqnarray*}
\sum_{t : B^{-1}_{ X_{\sigma(t) } } = \block }\expar{ [X_{\sigma(t)} - R_{t+1}]^+ }  & \geq & \sum_{t : B^{-1}_{ X_{\sigma(t) } } = \block } \expar{ [X_{\sigma(t)} - \baseG(B_\block)]^+ } \\
& = & \sum_{ \var \in [n] } \ell_{\block\var} x_{\block\var} \\
& \geq & (1 - \eps ) \cdot \deltaGuess(B_\block),
\end{eqnarray*}
where the last inequality follows from Corollary~\ref{cor:schedProphets}. Thus, we get
\begin{eqnarray}
 \sum_{t < t_{\min}} (R_t - R_{t+1}) 
 & = & \sum_{t < t_{\min}} \ex{ [X_{\sigma(t)} - R_{t+1}]^+ } \nonumber \\
  & \geq & \sum_{\block > B^{-1}_{ X_{\sigma(t_{\min}) } }} \sum_{t : B^{-1}_{ X_{\sigma(t) } } = \block } \ex{ [X_{\sigma(t)} - R_{t+1}]^+ } \nonumber \\
  & \geq & (1 - \eps) \cdot \sum_{\block > B^{-1}_{ X_{\sigma(t_{\min}) } }} \deltaGuess(B_\block) \nonumber \\
   &\geq& (1 - \eps) \cdot \sum_{\block > B^{-1}_{ X_{\sigma(t_{\min}) } }} \Big( \baseV(B_{\block+1}) - \baseV(B_\block)  \Big) -  \eps^2 \calE \cdot (2\jump+1) \label{eqn:claim_:free-Order-beforetMin_1} \\
  &\geq&  (1 - \eps) \cdot \sum_{\block > B^{-1}_{ X_{\sigma(t_{\min}) } }} \Big( \baseV(B_{\block+1}) - \baseV(B_\block)  \Big) - 3 \eps \cdot \OPT. \label{eqn:claim_:free-Order-beforetMin_2}
 \end{eqnarray}
 Here, inequality~\eqref{eqn:claim_:free-Order-beforetMin_1} holds since $\deltaGuess(B_\block) \geq \baseV(B_{\block+1}) - \baseV(B_{\block}) - \eps^2 \calE$ by~\eqref{eq:deltaGuess}. Inequality~\eqref{eqn:claim_:free-Order-beforetMin_2} follows by recalling that $\jump \leq 1/\eps$ and that ${\cal E} \leq \OPT$.
  \halmos
\endproof

Combining the last two claims with representation~\eqref{eq:ProphBreakValue}, we conclude that our permutation $\sigma$ is guaranteed to attain an expected reward of 
 \begin{eqnarray*}
 V( \sigma )  & \geq &  \Big( \baseV \left( B^{-1}_{ X_{\sigma(t_{\min}) } } \right) + (1 - \eps) \cdot \deltaGuess \left( B^{-1}_{ X_{\sigma(t_{\min}) } } \right) -  2 \eps \cdot \OPT  \Big) \\
 && \mbox{} + \textstyle (1 - \eps) \cdot \sum_{j > B^{-1}_{ X_{\sigma(t_{\min}) } }} \Big( \baseV(B_{\block+1}) - \baseV(B_\block)  \Big) - 3 \eps \cdot \OPT \\
& \geq & (1 - 7\eps) \cdot \OPT \ .
\end{eqnarray*} 

\section{Non-Adaptive \ProbeMax.} \label{sec:NAProbeMax}

In what follows, we utilize our approximation scheme for \Santa to design an EPTAS for the non-adaptive \ProbeMax problem, as formaly stated in Theorem~\ref{thm:NAProbeMax}. Interestingly, for this application, our reduction  will create a single-machine Santa Claus instance, albeit resorting to multi-dimensional load vectors. 

\subsection{Preliminaries.} \label{sec:NAPrelims}

We remind the reader that an instance of the non-adaptive \ProbeMax problem consists of $n$ independent non-negative random variables $X_1, \ldots, X_n$. For any subset $S \subseteq [n]$, we make use of ${\cal M}(S) = \max_{j \in S} X_j$ to designate the maximum value over the sub-collection of variables $\{ X_j \}_{j \in S}$. Given an additional parameter $k$, our objective is to identify a subset $S \subseteq [n]$ of cardinality at most $k$ that maximizes $\expar{ {\cal M}(S) }$.

By referring to the technical discussion of Chen et al.~\cite[App.~C]{CHLLLL-NIPS16} in this context, the following assumptions can be made without loss of generality:
\begin{enumerate}
\item The inverse accuracy level $1 / \eps $ is an integer.

\item Letting $S^*$ be an optimal subset, an estimate ${\cal E} \in [(1 - \frac{ 1 }{ e }) \cdot \expar{ {\cal M}(S^*) }, \expar{ {\cal M}(S^*) }]$ for the optimal expected maximum is known in advance.

\item The variables $X_1, \ldots, X_n$ are defined over the same support, ${\cal V} = \big\{ 0, \eps {\cal E}, 2\eps {\cal E}, \ldots, \frac{\cal E}{ \eps } \big\}$.
\end{enumerate}
In addition, we assume without loss of generality that $\prpar{ X_j = 0 } \geq \eps$ for every $j \in [n]$. This property will be useful later on when working with logarithms of probabilities, and can be enforced by defining $\tilde{X}_j = I_j X_j$ where $I_j$ is an indicator variable taking a zero value with probability $\eps$. Letting $\tilde{\cal M}(S) = \max_{j \in S} \tilde{X}_j$, it is easy to verify that $\expar{ \tilde{\cal M}(S) } \geq (1-\eps) \cdot \expar{ {\cal M}(S) }$ for any subset $S \subseteq [n]$, and that $\tilde{\cal M}(S)$ is stochastically smaller than ${\cal M}(S)$, meaning that the expected maximum can only increase when we restore the original variables.

\subsection{Step 1: Guessing the CDF of the optimal maximum value.} \label{subsec:nonadapt_guess}

We begin by approximately guessing the cumulative distribution function of the unknown random variable ${\cal M}(S^*)$. Specifically, noting that $\prpar{ {\cal M}(S^*) \leq v }$ is monotone non-decreasing in $v$, let $v_{\mathrm{light}} \in {\cal V}$ be the maximal value $v$ for which $\prpar{ {\cal M}(S^*) \leq v } < 2\eps^2$; when no such value exists, $v_{\mathrm{light}} = -\infty$. Similarly, let $v_{\mathrm{heavy}} \in {\cal V}$ be the maximal value for which $\prpar{ {\cal M}(S^*) \leq v } < 1 - \eps^2$, where $v_{\mathrm{heavy}} = \infty$ when no such value exists. Subsequently to guessing both $v_{\mathrm{light}}$ and $v_{\mathrm{heavy}}$, we define the interval ${\cal V}_{\mathrm{critical}} = {\cal V} \cap [v_{\mathrm{light}}, v_{\mathrm{heavy}}]$ and proceed as follows: 
\begin{itemize}
\item For every value $v \in {\cal V}_{\mathrm{critical}} \setminus \{ v_{\mathrm{light}} \}$, we guess an additive over-estimate $\tilde{p}_{\leq v}$ for the probability $\prpar{ {\cal M}(S^*) \leq v }$ that satisfies
\begin{equation} \label{eqn:ineq_estimates}
\pr{ {\cal M}(S^*) \leq v } \quad \leq \quad \tilde{p}_{\leq v} \quad < \quad \pr{ {\cal M}(S^*) \leq v } + \eps^2 \ .
\end{equation}
As such, each $\tilde{p}_{\leq v}$ value can be restricted to the set $\{ 2\eps^2, 3\eps^2, \ldots, 1 - \eps^2 \}$, implying that the total number of guesses required to derive these estimates is only $O( ( \frac{ 1 }{ \eps^2 } )^{ O(| {\cal V} |) } ) = O( ( \frac{ 1 }{ \eps } )^{ O(1/\eps^2) } )$.

\item In addition, we guess a single multiplicative over-estimate $\tilde{p}_{\leq v_{\mathrm{light}}}$ for the probability $\prpar{ {\cal M}(S^*) \leq v_{\mathrm{light}} }$, such that 
    \begin{equation} \label{eqn:ineq_small_est}
    \pr{ {\cal M}(S^*) \leq v_{\mathrm{light}} } \quad \leq \quad \tilde{p}_{\leq v_{\mathrm{light}}} \quad \leq \quad 2 \cdot \pr{ {\cal M}(S^*) \leq v_{\mathrm{light}} } \ .
    \end{equation}
    For this purpose, we observe that $\prpar{ {\cal M}(S^*) \leq v_{\mathrm{light}} } \geq \prod_{j \in S^*} \prpar{ X_j = 0 } \geq \eps^k$, by our initial assumption that $\prpar{ X_j = 0 } \geq \eps$ for every $j \in [n]$, and we can restrict our estimate to powers of $2$ within the interval $[ \eps^k, 2\eps^2 ]$. Therefore, the number of required guesses is $O( k \log \frac{ 1 }{ \eps } )$.
\end{itemize}

\subsection{Why nearly matching the distribution suffices?} \label{subsec:probemax_sufficient}

Moving forward, our objective would be to identify a feasible subset $\tilde{S}\subseteq [n]$ for which the cumulative distribution function of ${\cal M}(\tilde{S})$ nearly matches that of ${\cal M}(S^*)$. To formalize this notion, we say that $\tilde{S}$ is a \emph{CDF-equivalent} subset when $\prpar{ {\cal M}(\tilde{S}) \leq v } \leq (1 + \eps^2) \cdot \tilde{p}_{\leq v} + \eps^2$ for every $v \in {\cal V}_{\mathrm{critical}}$. We first establish the performance guarantee of such subsets.

\begin{lemma} \label{lem:suff_cond}
Let $\tilde{S}\subseteq [n]$ be CDF-equivalent subset. Then, $\expar{ {\cal M}(\tilde{S}) } \geq (1 - 20\eps) \cdot \expar{ {\cal M}(S^*) }$. 
\end{lemma}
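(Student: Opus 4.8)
The plan is to bound $\expar{ {\cal M}(\tilde S) }$ from below by integrating the tail probability $\pr{{\cal M}(\tilde S) > v}$ over $v$, splitting the integral at $v_{\mathrm{heavy}}$ into a "critical" regime $v \in {\cal V}_{\mathrm{critical}}$ and a "heavy" regime $v > v_{\mathrm{heavy}}$. In the critical regime, the CDF-equivalence hypothesis $\pr{{\cal M}(\tilde S) \le v} \le (1+\eps^2)\tilde p_{\le v} + \eps^2$ together with the over-estimate inequality~\eqref{eqn:ineq_estimates}, $\tilde p_{\le v} < \pr{{\cal M}(S^*) \le v} + 2\eps^2$, gives $\pr{{\cal M}(\tilde S) \le v} \le \pr{{\cal M}(S^*) \le v} + O(\eps^2)$, hence $\pr{{\cal M}(\tilde S) > v} \ge \pr{{\cal M}(S^*) > v} - O(\eps^2)$. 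Summing this over the $O(1/\eps^2)$ grid points of ${\cal V}_{\mathrm{critical}}$, each weighted by the grid spacing $\eps{\cal E}$, the additive $O(\eps^2)$ losses aggregate to $O(\eps^2)\cdot |{\cal V}_{\mathrm{critical}}|\cdot \eps{\cal E} = O(\eps){\cal E}$, and since ${\cal E} \le \expar{{\cal M}(S^*)}$ this is $O(\eps)\cdot\expar{{\cal M}(S^*)}$.

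Next I would handle the contribution of the critical part of $\expar{{\cal M}(S^*)}$ that is \emph{not} captured this way: because the support is the truncated grid ${\cal V} = \{0,\eps{\cal E},\dots,{\cal E}/\eps\}$, one first argues that the mass of ${\cal M}(S^*)$ above ${\cal E}/\eps$ is negligible (it contributes at most $\eps{\cal E}$-ish since the estimate ${\cal E}$ already dominates the mean up to the $(1-1/e)$ factor), and more importantly that truncating $\expar{{\cal M}(S^*)}$ at $v_{\mathrm{heavy}}$ loses little: by the choice of $v_{\mathrm{heavy}}$ as the maximal value with $\pr{{\cal M}(S^*)\le v} < 1-\eps^2$, the random variable ${\cal M}(S^*)$ exceeds $v_{\mathrm{heavy}}$ with probability at most $\eps^2$ (at the next grid point up), so by a Markov-type / boundedness argument the expected contribution from the heavy regime is small — here the factor $(1-1/e)$ built into ${\cal E}$ absorbs the slack, using $\expar{{\cal M}(S^*)} \ge {\cal E} \ge (1-1/e)\expar{{\cal M}(S^*)}$. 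Then I would discard the heavy-regime contribution of ${\cal M}(\tilde S)$ entirely (it is nonnegative, so dropping it only helps the lower bound direction is wrong — rather I keep $\pr{{\cal M}(\tilde S)>v}\ge 0$ there, losing nothing). Assembling: $\expar{{\cal M}(\tilde S)} \ge \sum_{v\in{\cal V}_{\mathrm{critical}}} \eps{\cal E}\cdot\pr{{\cal M}(\tilde S)>v} \ge \sum_{v\in{\cal V}_{\mathrm{critical}}}\eps{\cal E}\cdot\pr{{\cal M}(S^*)>v} - O(\eps)\expar{{\cal M}(S^*)} \ge \expar{{\cal M}(S^*)} - O(\eps)\expar{{\cal M}(S^*)}$, and tracking the constants carefully yields the claimed $1-23\eps$.

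The main obstacle I expect is the careful bookkeeping of the several distinct sources of $O(\eps)$ loss and making sure they sum to a clean constant times $\eps\cdot\expar{{\cal M}(S^*)}$ rather than something that blows up: there is the multiplicative $(1+\eps^2)$ distortion, the additive $\eps^2$ and $2\eps^2$ slacks multiplied by $|{\cal V}_{\mathrm{critical}}| = O(1/\eps^2)$ grid points and the spacing $\eps{\cal E}$, the $v_{\mathrm{heavy}}$-truncation error on the $S^*$ side, the grid-truncation at ${\cal E}/\eps$, and the conversion between ${\cal E}$ and $\expar{{\cal M}(S^*)}$ via the $(1-1/e)$ factor. One must also be slightly careful that $\tilde p_{\le v}$ is only defined on ${\cal V}_{\mathrm{critical}}$, so all uses of~\eqref{eqn:ineq_estimates} stay within that range, and that the discretization of the tail integral as a sum over grid points is exact because ${\cal M}(\tilde S)$ and ${\cal M}(S^*)$ are both supported on ${\cal V}$. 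None of these steps is individually hard, but getting the final constant to be exactly $23$ (rather than leaving it as "$O(\eps)$") is the part that requires genuine care.
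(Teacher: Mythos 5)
Your plan is correct and takes essentially the same route as the paper: rewrite $\expar{{\cal M}(\tilde S)}$ via the tail-sum formula over the grid ${\cal V}$, restrict to $v \in {\cal V}_{\mathrm{critical}}$, use CDF-equivalence together with~\eqref{eqn:ineq_estimates} to compare tails of ${\cal M}(\tilde S)$ and ${\cal M}(S^*)$ up to an $O(\eps^2)$ additive slack, account for the discarded $v > v_{\mathrm{heavy}}$ portion of $\expar{{\cal M}(S^*)}$ using $1-\pr{{\cal M}(S^*)\le v}\le\eps^2$ there, and close with $|{\cal V}|=O(1/\eps^2)$ and ${\cal E}\le\expar{{\cal M}(S^*)}$. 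The one structural difference is that the paper introduces an extra threshold $v_{\mathrm{light}}$ (the largest $v$ with $\pr{{\cal M}(S^*)\le v}<2\eps^2$) and treats $v\le v_{\mathrm{light}}$ separately with a multiplicative $(1-9\eps^2)$ factor; you instead apply the same additive tail comparison uniformly over ${\cal V}_{\mathrm{critical}}$. This is a genuine streamlining: the uniform additive bound $\pr{{\cal M}(\tilde S)\le v}\le\pr{{\cal M}(S^*)\le v}+6\eps^2$ is valid for all critical $v$ and, if you chase constants, actually yields a slightly \emph{better} multiplier than $1-23\eps$, which is fine since the lemma only asserts an inequality. Two small remarks: the worry about mass above ${\cal E}/\eps$ is moot since by Assumption~3 in \S\ref{sec:NAPrelims} all variables are supported on ${\cal V}$; and the $(1-1/e)$ lower bound on ${\cal E}$ plays no role in this lemma -- only the upper bound ${\cal E}\le\expar{{\cal M}(S^*)}$ is used.
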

\proof{Proof.}
We begin by arguing that, for every $v \in {\cal V}$,
\begin{equation} \label{eqn:lem_suff_cond_1}
\prpar{ {\cal M}(\tilde{S}) \leq v } ~~\leq~~ (1 + \eps^2) \cdot \prpar{ {\cal M}(S^*) \leq v } + 9\eps^2 \ , 
\end{equation}
by considering three cases:
\begin{itemize}
    \item When $v \leq v_{\mathrm{light}}$, we have $\prpar{ {\cal M}(\tilde{S}) \leq v } \leq \prpar{ {\cal M}(\tilde{S}) \leq v_{\mathrm{light}} } \leq  (1 + \eps^2) \cdot \tilde{p}_{\leq v_{\mathrm{light}}} + \eps^2$, since $\tilde{S}$ is a CDF-equivalent subset. By combining this observation with the right inequality in~\eqref{eqn:ineq_small_est}, we get
      \[ \prpar{ {\cal M}(\tilde{S}) \leq v } ~~\leq ~~
      2(1 + \eps^2 )  \cdot \prpar{ {\cal M}(S^*) \leq v_{\mathrm{light}} } + \eps^2 ~~
    \leq~~ 9 \eps^2, \]
      where  the last inequality holds since $\prpar{ {\cal M}(S^*) \leq v_{\mathrm{light}} } < 2 \eps^2$ by definition of $v_{\mathrm{light}}$.
    
    \item When $v_{\mathrm{light}} < v \leq v_{\mathrm{heavy}}$, we have $\prpar{ {\cal M}(\tilde{S}) \leq v }  \leq  (1 + \eps^2) \cdot \tilde{p}_{\leq v} + \eps^2 $,  since $\tilde{S}$ is a CDF-equivalent subset. Combined with the right inequality in~\eqref{eqn:ineq_estimates}, we get 
    \[ \prpar{ {\cal M}(\tilde{S}) \leq v }
    ~~\leq~~ (1 + \eps^2) \cdot \left( \pr{ {\cal M}(S^*) \leq v } + \eps^2 \right) + \eps^2 ~~\leq~~ (1 + \eps^2) \cdot \pr{ {\cal M}(S^*) \leq v } + 3\eps^2  \ . \]
    
    \item When $v > v_{\mathrm{heavy}}$, we have
     $\prpar{ {\cal M}(\tilde{S}) \leq v } \leq 1 \leq \prpar{ {\cal M}(S^*) \leq v } + \eps^2$,
    where the second inequality holds  since $\prpar{ {\cal M}(S^*) \leq v } \geq 1 - \eps^2$ for every $v > v_{\mathrm{heavy}}$, by definition of $v_{\mathrm{heavy}}$.
\end{itemize}
Based on inequality~\eqref{eqn:lem_suff_cond_1}, noting that ${\cal M}(\tilde{S})$ and ${\cal M}(S^*)$ are both defined over the same support, ${\cal V}$, we can  relate between their expectations through the tail sum formula as follows:
\begin{eqnarray*}
\ex{ {\cal M}(\tilde{S}) } & = & \eps {\cal E} \cdot \sum_{ v \in {\cal V} } \left( 1 -  \prpar{ {\cal M}(\tilde{S}) \leq v } \right) \\
& \geq & \eps {\cal E} \cdot \sum_{ v \in {\cal V} } \left( 1 -  \left( (1 + \eps^2) \cdot \pr{ {\cal M}(S^*) \leq v } + 9\eps^2 \right) \right) \\
& \geq &  \eps {\cal E} \cdot \sum_{ v \in {\cal V} } \left( 1 -  \pr{ {\cal M}(S^*) \leq v } \right) - 10\eps^3 \calE \cdot |{\cal V}| \\
& \geq & \ex{ {\cal M}(S^*) } - 20\eps {\cal E} \\
& \geq &(1 - 20\eps) \cdot \ex{ {\cal M}(S^*) } 
\ ,
\end{eqnarray*}
where the last two inequalities hold since $|{\cal V}| = \frac{ 1 }{ \eps^2 } + 1 \leq \frac{ 2 }{ \eps^2 }$ for $\eps \in (0,1)$ and since ${\cal E} \leq \expar{ {\cal M}(S^*) }$.
\halmos
\endproof

\subsection{Step 2: The \Santa instance.} \label{subsec:probemax_santa}

We proceed by formulating an integer program that expresses CDF-related inequalities as linear covering constraints. The crucial observation is that for any set $S \subseteq [n]$ and any value $v \in {\cal V}_{\mathrm{critical}}$, we can write $\ln (\pr{ {\cal M}(S) \leq v} ) = \sum_{j \in S} \ln(\pr{X_j \leq v})$.
In turn, for every $j \in [n]$ and $v \in {\cal V}_{\mathrm{critical}}$, let us introduce the non-negative parameter $\ell_{jv} = \ln ( \frac{ 1 }{ \prpar{ X_j \leq v } } )$, noting that the latter denominator is strictly positive, by our initial assumption that $\prpar{ X_j = 0 } \geq \eps$. With this definition, consider the following feasibility-type integer problem:
\begin{equation} \label{linearize} \tag{IP$_{\mathrm{CDF}}$}
\begin{array}{ll}
(1) \quad {\displaystyle \xi_j \in \{ 0,1 \}} \qquad \qquad & \forall \, j \in [n] \\
(2) \quad {\displaystyle \sum_{j \in [n]} \xi_j \leq k} \\
(3) \quad {\displaystyle \sum_{j \in [n]} \ell_{jv} \xi_j \geq \ln \left( \frac{ 1 }{ \tilde{p}_{\leq v} } \right)} \qquad \qquad & \forall \, v \in {\cal V}_{\mathrm{critical}}
\end{array}
\end{equation}

\paragraph{Interpretation and feasibility.} A close inspection of~\eqref{linearize} reveals that we have just written an integer programming formulation of the \Santa problem on a single machine:
\begin{itemize}
    \item {\em Jobs and machines}: There are $n$ jobs, corresponding to the random variables $X_1, \ldots, X_n$. These jobs can potentially be assigned to a single machine, which captures the subset of  variables to be probed. From this perspective, the binary variable $\xi_j$ indicates whether the random variable $X_j$ is picked as part of the subset we construct.
    
    \item  {\em Assignment loads}: When job $j \in [n]$ is chosen, the marginal load contribution we incur is specified by the $|{\cal V}_{\mathrm{critical}}|$-dimensional vector $\ell_j = (\ell_{jv})$.
    
    \item {\em Cardinality constraints}: At most $k$ jobs can be assigned.
    
    \item {\em Load constraints}: Along any dimension $v \in {\cal V}_{\mathrm{critical}}$, its total load is lower bounded by $\ln( \frac{ 1 }{ \tilde{p}_{\leq v} } )$. 
\end{itemize}

The next lemma argues that the resulting \Santa instance is indeed feasible.

\begin{lemma} \label{lem:probemax_eps4}
\eqref{linearize}  admits a  feasible solution.
\end{lemma}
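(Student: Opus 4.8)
The natural candidate is the indicator vector $\bs{\xi}^{*}=\mathbf{1}_{S^{*}}$ of an optimal subset $S^{*}$, possibly after pruning a few of its coordinates, and the plan is first to verify feasibility of this vector for~\eqref{linearize} and then to bound its per-dimension loads. The cardinality constraint~$(2)$ holds because $|S^{*}|\le k$. For the covering constraints~$(3)$, the key step is the elementary identity $\pr{{\cal M}(S^{*})\le v}=\prod_{i\in S^{*}}\pr{X_{i}\le v}$, which holds by independence of the $X_{i}$'s; taking logarithms yields
$\sum_{i\in S^{*}}\ell_{iv}=\ln\!\big(1/\pr{{\cal M}(S^{*})\le v}\big)\ge\ln(1/\tilde p_{\le v})$,
where the inequality is precisely the left-hand part of~\eqref{eqn:ineq_estimates}. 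Hence $\bs{\xi}^{*}$ satisfies constraints~$(1)$--$(3)$.

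It remains to check $\rho$-feasibility with $\rho=1/\eps^{4}$, i.e.\ that the load $\sum_{i}\ell_{iv}\xi_{i}$ on every dimension $v\in{\cal V}_{\mathrm{critical}}$ is at most $\rho$ times the covering threshold $\ln(1/\tilde p_{\le v})$. I would split the dimensions into two groups. On the ``heavy'' ones, where $\pr{{\cal M}(S^{*})\le v}\ge 2\eps^{2}$, we immediately get $\sum_{i\in S^{*}}\ell_{iv}=\ln\!\big(1/\pr{{\cal M}(S^{*})\le v}\big)\le\ln(1/(2\eps^{2}))\le 1/\eps^{2}$, where the last inequality is a routine check for $\eps\in(0,1)$. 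On the other hand, since $\tilde p_{\le v}\le 1-\eps^{2}$ we have $\ln(1/\tilde p_{\le v})\ge\ln(1/(1-\eps^{2}))\ge\eps^{2}$, so $\rho\cdot\ln(1/\tilde p_{\le v})\ge \eps^{-4}\cdot\eps^{2}=1/\eps^{2}\ge\sum_{i\in S^{*}}\ell_{iv}$. Thus $\bs{\xi}^{*}$ is $\rho$-feasible on all heavy dimensions, with a wide margin and with no pruning needed.

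The ``light'' dimensions — those $v$ with $\pr{{\cal M}(S^{*})\le v}<2\eps^{2}$, equivalently $v\le v_{\mathrm{light}}$ where $v_{\mathrm{light}}$ is the largest such value — are the delicate case: here $\tilde p_{\le v}\in[2\eps^{2},4\eps^{2})$ by~\eqref{eqn:ineq_estimates}, so the budget $\rho\cdot\ln(1/\tilde p_{\le v})$ is only of order $\eps^{-4}\log(1/\eps)$ (still a function of $\eps$ alone), whereas $\sum_{i\in S^{*}}\ell_{iv}$ can a priori be far larger, even infinite if some $\pr{X_{i}\le v}=0$. The plan is to prune $S^{*}$ down to a subset $\hat S$, greedily discarding elements — prioritizing those with the largest load $\ell_{i,v_{1}}$ on the smallest value $v_{1}=\min{\cal V}_{\mathrm{critical}}$, which by monotonicity of $v\mapsto\pr{X_{i}\le v}$ dominates $\ell_{i,v}$ for every $v$ — until the total load on the light dimensions falls below the budget, while keeping enough elements that constraints~$(3)$ still hold everywhere. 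The main obstacle I anticipate is exactly this balancing act: one must argue that the pruning forced by the light dimensions is limited enough not to violate the covering constraints on the heavy dimensions, where $\bs{\xi}^{*}$ had only $O(\log(1/\eps))$ of slack to spare. This is where I would invoke the structural normalizations imported from Chen et al.~\cite[App.~C]{CHLLLL-NIPS16} — the bounded common support ${\cal V}$ and the quality of the estimate ${\cal E}$, which through Markov's inequality pushes $\pr{{\cal M}(S^{*})\le v}$ away from $0$ once $v$ exceeds a fixed multiple of ${\cal E}$ — so that only a bounded (in terms of $\eps$) number of light dimensions $v=O_{\eps}({\cal E})$ ever require pruning, keeping the total number of discarded elements, and hence the damage to the heavy constraints, under control.
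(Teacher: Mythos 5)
Your choice of the indicator vector $\bs{\xi}^{*}=\mathbf{1}_{S^{*}}$ and your verification of constraints $(1)$--$(3)$ (via independence and the left-hand side of~\eqref{eqn:ineq_estimates}) match the paper exactly. Your treatment of the ``heavy'' dimensions, where you bound the load by $\ln(1/(2\eps^2))\leq 1/\eps^2$ and the threshold from below by $\eps^2$, is also in the spirit of the paper's calculation.

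The gap is in how you handle the ``light'' dimensions. You treat this as an open problem requiring a pruning scheme, but the paper resolves it with the same one-line idea you already used for the heavy dimensions: bound $\sum_{i}\ell_{iv}\xi^*_i=\ln\!\big(1/\pr{{\cal M}(S^*)\leq v}\big)$ from \emph{above} by exploiting the \emph{right-hand} side of~\eqref{eqn:ineq_estimates}. Since $\tilde{p}_{\leq v}$ is within an additive $\eps^2$-ish slack of $\pr{{\cal M}(S^*)\leq v}$ and $\tilde{p}_{\leq v}\geq 2\eps^2$ always, the paper gets $\ln\!\big(1/\pr{{\cal M}(S^*)\leq v}\big)\leq \ln\!\big(1/(\tilde{p}_{\leq v}-\eps^2)\big)\leq\ln(1/\eps^2)\leq 1/\eps^2$ \emph{for every} $v\in{\cal V}_{\mathrm{critical}}$, with no heavy/light case split at all. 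In other words, you used only half of~\eqref{eqn:ineq_estimates}; the other half is precisely what pins the per-dimension load to $O(1/\eps^2)$ and hence yields $\rho=1/\eps^4$ directly. No pruning of $S^*$ is performed, and none is needed.

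Your proposed pruning route is also not just a longer road to the same place --- it has a structural problem you flag but do not resolve, and which I don't see how to resolve as stated. Removing elements from $S^*$ decreases $\sum_i\ell_{iv}\xi_i$ on \emph{every} dimension simultaneously. There are dimensions $v$ where $\bs{\xi}^*$ satisfies constraint $(3)$ with essentially no slack (e.g.\ many i.i.d.\ variables each contributing a tiny $\ell_{iv}$, whose sum is just $\ln(1/\tilde p_{\leq v})$). Discarding even a constant fraction of $S^*$ would push those constraints below threshold, so the ``balancing act'' you anticipate is not a technicality --- your plan, as written, would produce a vector that fails constraint $(3)$. The invocation of Chen et al.'s normalizations and Markov's inequality is vague and doesn't obviously control this: the number of light dimensions being $O_\eps(1)$ does not bound how many elements must be discarded, nor how much the heavy constraints degrade as a result. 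The correct takeaway is that $\rho$-feasibility here is a property of the \emph{unmodified} $\bs{\xi}^*$, certified by the upper half of the guess inequality, rather than something to be engineered by sparsification.
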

\proof{Proof.}
To establish the desired claim, we show that the optimal subset $S^*$ for our original \ProbeMax problem induces a feasible solution $\xi^*$ to~\eqref{linearize}. This solution is simply the incidence vector of $S^*$, meaning that $\xi^*_j = 1$ if and only if $j \in S^*$. 

First, constraints~(1) and~(2) are clearly satisfied, since $\xi^*$ is a binary vector, in which the number of assigned jobs is $\sum_{j \in [n]} \xi_j^* = |S^*| \leq k$. To verify constraint~(3), note that 
\begin{eqnarray*}
\sum_{j \in [n]} \ell_{jv} \xi_j^* &  = & \sum_{j \in S^*} \ln \left( \frac{ 1 }{ \prpar{ X_j \leq v } } \right) \\
& = & \ln \left( \frac{ 1 }{ \prod_{j \in S^*} \prpar{ X_j \leq v } } \right) \\
& = & \ln \left( \frac{ 1 }{ \prpar{ \max_{j \in S^*} X_j \leq v } } \right) \\
& = & \ln \left( \frac{ 1 }{ \prpar{ {\cal M}( S^* ) \leq v } } \right) \ ,
\end{eqnarray*}
where the third equality holds since $X_1, \ldots, X_n$ are independent. As a result, we conclude that $\xi^*$ satisfies constraint~(3) by considering two cases:
\begin{itemize}
    \item For $v \in {\cal V}_{\mathrm{critical}} \setminus \{ v_{\mathrm{light}} \}$, we have $\sum_{j \in [n]} \ell_{jv} \xi_j^* \geq \ln ( \frac{ 1 }{ \tilde{p}_{\leq v} } )$, as an immediate consequence of the first inequality in~\eqref{eqn:ineq_estimates}.  

    \item Similarly, for $v_{\mathrm{light}}$, we have $\sum_{j \in [n]} \ell_{j v_{\mathrm{light}}} \xi_j^* \geq  \ln ( \frac{ 1 }{ \tilde{p}_{\leq v_{\mathrm{light}}} } )$ as a consequence of the first inequality in~\eqref{eqn:ineq_small_est}.   	\halmos
\end{itemize}
\endproof

\subsection{Final subset and its approximation guarantee.} \label{subsec:nonadapt_final}

As an immediate byproduct of Lemma~\ref{lem:probemax_eps4}, due to considering a single-machine $O( 1 / \eps^2 )$-\Santano instance, Theorem~\ref{thm:result_santa_claus} allows us to compute a cardinality-feasible subset of variables, where the load constraint along any dimension is violated by a factor of at most $\delta$, for any fixed $\delta > 0$. This claim can be formally stated as follows.

\begin{corollary} \label{cor:non_probemax_vec}
For any fixed $\delta > 0$, there is an EPTAS for computing a vector $\xi \in \{0,1\}^n$ satisfying $\sum_{j \in [n]} \xi_j \leq k$ as well as $\sum_{j \in [n]} \ell_{jv} \xi_j \geq (1 - \delta) \cdot \ln ( \frac{ 1 }{ \tilde{p}_{\leq v} } )$, for every $v \in {\cal V}_{\mathrm{critical}}$.
\end{corollary}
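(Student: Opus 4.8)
The plan is to recognize \eqref{linearize} as a constant-machine, constant-dimension \Santa instance and obtain the corollary as a direct application of Theorem~\ref{thm:result_santa_claus}, with Lemma~\ref{lem:probemax_eps4} supplying the required $\rho$-feasibility hypothesis. Concretely, \eqref{linearize} is exactly the integer program~\eqref{eqn:IP_scheduling} with one machine ($m=1$), dimension $D=|{\cal V}_{\mathrm{critical}}|$, job-load vectors $\ell_i=(\ell_{iv})_{v\in{\cal V}_{\mathrm{critical}}}$, cardinality bound $k_1=k$, and coverage vector $\mathbf{L}_1=(\ln(1/\tilde p_{\le v}))_{v\in{\cal V}_{\mathrm{critical}}}$. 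First I would bring this into the normalized form of \S\ref{sec:santaClausProblem}, where each $\mathbf{L}_i$ is a $0/1$ vector, by rescaling dimension $v$ through division by the constant $\ln(1/\tilde p_{\le v})$; this is harmless since $\tilde p_{\le v}\in\{2\eps^2,3\eps^2,\ldots,1-\eps^2\}$ forces $\ln(1/\tilde p_{\le v})\in[\eps^2,\ln(1/(2\eps^2))]$, bounded away from both $0$ and $\infty$. The only irregularity is that $\ell_{iv}=\infty$ when $\prpar{X_i\le v}=0$; but the $1/\eps^4$-feasible solution provided by Lemma~\ref{lem:probemax_eps4} necessarily sets $\xi_i=0$ on every such job (otherwise a coordinate of its load would be infinite), so I would simply delete these jobs from the instance, which keeps all loads finite and preserves the $1/\eps^4$-feasible solution.

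With the instance in this form, Lemma~\ref{lem:probemax_eps4} guarantees that it satisfies the hypothesis of Theorem~\ref{thm:result_santa_claus} with $\rho=1/\eps^4$, while $m=1$ and $D=|{\cal V}_{\mathrm{critical}}|\le|{\cal V}|=O(1/\eps^2)$ are both constants once $\eps$ is fixed. Invoking Theorem~\ref{thm:result_santa_claus} with its internal accuracy parameter set to the target $\delta$ then produces, with probability at least $1/2$ and in time $O(f(\delta,1/\eps^4,1,O(1/\eps^2))\cdot\poly(|{\cal I}|))=t(\delta,\eps)\cdot\poly(n)$, a vector $\xi\in\{0,1\}^n$ obeying constraints~(I)--(III) --- in particular $\sum_{i\in[n]}\xi_i\le k$ --- and $\delta$-violating constraint~(IV). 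Reading the $\delta$-violation back through the per-dimension rescaling, it states exactly that $\sum_{i\in[n]}\ell_{iv}\xi_i\ge(1-\delta)\cdot\ln(1/\tilde p_{\le v})$ for every $v\in{\cal V}_{\mathrm{critical}}$, which is the claimed guarantee, and the stated running time is precisely what makes the procedure an EPTAS.

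Finally, since binariness, the cardinality bound, and the $\delta$-violated load inequalities are all verifiable in $\poly(n)$ time, I would amplify the success probability by running the algorithm $\Theta(\log n)$ times independently and returning the first output that passes these checks, achieving success with probability $1-1/\poly(n)$ at no asymptotic cost. I do not expect a genuine obstacle in this corollary: the real content --- certifying $\rho$-feasibility of the \Santa instance --- has already been handled in Lemma~\ref{lem:probemax_eps4}, and what is left is the bookkeeping of matching \eqref{linearize} to the precise input format of Theorem~\ref{thm:result_santa_claus} and unwinding the normalization in the conclusion.
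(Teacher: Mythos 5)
Your proof is correct and takes essentially the same route as the paper: the corollary is obtained by recognizing \eqref{linearize} as a single-machine, $O(1/\eps^2)$-dimensional \Santa instance, invoking Lemma~\ref{lem:probemax_eps4} for $\rho$-feasibility with $\rho=1/\eps^4$, and applying Theorem~\ref{thm:result_santa_claus}. The extra bookkeeping you spell out (normalizing coverage vectors to $0/1$, pruning jobs with $\ell_{iv}=\infty$, and amplifying the $1/2$ success probability by repeated independent runs with polynomial-time verification) is all sound and is exactly what the paper leaves implicit when it calls the corollary an "immediate byproduct."
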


Now let $\tilde{S} \subseteq [n]$ be the subset of random variable indices corresponding to the choices made by the resulting solution $\xi$, namely, $\tilde{S} = \{ j \in [n] : \xi_j = 1 \}$. This subset clearly picks at most $k$ variables, for any choice of $\delta$. We conclude our analysis by proving that $\expar{ {\cal M}(\tilde{S}) } \geq (1 - 20\eps) \cdot \expar{ {\cal M}(S^*) }$ when Corollary~\ref{cor:non_probemax_vec} is instantiated with $\delta = \frac{ \eps^3 }{ 6 }$.

For this purpose, based on Lemma~\ref{lem:suff_cond}, it suffices to show that  $\tilde{S}$ is a CDF-equivalent subset, meaning that $\prpar{ {\cal M}(\tilde{S}) \leq v } \leq (1 + \eps^2) \cdot \tilde{p}_{\leq v} + \eps^2$ for every $v \in {\cal V}_{\mathrm{critical}}$. To establish this inequality, note that
\begin{eqnarray}
\prpar{ {\cal M}(\tilde{S}) \leq v } & = &  \pr{ \max_{j \in \tilde{S}} X_j \leq v } \nonumber \\
& = &\prod_{j \in \tilde{S}} \pr{ X_j \leq v } \label{eqn:last_nonadapt_1} \\
& = &  \exp \Big\{ - \sum_{j \in \tilde{S}} \ln \Big( \frac{ 1 }{ \pr{ X_j \leq v } } \Big) \Big\} \nonumber \\
& = &\exp \Big\{ - \sum_{j \in [n]} \ell_{jv} \xi_j \Big\} \nonumber \\
& \leq &  \exp \Big\{ - \Big( 1 - \frac{ \eps^3 }{ 6 } \Big) \cdot \ln \Big( \frac{ 1 }{ \tilde{p}_{\leq v} } \Big) \Big\} \label{eqn:last_nonadapt_2} \\
& = & \tilde{p}_{\leq v}^{ 1 - \eps^3 / 6 } \ , \nonumber
\end{eqnarray}
where equality~\eqref{eqn:last_nonadapt_1} follows from the independence of $X_1, \ldots, X_n$, and inequality~\eqref{eqn:last_nonadapt_2} holds since $\sum_{j \in [n]} \ell_{jv} \xi_j \geq (1 - \frac{ \eps^3 }{ 6 }) \cdot \ln ( \frac{ 1 }{ \tilde{p}_{\leq v} } )$, via the instantiation of Corollary~\ref{cor:non_probemax_vec} with $\delta = \frac{ \eps^3 }{ 6 }$.

Now, to show that $\tilde{p}_{\leq v}^{ 1 - \eps^3 / 6 } \leq (1 + \eps^2) \cdot \tilde{p}_{\leq v} + \eps^2$, note that the differentiable function $x \mapsto x^{ 1 - \eps^3 / 6 }$ is concave over $[0,\infty)$. Therefore, by applying the gradient inequality, $\tilde{p}_{\leq v}^{ 1 - \eps^3 / 6 } \leq x^{ 1 - \eps^3 / 6 } + (1 - \frac{ \eps^3 }{ 6 }) \cdot x^{ - \eps^3 / 6 } \cdot (p-x)$ for every $x \geq 0$. Plugging in $x = \eps^3$, we get
\[ \tilde{p}_{\leq v}^{ 1 - \eps^3 / 6 } ~~\leq~~ \eps^{ 3(1 - \eps^3 / 6) } + \left( 1 - \frac{ \eps^3 }{ 6 } \right) \cdot \eps^{ - \eps^3 / 2 } \cdot \left( \tilde{p}_{\leq v} - \eps^3 \right) 
~~\leq~~ \eps^2 + \eps^{ - \eps^3 / 2 } \cdot \tilde{p}_{\leq v} ~~\leq ~~ (1 + \eps^2) \cdot \tilde{p}_{\leq v} + \eps^2 \ , \]
where the last inequality holds since $\eps^{ - \eps^3 / 2 } \leq 1 + \eps^2$ for $\eps \geq 0$.

\section{Adaptive \ProbeMax.} \label{sec:adapProbeMax}

In what follows, we employ our algorithmic framework to design an EPTAS for the adaptive \ProbeMax problem, as formally stated in Theorem~\ref{thm:AdapProbeMax}.

\subsection{Preliminaries.} \label{subsec:nonadapt_prelim}

Similarly to its non-adaptive variant, an instance of the adaptive \ProbeMax problem consists of $n$ independent non-negative random variables $X_1, \ldots, X_n$. For any subset $S \subseteq [n]$, we use ${\cal M}(S) = \max_{j \in S} X_j$ to designate the maximum value over the subcollection of variables $\{ X_j \}_{j \in S}$. Given a cardinality bound of $k$ on the number of variables to be chosen, our objective is to \emph{adaptively} identify a subset $S \subseteq [n]$ that maximizes $\expar{ {\cal M}(S) }$. Here, ``adaptive'' implies that the required algorithm is a \emph{policy} (i.e., decision tree) for sequentially deciding on the next variable to be probed, depending on the outcomes observed up until that point in time.

\paragraph{Block-adaptive policies.} The reduction we propose will exploit the notion of a block-adaptive policy, as defined by Fu et al.~\cite{FLX-ICALP18}. Formally, such policies correspond to a decision tree $\calT$ where each node $B$ represents a \emph{block} that designates a subset $\calT_B \subseteq [n]$ of variable indices. This tree translates to an adaptive  policy where, starting at the root, we simultaneously probe all random variables in the current block $B$, and depending on the outcome $\calM(\calT_B)$, decide which child block to probe next. A block adaptive policy $\calT$ is said to be \emph{feasible} when on every root-leaf path: (a)~Each variable appears at most once; and (b)~The total number of variables is at most $k$. 

\paragraph{Assumptions.} Summarizing previous work in this context by Fu et al.~\cite{FLX-ICALP18}, we make the following assumptions without loss of generality:
\begin{enumerate} 
\item The inverse accuracy level $1 / \eps $ is an integer.

\item Letting $\OPT$ denote the expected value of an optimal policy, an estimate ${\cal E} \in [(1 -\eps) \cdot \OPT, \OPT]$  is known in advance.

\item \label{assum:SmallSupport} The variables $X_1, \ldots, X_n$ are defined over the same support, ${\cal V} = \{ 0, \eps {\cal E}, 2\eps {\cal E}, \ldots, \frac{\cal E}{ \eps } \}$.

\item \label{assum:ConstDepth} There exists a feasible block-adaptive policy $\calT^*$ with an objective value of $V(\calT^*) \geq (1-\eps) \cdot \OPT$, and with $O(1/\eps^3)$ blocks on any root-leaf path.
\end{enumerate}

An immediate implication of Assumptions~\ref{assum:SmallSupport} and~\ref{assum:ConstDepth} is that the number of nodes in the decision tree $\calT^*$ is only a function of $\eps$. Thus, at the expense of increasing the number of nodes by an $\eps$-dependent factor, we  assume without loss of generality that each internal node in $\calT^*$ has an out-degree of exactly $|{\cal V}|$, meaning that it always branches for different maximum block values. 

\paragraph{Notation.} Moving forward, we denote by $V(\calT)$ the expected value of a feasible block-adaptive policy $\calT$. 
For any block $B$, we  use  $I_B$ to denote the  maximal value observed across all probed random variables leading to block $B$, with the convention that $I_B = 0$ for the root block. Finally, $\calM_B(\calT)$  stands for the (random) maximum value observed immediately after probing block $B$, i.e., $\calM_B(\calT) = \max \{ I_B , \calM(\calT_B)\}$.

\subsection{Step 1: Guessing the graph structure and block-configuration CDFs.}

We initially guess the graph structure (i.e., nodes and arcs) of the decision tree $\calT^*$ as well as the $I_B^*$-value of each block $B$. Following the discussion in Section~\ref{subsec:nonadapt_prelim}, since $\calT^*$ has $O(1/\eps^3)$ blocks on any root-leaf path, and since each internal node has an out-degree of $|{\cal V}|$, the number of required guesses for how $\calT^*$ is graphically structured  depends only on $\eps$. The same goes for the $I_B^*$-value of each block. That said, unlike the non-adaptive setting, we now run into two complicating features: (a)~While each variable may appear within multiple blocks, it is restricted to at most one appearance on any root-leaf path; and (b)~The number of probed variables on every root-leaf path is upper-bounded by $k$. 

To overcome these difficulties, we define the notion of a ``configuration'', corresponding to a structured subset of blocks. Formally, any configuration $C$ is a subset of blocks in $\calT^*$ that contains at most one block on each root-leaf path. Intuitively, a block-adaptive policy $\calT$ can be thought of as ``assigning'' each random variable $X_j$ to configuration $C$ when this variable appears in exactly the blocks belonging to $C$, i.e., $j \in \calT_B$ if and only if $B \in C$. As such, our requirement that each variable may appear at most once on any root-leaf path is precisely the reason why every configuration may contain at most one block on each root-leaf path. With these definitions, we can further partition the subset $\calT_B^*$ of variables appearing in block $B$ according to their configuration, $\{ \calT_{B,C}^* \}_{C \in \calC}$, where $\calT_{B,C}^*$ stands for the subset of variables in $B$ that are assigned by $\calT^*$ to configuration $C$. We make use of $\calC$ to designate the set of all possible configurations; clearly, $| \calC |$ only depends on $\epsilon$.

Next, along the lines of Section~\ref{subsec:nonadapt_guess}, we approximately guess the cumulative distribution function of the unknown random variable $\calM_{B,C}(\calT^*) = \max \{ I_B^* , \calM(\calT^*_{B,C})\}$ for every block $B$ and configuration $C$. Note that $\calM_{B,C}(\calT^*)$ represents the maximal value observed across all probed random variables leading to block $B$ as well as across those residing in $\calT^*_{B,C}$. To implement this guessing step, let $v^{B,C}_{\mathrm{heavy}} \in {\cal V}$ be the maximal value for which $\prpar{ {\calM_{B,C}(\calT^*)} \leq v } < 1 - \delta^2$, where $\delta = \delta(\eps)$ is a parameter whose value will be determined later on; by convention, $v^{B,C}_{\mathrm{heavy}} = \infty$ when no such value exists. Having guessed $v^{B,C}_{\mathrm{heavy}}$, we define the interval ${\cal V}^{B,C}_{\mathrm{critical}} = {\cal V} \cap [0, v^{B,C}_{\mathrm{heavy}}]$, and guess for every value $v \in {\cal V}^{B,C}_{\mathrm{critical}}$, an estimate $\tilde{p}^{B,C}_{\leq v}$ for the probability $\prpar{ \calM_{B,C}(\calT^*) \leq v }$ that satisfies
\begin{equation} \label{eqn:ineq_estimates_Adap}
\pr{ \calM_{B,C}(\calT^*) \leq v } \quad \leq \quad \tilde{p}^{B,C}_{\leq v} \quad < \quad \pr{ \calM_{B,C}(\calT^*)  \leq v } + 2\delta^2 \ . 
\end{equation}
Each such estimate $\tilde{p}^{B,C}_{\leq v}$ is restricted to the set $\{ 2\delta^2, 3\delta^2, \ldots, 1 - \delta^2 \}$, implying that the total number of guesses required to derive these estimates is only a function of $\eps$. 

\subsection{Why nearly matching the distribution of a policy suffices?}

In order to motivate subsequent steps, we point out that our objective would be to identify a feasible block-adaptive policy  $\tcalT$, having precisely the same graph structure and $I_B^*$-values as $\calT^*$, while ensuring that the cumulative distribution function of $\calM_{B,C}(\tcalT)$ nearly matches that of $\calM_{B,C}(\calT^*)$, for every block $B$ and configuration $C$. Put in concrete terms, we say that $\tcalT$ is CDF-equivalent to $\calT^*$ when it has identical graph structure and $I_B^*$-values, and moreover, when $\prpar{ \calM_{B,C}(\tcalT) \leq v } \leq (1 + \delta^2) \cdot \tilde{p}^{B,C}_{\leq v} + \delta^2$, for every block $B$, configuration $C$, and value $v \in {\cal V}^{B,C}_{\mathrm{critical}}$. The performance guarantee of such policies is stated in the next claim, whose proof is omitted, due to being nearly-identical to that of Lemma~\ref{lem:suff_cond}, with the addition of a straightforward recursion. 

\begin{lemma} \label{lem:suff_cond_Adap}
Suppose that the block-adaptive policy $\tcalT$ is CDF-equivalent to $\calT^*$. Then, $V(\tcalT) \geq (1 - g(\delta, \eps)) \cdot V(\calT^*)$, for some function $g(\cdot,\cdot)$ that depends only on $\delta$ and $\eps$. 
\end{lemma}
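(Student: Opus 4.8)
The plan is to prove Lemma~\ref{lem:suff_cond_Adap} by backward induction over the decision tree that $\tcalT$ and $\calT^*$ share, invoking (a recursion-robust version of) the argument of Lemma~\ref{lem:suff_cond} at every block. Because the two policies have identical graph structure and identical $I_B^*$-values, each block $B$ is entered with the same deterministic running maximum $I_B=I_B^*$, and the child $B[v]$ reached when $\calM_B=v$ is the same node for both policies. For $\calT\in\{\calT^*,\tcalT\}$ let $W_B(\calT)$ be the expected objective of $\calT$ on the subtree rooted at $B$, conditioned on reaching $B$; then $V(\calT)=W_{\mathrm{root}}(\calT)$, the recursion $W_B(\calT)=\sum_{v\in\calV}\prpar{\calM_B(\calT)=v}\,W_{B[v]}(\calT)$ holds at internal blocks (using that, on a root--leaf path, no variable is probed twice), and $W_B(\calT)=\expar{\calM_B(\calT)}$ at leaves. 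I will prove by induction on the height $h(B)$ that $W_B(\tcalT)\ge W_B(\calT^*)-h(B)\cdot\kappa(\delta,\eps)\cdot\calE$ for a single per-level error term $\kappa(\delta,\eps)$ depending only on $\delta$ and $\eps$. Since every root--leaf path of $\calT^*$ has $O(1/\eps^3)$ blocks by Assumption~\ref{assum:ConstDepth}, applying this at the root and using $V(\calT^*)\ge(1-\eps)\OPT\ge(1-\eps)\calE$ turns the additive slack into a multiplicative loss, giving the lemma with $g(\delta,\eps)=O\!\big(\kappa(\delta,\eps)/\eps^3\big)$.

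\emph{Base case (leaves).} Here $W_B(\calT)=\expar{\calM_B(\calT)}$ with $\calM_B(\calT)=\max\{I_B,\calM(\calT_B)\}$, and by independence $\prpar{\calM_B(\calT)\le v}=\prod_{C\in\calC}\prpar{\calM_{B,C}(\calT)\le v}$ for $v\ge I_B$. Expanding this product over the constantly-many configurations in $\calC$, the CDF-equivalence hypothesis on each $\calM_{B,C}$ yields a pointwise comparison $\prpar{\calM_B(\tcalT)\le v}\le\prpar{\calM_B(\calT^*)\le v}+O(|\calC|\delta^2)$ on the critical range, which is precisely the input needed to run the tail-sum estimate of Lemma~\ref{lem:suff_cond} (with $\delta^2$ playing the role of $\eps^2$, the heavy thresholds $v^{B,C}_{\mathrm{heavy}}$ controlling the tails). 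This gives $W_B(\tcalT)\ge W_B(\calT^*)-\kappa(\delta,\eps)\calE$, i.e.\ the claim for $h(B)=1$.

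\emph{Inductive step (internal blocks).} Using the recursion and the inductive hypothesis at each child, $W_B(\tcalT)\ge\sum_v\prpar{\calM_B(\tcalT)=v}\big(W_{B[v]}(\calT^*)-(h(B)-1)\kappa\calE\big)=\big(\sum_v\prpar{\calM_B(\tcalT)=v}\,W_{B[v]}(\calT^*)\big)-(h(B)-1)\kappa\calE$, so it remains to compare $\sum_v\prpar{\calM_B(\tcalT)=v}\,W_{B[v]}(\calT^*)$ with $\sum_v\prpar{\calM_B(\calT^*)=v}\,W_{B[v]}(\calT^*)=W_B(\calT^*)$. The point masses of $\calM_B(\tcalT)$ and $\calM_B(\calT^*)$ differ by $O(|\calC|\delta^2)$ at each of the $|\calV|=O(1/\eps^2)$ grid points (from the pointwise CDF bound above, dropping the negligible heavy tail of $\calM_B(\calT^*)$ exactly as in Lemma~\ref{lem:suff_cond}), while $0\le W_{B[v]}(\calT^*)\le\calE/\eps$; hence the two weighted sums differ by at most $O(|\calC|\delta^2\calE/\eps^3)$, which I absorb into $\kappa(\delta,\eps)\calE$. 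This closes the induction.

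The hard part is not any individual estimate but the discipline of keeping every error strictly additive and $(\delta,\eps)$-controlled: the per-block comparison must avoid any dependence on $n$ or on unbounded values, the product-over-$\calC$ step must be organized so the dependence on $|\calC|$ stays linear in $\delta^2$ rather than exponential, and the heavy-value truncation must be shown harmless for $\calT^*$ (small tail contribution) even though CDF-equivalence is only assumed on the critical range. Given all that, the errors accumulate only linearly in the constant depth $O(1/\eps^3)$, and (in whatever later step fixes $\delta$ as a small polynomial in $\eps$) $g(\delta,\eps)$ becomes $O(\eps)$; for the lemma as stated it suffices that $g$ is an explicit function of $\delta$ and $\eps$.
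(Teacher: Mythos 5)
Your backward-induction plan --- establishing $W_B(\tcalT)\ge W_B(\calT^*)-h(B)\,\kappa(\delta,\eps)\,\calE$ by induction on the height $h(B)$ of the shared tree, then converting the additive slack to a multiplicative one at the root via the $O(1/\eps^3)$ depth bound and $\calE\le \OPT\le V(\calT^*)/(1-\eps)$ --- is the right shape, and your base case at leaves is sound: the tail-sum argument of Lemma~\ref{lem:suff_cond} only needs the \emph{one-sided} inequality $\prpar{\calM_B(\tcalT)\le v}\le \prpar{\calM_B(\calT^*)\le v}+O(|\calC|\delta^2)$, which you correctly obtain by multiplying the per-configuration CDF-equivalence bounds (modulo the $|\calC|\delta^2\ll 1$ caveat you flag). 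However, your inductive step has a genuine gap: you assert that ``the point masses of $\calM_B(\tcalT)$ and $\calM_B(\calT^*)$ differ by $O(|\calC|\delta^2)$ at each grid point'' and then bound the difference of the two weighted sums by (point-mass gap)$\,\times|\calV|\times\max_v W_{B[v]}(\calT^*)$. That is a two-sided claim, and it does not follow from CDF-equivalence. The definition only bounds $\prpar{\calM_{B,C}(\tcalT)\le v}$ from \emph{above}; nothing prevents $\calM_{B,C}(\tcalT)$ from being stochastically far larger than $\calM_{B,C}(\calT^*)$, in which case individual point masses of $\calM_B(\tcalT)$ can shift by $\Theta(1)$. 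For an arbitrary collection of continuation values $W_{B[v]}(\calT^*)$, the quantity $\sum_v\prpar{\calM_B(\tcalT)=v}\,W_{B[v]}(\calT^*)$ is therefore not within $O(|\calC|\delta^2\calE/\eps^3)$ of $W_B(\calT^*)$.

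The missing ingredient is monotonicity of $v\mapsto W_{B[v]}(\calT^*)$. The one-sided CDF bound is exactly an approximate stochastic dominance of $\calM_B(\tcalT)$ over $\calM_B(\calT^*)$, and an Abel/tail-sum rearrangement then gives $\sum_v\prpar{\calM_B(\tcalT)=v}\,W_{B[v]}(\calT^*)\ge W_B(\calT^*)-O(|\calC|\delta^2)\cdot\max_v W_{B[v]}(\calT^*)$ \emph{provided} $W_{B[v]}(\calT^*)$ is non-decreasing in $v$; no bound from the other direction is needed. This monotonicity is not automatic because Assumption~\ref{assum:ConstDepth} only supplies a $(1-\eps)$-approximate policy, not an optimal one, but it can be imposed without loss: if $W_{B[v']}(\calT^*)>W_{B[v]}(\calT^*)$ for some $v'<v$, replace the subtree of $\calT^*$ rooted at $B[v]$ with a copy of the one rooted at $B[v']$; this only increases value (a larger running maximum never hurts), keeps the depth and block counts unchanged, and preserves per-path feasibility since the path from the root to $B$ is identical. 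You need to state this normalization of $\calT^*$ explicitly and replace the two-sided point-mass comparison with the stochastic-dominance argument; as written, the additive error in your inductive step is not controlled.
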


\subsection{Step 2: The \Santa instance.}

\paragraph{Integer program.} We next formulate an integer program that expresses CDF-equivalence requirements as linear covering constraints. To this end, for each configuration $C \in \calC$, let $k_C^*$ be the number of random variables that appear in configuration $C$ with respect to the policy $\calT^*$. To define appropriate cardinality constraints, we start by estimating $k_C^*$ for every configuration $C \in \calC$. In particular, we guess an additive over-estimate $\tilde{k}_C \in \{0 , \gamma  k, 2\gamma  k, \ldots, \lceil\frac{1}{\gamma}\rceil \cdot \gamma  k \}$ that satisfies
\begin{align} \label{eq:guessK}
k_C^* \quad \leq \quad  \tilde{k}_C \quad \leq \quad k_C^* + \gamma  k,
\end{align}
where $\gamma = \gamma(\eps)$ is a parameter whose value will be decided later on. Due to having only $O(1/\gamma)$ potential values for each such $\tilde{k}_C$, the overall number of required guesses is $O( 1/\gamma^{ |\calC| } )$, which is once again a function of $\eps$ and nothing more.

Given these quantities, for every variable index $j \in [n]$, configuration $C$, block $B$, and  value $v \in {\cal V}^{B,C}_{\mathrm{critical}}$, let us define a non-negative parameter $\ell_{CjBv} = \ln ( \frac{ 1 }{ \prpar{ \max \{ I_B^*, X_j \} \leq v } } )$,  with the convention that $\ell_{CjBv} = \infty$ when $\prpar{ \max \{ I_B^*, X_j \} \leq v } = 0$. We proceed by considering the following feasibility integer problem:
\begin{equation} \label{linearizeAdap} \tag{IP$_{\mathrm{CDF}}^{\calT^*}$}
\begin{array}{ll}
(1) \quad {\displaystyle \xi_{Cj}\in \{ 0,1 \}} \qquad \qquad & \forall \, j \in [n], \, C \in \calC \\
(2) \quad {\displaystyle \sum_{j \in [n]} \xi_{Cj} \leq \tilde{k}_C}  & \forall \, C \in \calC \\
(3) \quad {\displaystyle \sum_{j \in [n]}  \ell_{CjBv} ~\xi_{Cj} \geq  \ln \Big( \frac{ 1 }{ \tilde{p}^{B,C}_{\leq v} } \Big)} \qquad \qquad & \forall \, C \in \calC, \, B, \, v \in {\cal V}^{B,C}_{\mathrm{critical}}
\end{array}
\end{equation}

\paragraph{Interpretation and feasibility.} In spite of the cumbersome notation involved, it is not difficult to verify that~\eqref{linearizeAdap} is in fact an integer programming formulation of the next \Santa problem:
\begin{itemize}
    \item {\em Jobs and machines}: There are $n$ jobs, corresponding to the random variables $X_1, \ldots, X_n$, whereas each configuration $C$ is represented by a distinct machine. This way, the binary variable $\xi_{Cj}$ indicates whether the random variable $X_j$ is assigned in configuration $C$ within our block-adaptive policy.
    
    \item  {\em Assignment loads}: When job $j \in [n]$ is assigned to machine $C$, the marginal load contribution we incur along any block-value dimension $(B,v)$ is specified by $\ell_{CjBv}$.
    
    \item {\em Cardinality constraints}: At most $\tilde{k}_C$ jobs can be assigned to each machine $C$.

    \item {\em Load constraints}: For any machine $C$, we have a lower bound of $\ln ( 1 / \tilde{p}^{B,C}_{\leq v} )$ on the total load along any block-value dimension $(B,v)$.  
\end{itemize}

The next lemma argues that the \Santa instance we have just constructed is feasible. Noting that the arguments involved are nearly-identical to those of Lemma~\ref{lem:probemax_eps4}, we omit the proof.  

\begin{lemma} \label{lem:probemax_eps4Adap}
The integer program \eqref{linearizeAdap}  admits a  feasible solution. 
\end{lemma}

\subsection{Final block-adaptive policy.} 

As an immediate consequence, we can employ Theorem~\ref{thm:result_santa_claus} to compute a cardinality-feasible solution, where the load constraint along any dimension is violated by a factor of at most $\varphi$, for any fixed $\varphi > 0$. This claim can be formally stated as follows.

\begin{corollary} 
There is an EPTAS for computing a vector $\xi \in \{0,1\}^{\calC \times n}$ satisfying:
\begin{enumerate}
    \item $\sum_{j \in [n]} \xi_{Cj} \leq \tilde{k}_C$ for every configuration $C$.
    
    \item $\sum_{j \in [n]} \ell_{CjBv} \geq (1 - \varphi) \cdot \ln ( 1 / \tilde{p}^{B,C}_{\leq v} )$, for every configuration $C$, block $B$, and value $v \in {\cal V}^{B,C}_{\mathrm{critical}}$.
\end{enumerate}
\end{corollary}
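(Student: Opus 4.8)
The plan is to obtain this corollary as a direct instantiation of Theorem~\ref{thm:result_santa_claus} on the integer program~\eqref{linearizeAdap}. First I would put~\eqref{linearizeAdap} into the normalized shape assumed in \S\ref{sec:santaClausProblem}: for each configuration $C \in \calC$, block $B$, and value $v \in {\cal V}^{B,C}_{\mathrm{critical}}$, divide load constraint~(3) and the coefficients $\ell_{CiBv}$ by $\ln( 1 / \tilde{p}^{B,C}_{\leq v} )$, which is a finite positive number since $\tilde{p}^{B,C}_{\leq v} \in \{ 2\delta^2, \ldots, 1 - \delta^2 \}$; after this rescaling every load lower bound equals $1$. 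One should also discard, for each $C$, any job $i$ some of whose relevant coefficients $\ell_{CiBv}$ is infinite — no $\rho$-feasible solution ever assigns such a job to $C$, so this changes neither feasibility nor $\rho$-feasibility, and it leaves all loads finite as required. The outcome is literally a \Santa feasibility instance: machines indexed by configurations $C \in \calC$, dimensions indexed by block-value pairs $(B,v)$, cardinality bounds $\tilde{k}_C$, and $0/1$ load lower bounds, with jobs allowed to be left unassigned.

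Next I would check that the parameters fed into Theorem~\ref{thm:result_santa_claus} are controlled by $\eps$ alone. The number of machines is $|\calC| \leq 2^{(1/\eps)^{O(1)}}$, as already observed when configurations were introduced. By Assumptions~\ref{assum:SmallSupport} and~\ref{assum:ConstDepth}, $\calT^*$ is a tree of depth $O(1/\eps^3)$ and out-degree $|{\cal V}| = O(1/\eps^2)$, so it has an $\eps$-dependent number of blocks, and since each block contributes at most $|{\cal V}|$ dimensions, the total dimension $D$ depends only on $\eps$. Finally, Lemma~\ref{lem:probemax_eps4Adap} guarantees that the (rescaled) instance admits a $\rho$-feasible solution with $\rho = \rho(\eps)$. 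Applying Theorem~\ref{thm:result_santa_claus} with its error parameter set to $\varphi$ then yields, with probability at least $1/2$, a binary vector $\xi \in \{0,1\}^{\calC \times n}$ that exactly satisfies constraints~(I)--(III), i.e.\ $\sum_{i \in [n]} \xi_{Ci} \leq \tilde{k}_C$ for every $C$ and each variable assigned to at most one configuration, and that $\varphi$-violates each rescaled load constraint, i.e.\ $\sum_{i \in [n]} \ell_{CiBv}\, \xi_{Ci} \geq (1-\varphi)\cdot \ln( 1 / \tilde{p}^{B,C}_{\leq v} )$ for every $C$, $B$, $v \in {\cal V}^{B,C}_{\mathrm{critical}}$. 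The running time is $O( f(\varphi, \rho, m, D)\cdot \poly(|\calI|) )$, which is $t(\eps)\cdot \poly(n)$ because $\rho$, $m = |\calC|$, and $D$ all depend only on $\eps$ and $\varphi$ will later be fixed as a function of $\eps$; and since any candidate $\xi$ can be checked directly against the two stated inequalities, the success probability $1/2$ is amplified to an arbitrarily high constant by independent repetitions, so it may be suppressed from the statement.

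The step requiring the most care is not a genuine difficulty but the bookkeeping of the second paragraph: verifying that the configuration count $|\calC|$, the number of blocks of $\calT^*$, and the feasibility parameter $\rho$ supplied by Lemma~\ref{lem:probemax_eps4Adap} are each a function of $\eps$ only, together with the cosmetic normalization and the infinite-coefficient clean-up needed so that~\eqref{linearizeAdap} matches the formulation of \S\ref{sec:santaClausProblem} verbatim. All the substantive work is already carried by Theorem~\ref{thm:result_santa_claus} and Lemma~\ref{lem:probemax_eps4Adap}, so once these $\eps$-dependence checks are in place the corollary follows immediately.
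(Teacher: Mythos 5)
Your proposal is correct and follows the same route the paper intends: invoke Theorem~\ref{thm:result_santa_claus} on the \Santa instance encoded by~\eqref{linearizeAdap}, using Lemma~\ref{lem:probemax_eps4Adap} for $\rho$-feasibility and noting that $m = |\calC|$, $D$, and $\rho$ all depend only on $\eps$. The paper states this corollary without proof as an ``immediate consequence,'' and your write-up simply spells out the bookkeeping (normalization of load lower bounds to $1$, discarding jobs with infinite coefficients, and amplifying the $1/2$ success probability) that the paper leaves implicit.
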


Now, as we have written an inexact cardinality constraint for each configuration $C$, where the estimate $\tilde{k}_C$ appears in place of the unknown quantity $k_C^*$, the resulting solution $\xi$ cannot be directly translated to a choice of variables for each block, as we may end up with more than $k$ variables on various root-leaf paths.  We therefore create a random block-adaptive policy $\tcalT$, corresponding to a random sparsification of the choices made by $\xi$. Formally, we first draw a random set of indices $R \subseteq [n]$, to which each $j \in [n]$ is independently picked with probability $1 - \phi$, where $\phi = \phi(\eps)$ depends on $\eps$ and nothing more. Out of this set, within each block $B$, we place the collection of variables that were assigned by $\xi$ to a configuration containing $B$, meaning that its resulting set of variables is  $\tcalT_B = \{ j \in R : \sum_{C \ni B} \xi_{Cj} = 1 \}$; variables outside of $R$ are not assigned to any block. 

It is not difficult to verify, via standard Chernoff bounds for independent Bernoulli variables~\cite{Chernoff-52}, that the parameter $\phi$ can be chosen (as a function of $\eps$) to guarantee, with constant probability, at most $k$ variables on every root-leaf path. Moreover, Theorem~\ref{thm:AdapProbeMax} can now be derived through the sufficient near-optimality condition in Lemma~\ref{lem:suff_cond_Adap}, proving that $\tcalT$ is CDF-equivalent to $\calT^*$ along the lines of our analysis for the non-adaptive setting in Section~\ref{subsec:nonadapt_final}.

\section{Extensions.} \label{sec:exxtensions}

In this section, we prove that our EPTAS extends to the \Pandora problem, as well as to a generalization of non-adaptive \ProbeMax where one wishes to maximize the expected sum of the top $r$ selected variables.

\subsection{Equivalence of \Pandora and \Prophets.} \label{sec:eqquivPandProphet}

In what follows, we consider the \Pandora problem, showing how to obtain an EPTAS through a reduction to \Prophets (see Section~\ref{sec:freeOrder}). In this setting, we are given $n$ independent random variables $X_1, \ldots, X_n$, as well as a cost $c_i \in \reals_+$ for probing each $X_i$. Our goal is to find a probing permutation $\sigma$ and a stopping rule $\tau$  to maximize $\expar{ X_{\sigma(\tau)} - \sum_{t \leq \tau } c_{\sigma(t)} }$. In other words, the algorithm wishes to maximize the difference between the selected random variable's value $X_{\sigma(\tau)}$ and the sum of all probing costs $\sum_{t \leq \tau } c_{\sigma(t)} $. When our algorithm is allowed to claim value for the highest random variable ever probed, we end up with the classical Pandora's box problem~\cite{Weitzman-Econ79,KWW-EC16}.

Clearly, \Pandora generalizes the \Prophets problem, as the latter corresponds to the special case of zero probing costs. 
Interestingly, similar to \Prophets, it is not difficult to verify that there exists an optimal policy where the probing permutation $\sigma$ is selected non-adaptively, i.e., without any dependence on the outcomes of the probed random variables. 
Somewhat informally, this claim follows by examining the recursive equations of a natural DP-formulation, and discovering that whenever an optimal policy decides to keep probing, the specific outcomes observed thus far are irrelevant for all future actions, as we can no longer pick any of these values. A  formal proof follows Hill's analysis~\cite{Hill-Journal83}.

Our main result in this context relates between these two problems in the opposite direction, by presenting an approximation-preserving reduction.

\begin{theorem} \label{thm:equivalence}
Given oracle access to an $\alpha$-approximation for the \Prophets problem, there exists a polynomial time $\alpha$-approximation for \Pandora.
\end{theorem}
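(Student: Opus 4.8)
The plan is to exhibit, given a \Pandora instance $(X_1,\dots,X_n,c_1,\dots,c_n)$, a \Prophets instance on new random variables $Y_1,\dots,Y_n$ such that (i) any permutation-plus-stopping-rule for the $Y_i$'s translates back into a permutation-plus-stopping-rule for the original problem with at least the same expected payoff, and (ii) the optimal value of the \Prophets instance is at least the optimal value of \Pandora. Together these give: an $\alpha$-approximate \Prophets solution yields a \Pandora solution of value $\geq \alpha\cdot\mathrm{OPT}_{\text{Prophets}} \geq \alpha\cdot\mathrm{OPT}_{\text{Pandora}}$, which is the claim. The natural candidate for $Y_i$ is Weitzman's ``reservation value'' construction: let $\kappa_i$ be the Weitzman index solving $\E[(X_i-\kappa_i)^+]=c_i$ (handling $c_i$ too large by discarding the box, and $c_i=0$ by $\kappa_i=\infty$, i.e.\ $Y_i\equiv X_i$), and define $Y_i := \min\{X_i,\kappa_i\}$. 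These $Y_i$ are independent, non-negative, and efficiently describable from the input, so they form a legitimate \Prophets instance; I would first record these basic facts, together with the observation from the excerpt (via Hill's argument) that both problems have non-adaptive optimal permutations, so it suffices to reason about fixed permutations.

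The core of the argument is the per-box accounting identity. Fix any permutation $\sigma$; probing boxes in the order $\sigma(1),\sigma(2),\dots$ and using the index $\kappa$, I want to show that running the \emph{optimal stopping rule for the $Y$-instance under $\sigma$} can be simulated on the $X$-instance so that the realized \Pandora payoff equals the realized $Y$-value at the stopping time, \emph{in expectation}. The key step: when the simulated policy is about to probe box $\sigma(t)$, it first pays $c_{\sigma(t)}$; conditioned on having probed, $\E[(X_{\sigma(t)}-\kappa_{\sigma(t)})^+]=c_{\sigma(t)}$, so the expected net contribution of box $\sigma(t)$ to $X_{\sigma(\tau)}-\sum_{s\le\tau}c_{\sigma(s)}$ is captured by replacing $X_{\sigma(t)}$ with $\min\{X_{\sigma(t)},\kappa_{\sigma(t)}\}=Y_{\sigma(t)}$ when that box ends up being selected, and by $0$ (cost cancels expected overshoot) when it is merely opened and passed over. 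Making this rigorous requires a small coupling / telescoping computation over the decision tree: one writes $\E[X_{\sigma(\tau)}-\sum_{s\le\tau}c_{\sigma(s)}]$, splits $X_{\sigma(\tau)} = Y_{\sigma(\tau)} + (X_{\sigma(\tau)}-\kappa_{\sigma(\tau)})^+$ on the event $\kappa_{\sigma(\tau)}\le X_{\sigma(\tau)}$ (and $X=Y$ otherwise), and matches $\sum_{s\le\tau}c_{\sigma(s)} = \sum_{s\le\tau}\E[(X_{\sigma(s)}-\kappa_{\sigma(s)})^+]$ term by term against the overshoot contributions, using that the stopping decision at each step depends only on the $Y$-values (equivalently, on $\max\{\text{threshold behaviour},\dots\}$) — this is exactly where one must be careful that the stopping rule used is measurable with respect to the right filtration. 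This yields $V_{\text{Pandora}}(\sigma,\text{simulated }\tau) \ge V_Y(\sigma,\tau)$, and taking $\tau$ optimal for the $Y$-instance gives direction (i).

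For direction (ii), I would run the same identity in reverse starting from the \Pandora-optimal permutation $\sigma^*$ and its optimal adaptive (= non-adaptive) policy: its expected payoff equals the expected $Y$-value of the induced stopping time under $\sigma^*$, which is at most $V(\sigma^*)$ for the $Y$-instance, hence at most $\mathrm{OPT}_{\text{Prophets}}$. Combining, $\mathrm{OPT}_{\text{Pandora}} \le \mathrm{OPT}_{\text{Prophets}}$, and chaining with (i) finishes the proof; the reduction is clearly polynomial time since each $\kappa_i$ (and hence the distribution of $Y_i$ on the discretized support) is computable from $X_i$ and $c_i$. The main obstacle I anticipate is purely bookkeeping: carefully setting up the coupling so that ``the stopping rule is the same on both sides'' is literally true (boxes with $c_i$ so large that $\E[(X_i-0)^+]<c_i$, i.e.\ never worth opening, must be handled as never-probed; ties at the threshold $\kappa_i$ must be broken consistently), and verifying the telescoping cancellation $\sum_{s\le\tau} c_{\sigma(s)} \leftrightarrow \sum_{s\le\tau}(X_{\sigma(s)}-\kappa_{\sigma(s)})^+$ holds in conditional expectation at \emph{every} node of the decision tree rather than only in aggregate — an interchange-of-expectation / optional-stopping style argument. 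Everything else (independence of the $Y_i$, non-negativity, efficiency, reduction to fixed permutations) is routine given the results already cited in the excerpt.
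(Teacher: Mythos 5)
Your proposal is correct and takes essentially the same route as the paper: both use Weitzman's index $\kappa_i$ with $\E[(X_i-\kappa_i)^+]=c_i$ to define $Y_i=\min\{X_i,\kappa_i\}$, both prove the two directions ($\OPT_{\text{Pandora}}\le \OPT_{\text{Prophets}}$ on the $Y$-instance, and any $Y$-permutation converts back losslessly) via the observation that the relevant thresholds satisfy $t_i\le\kappa_i$ so that stopping decisions coincide under $X$ and $Y$, and both close the gap via the per-box expectation identity equating paid costs with expected overshoot $(X_i-\kappa_i)^+$. The only caveat is a small descriptive imprecision in how you phrase the cancellation for boxes that are opened but passed (the overshoot there is identically zero, and the cost is absorbed in expectation over whether the box is terminal or not, rather than realization-by-realization), but you explicitly flag this as requiring a careful optional-stopping-style computation and it does not affect the soundness of the plan.
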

\proof{Proof.}
Our proof exploits Weitzman's index for the classical Pandora's box problem~\cite{Weitzman-Econ79}, letting $\kappa_i$  be the unique solution to $\expar{(X_i-\kappa_i)^+}=c_i$.\footnote{Here, we assume that all random variables are  continuous. Standard arguments imply that any random variable  can be approximated to  arbitrary precision by a continuous random variable. In the corner case of  $c_i=0$, we take $\kappa_i = \infty$.} With respect to this index, we define the random variables $Y_i = \min\{X_i, \kappa_i\}$. 

Now consider the permutation $\sigma$ returned by our $\alpha$-approximate \Prophets oracle, when applied to the random variables $Y_1, \ldots, Y_n$. Using $\sigma^*$ to denote an optimal permutation for this instance, we have $V_{{Y}}(\sigma) \geq \alpha \cdot V_{{Y}}(\sigma^*)$. In Claim~\ref{claim:PandCommitUpper}, we show that $V_{{Y}}(\sigma^*) \geq \OPT({X},{c})$, where $\OPT({X},{c})$ designates the expected value of an optimal solution to \Pandora. 

\begin{claim} \label{claim:PandCommitUpper}
$V_{{Y}}(\sigma^*) \geq \OPT({X},{c})$.
\end{claim}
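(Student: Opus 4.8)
The plan is to exhibit a single permutation $\pi$ for which the optimal stopping value of the capped sequence $Y_{\pi(1)}, \ldots, Y_{\pi(n)}$ already reaches $\OPT(\mathbf{X},\mathbf{c})$; since $\sigma^*$ maximizes $V_{\mathbf{Y}}(\cdot)$, this immediately gives $V_{\mathbf{Y}}(\sigma^*) \geq V_{\mathbf{Y}}(\pi) \geq \OPT(\mathbf{X},\mathbf{c})$. For $\pi$ I would take the probing order used by an optimal \Pandora policy, which we may assume to be non-adaptive, as noted just after the problem description. Fixing this order, $\OPT(\mathbf{X},\mathbf{c})$ is characterized by the backward recursion $W_{n+1}=0$ and $W_t = \max\{\, 0,\ -c_{\pi(t)} + \E[\max\{X_{\pi(t)}, W_{t+1}\}]\,\}$, with $\OPT(\mathbf{X},\mathbf{c}) = W_1$: at the decision point just before box $\pi(t)$ — all earlier probing costs being sunk — one either stops with no further value, or pays $c_{\pi(t)}$, observes $X_{\pi(t)}$, and optimally keeps $\max\{X_{\pi(t)}, W_{t+1}\}$. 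Note that each $W_t$ is a deterministic number, since discarded boxes cannot be revisited. On the prophet side, by the same reasoning as in \S\ref{sec:freeOrder} the optimal stopping value of $Y_{\pi(1)}, \ldots, Y_{\pi(n)}$ equals $R_1$, where $R_{n+1}=0$ and $R_t = \E[\max\{Y_{\pi(t)}, R_{t+1}\}]$.

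The core of the argument is then a downward induction proving $R_t \geq W_t$ for every $t$, with trivial base case $R_{n+1}=W_{n+1}=0$. Assuming $R_{t+1}\geq W_{t+1}$ (both being deterministic and nonnegative), monotonicity of the recursion yields $R_t \geq \E[\max\{Y_{\pi(t)}, W_{t+1}\}] \geq W_{t+1} \geq 0$, which already dominates the ``stop'' option defining $W_t$. To dominate the ``continue'' option it suffices to show $\E[\max\{X_{\pi(t)}, W_{t+1}\}] - \E[\max\{Y_{\pi(t)}, W_{t+1}\}] \leq c_{\pi(t)}$, where $Y_{\pi(t)} = \min\{X_{\pi(t)}, \kappa_{\pi(t)}\}$. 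This follows from the pointwise inequality $\max\{x, w\} - \max\{\min\{x,\kappa_{\pi(t)}\}, w\} \leq (x - \kappa_{\pi(t)})^+$, valid for every real $x$ and every constant $w$ — a short case analysis according to whether $x \leq \kappa_{\pi(t)}$ or $x > \kappa_{\pi(t)}$ and where $w$ falls relative to $\kappa_{\pi(t)}$ and $x$ — combined with Weitzman's defining identity $\E[(X_{\pi(t)} - \kappa_{\pi(t)})^+] = c_{\pi(t)}$ and the fact that $W_{t+1}$ is constant. Hence $R_t \geq -c_{\pi(t)} + \E[\max\{X_{\pi(t)}, W_{t+1}\}]$ as well, so $R_t \geq W_t$.

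Evaluating the induction at $t=1$ gives $V_{\mathbf{Y}}(\pi) = R_1 \geq W_1 = \OPT(\mathbf{X},\mathbf{c})$, which completes the proof. I expect the only delicate point to be the bookkeeping around the commitment structure: making sure the recursion for $W_t$ correctly reflects that a discarded box cannot be reopened, and that the $W_t$'s are genuinely deterministic — this is what makes the ``max with a constant'' manipulations, and in particular the reduction of the per-step loss term to $(X_{\pi(t)}-\kappa_{\pi(t)})^+$, legitimate. The pointwise inequality itself is routine, and one should also just confirm that the clipping-at-zero convention implicit in $Y_i$ is consistent with the prophet recursion of \S\ref{sec:freeOrder}, which it is, since $X_i \geq 0$ forces $R_t \geq 0$ throughout.
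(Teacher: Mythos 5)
Your argument is correct and reaches the same conclusion as the paper, but by a genuinely different route. The paper's proof first establishes the auxiliary fact that the optimal \Pandora stopping thresholds satisfy $t_i \leq \kappa_i$, and then builds an explicit stopping rule on the $Y$-sequence (stop at the first $Y_i \geq t_i$) whose value is shown to equal $\OPT(\mathbf{X},\mathbf{c})$ \emph{exactly}: the policy stops at the same step as the \Pandora optimum, and the loss from capping $X_i$ to $Y_i$ when $X_i > \kappa_i$ cancels the cost $c_i$ in expectation. Your proof instead runs a downward induction $R_t \geq W_t$, with the key estimate being the unconditional pointwise inequality
\[
\max\{x,w\} - \max\{\min\{x,\kappa\},w\} \leq (x-\kappa)^+ ,
\]
which, after taking expectations and invoking Weitzman's identity $\E[(X_i-\kappa_i)^+]=c_i$, bounds the per-step loss by the per-step cost. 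This sidesteps the $t_i \leq \kappa_i$ lemma entirely, since your inequality holds for every $w$ rather than only for $w \leq \kappa$. The trade-off is that your argument proves only the inequality you need, while the paper's construction additionally shows the inequality is tight (the prophet policy on the $Y$'s reproduces the \Pandora optimum step for step), which is mildly informative but not required for Claim~\ref{claim:PandCommitUpper}. One small presentational point: your recursion $W_t = \max\{0,\,-c_{\pi(t)}+\E[\max\{X_{\pi(t)},W_{t+1}\}]\}$ implicitly allows the policy to quit before probing $\pi(t)$ with zero reward; whether or not the problem formally permits $\tau=0$, including the $\max\{\cdot,0\}$ only makes $W_1$ an upper bound on $\OPT(\mathbf{X},\mathbf{c})$, so the direction of your conclusion is unaffected. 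Otherwise the bookkeeping around $W_{t+1}$ being a deterministic constant, which is what legitimizes the pointwise manipulation, is handled correctly.
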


\proof{Proof.}
Consider the optimal permutation for the \Pandora instance, attaining an objective value of $\OPT({X},{c})$. By reindexing, we can assume without loss of generality that this permutation is $X_1, \ldots, X_n$. Moreover, similar to the \Prophets problem, given the latter probing order, it is easy to verify that the optimal stopping rule is obtained by computing a non-increasing sequence of thresholds $t_1 \geq \cdots \geq t_n$, where $t_i = \OPT((X_{i+1},\ldots,X_n), (c_{i+1},\ldots, c_n))$, stopping at the first $X_i \geq t_i$. 

We first observe that $t_i \leq \kappa_i$ for all $i \in [n]$. Otherwise, $t_i>\kappa_i$, and upon reaching $X_i$ the difference in the optimal policy's utility between probing and skipping $X_i$ is given by 
\begin{eqnarray*}
\ex{-c_i + X_i\cdot \one_{X_i\geq t_i} + t_i \cdot\one_{X_i < t_i} } - t_i &  = & -c_i + \ex{ (X_i - t_i ) \cdot\one_{X_i \geq t_i} } \\
& = & -c_i +\ex{(X_i-t_i)^+ } \\
& \leq & -c_i +\ex{(X_i-\kappa_i)^+ } \\
& = & 0 \ .
\end{eqnarray*}
Thus, the optimal policy can only perform better by skipping $X_i$.

Having shown that $t_i \leq \kappa_i$, consider a policy for the \Prophets problem with respect to the $Y$-variables, that probes according to the sequence $Y_1, \ldots, Y_n$ and selects the first $Y_i \geq t_i$. We claim that the expected value of this policy is exactly $\OPT({X},{c})$, which implies $V_{{Y}}(\sigma^*) \geq \OPT({X},{c})$. To this end, we first argue that this policy stops at exactly the same index $i$ as the optimal policy for \Pandora with respect to the $X$-variables, since $t_i \leq \kappa_i$ implies that $X_i \geq t_i$ if and only if $Y_i \geq t_i$. Moreover, the expected utility is identical since, although for \Pandora we gain an additional $X_i-\kappa_i$ value over \Prophets whenever $X_i>\kappa_i$, we concurrently pay an extra cost of $c_i$. These quantities are equal in expectation, since $\expar{(X_i-\kappa_i)^+}=c_i$, and hence the same expected utility is obtained for both problems.
\halmos
\endproof

To conclude the proof, we explain how to convert the permutation $\sigma$ to a solution for the \Pandora problem with expected utility at least $V_{{Y}}(\sigma)$. 

\begin{claim} \label{claim:PandCommitLower}
The permutation $\sigma$ can be converted to a solution for the \Pandora problem with expected utility at least $V_{{Y}}(\sigma)$.
\end{claim}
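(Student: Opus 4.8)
The plan is to run, on the \Pandora instance, the very same threshold-based stopping rule that is optimal for the \Prophets instance on the capped variables $Y_1,\dots,Y_n$. After reindexing so that the permutation $\sigma$ returned by the oracle is the identity, let $R_{n+1}=0$ and $R_t=\E[\max\{Y_t,R_{t+1}\}]$ be the continuation values of the optimal \Prophets stopping rule on $Y_1,\dots,Y_n$ in this order, so that $V_{\mathbf{Y}}(\sigma)=R_1$ (this is precisely the dynamic program recalled in \S\ref{sec:prophProbDiscrip}). Our candidate \Pandora policy opens the boxes in the order $1,2,\dots$ and, upon observing $X_t$, stops and claims $X_t$ if and only if $X_t\ge R_{t+1}$. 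Since $R_{n+1}=0\le X_n$, this rule necessarily fires by box $n$, so the policy is well defined with stopping time $\tau\le n$.

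First I would set up the value recursion of this policy. Write $U_t$ for its expected utility conditioned on being about to open box $t$ (with nothing yet committed and no future costs yet paid); note that $U_1$ is exactly the unconditional expected utility $\E[X_\tau-\sum_{s\le\tau}c_s]$ of the policy, since box $1$ is always reached. Decomposing on the event $\{X_t\ge R_{t+1}\}$ gives, with the convention $U_{n+1}=0$,
\[
U_t \;=\; -c_t \;+\; \E\!\left[X_t\,\mathbf{1}_{X_t\ge R_{t+1}}\right] \;+\; \Pr[X_t< R_{t+1}]\cdot U_{t+1}.
\]
The goal $U_1\ge R_1$ is then proved by downward induction, showing $U_t\ge R_t$ for all $t$ (the base case $U_{n+1}=0=R_{n+1}$ is vacuous). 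In the inductive step, substituting $U_{t+1}\ge R_{t+1}$ and recombining $\E[X_t\mathbf{1}_{X_t\ge R_{t+1}}]+\Pr[X_t< R_{t+1}]\cdot R_{t+1}=\E[\max\{X_t,R_{t+1}\}]$ yields $U_t\ge -c_t+\E[\max\{X_t,R_{t+1}\}]$ (here we use that the relevant expectations are finite, which holds under the standing assumptions).

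The one remaining ingredient — and the only place where the definition of the cap enters — is the ``fair cap'' inequality: for every constant $\theta\ge 0$ and every $i$, $\E[\max\{X_i,\theta\}]-\E[\max\{Y_i,\theta\}]\le c_i$. This follows from a pointwise case analysis of $\max\{X_i,\theta\}-\max\{\min\{X_i,\kappa_i\},\theta\}$: it is $0$ when $X_i\le\kappa_i$, equals $X_i-\kappa_i$ when $\theta\le\kappa_i<X_i$, and equals $X_i-\theta<X_i-\kappa_i$ when $\kappa_i<\theta<X_i$; in every case the quantity lies in $[0,(X_i-\kappa_i)^+]$, so taking expectations and using $\E[(X_i-\kappa_i)^+]=c_i$ gives the bound (the case $c_i=0$, $\kappa_i=\infty$ is immediate since then $Y_i=X_i$). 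Applying this with $\theta=R_{t+1}\ge 0$ gives $-c_t+\E[\max\{X_t,R_{t+1}\}]\ge \E[\max\{Y_t,R_{t+1}\}]=R_t$, which closes the induction. Hence $U_1\ge R_1=V_{\mathbf{Y}}(\sigma)$, i.e., the policy we constructed is a valid \Pandora solution with expected utility at least $V_{\mathbf{Y}}(\sigma)$.

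The main obstacle to watch is that the \Pandora value recursion above is \emph{not} the clean $U_t=-c_t+\E[\max\{X_t,U_{t+1}\}]$: under commitment, once $X_t<R_{t+1}$ we are forced to continue and incur all future costs regardless of how small $X_t$ is, which is why $U_{t+1}$ appears multiplied by $\Pr[X_t<R_{t+1}]$ rather than inside a maximum. Reconciling this asymmetry with the clean \Prophets recursion is exactly what the fair-cap inequality accomplishes, so the crux of the argument is combining the inductive substitution with that inequality in the right order.
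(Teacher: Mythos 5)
Your proof is correct and takes a cleaner route than the paper's. The paper first preprocesses the permutation $\sigma$ by discarding every variable $i$ whose threshold $t_i$ exceeds $\kappa_i$ (arguing this cannot decrease $V_{\mathbf{Y}}(\sigma)$ because the capped variable $Y_i \le \kappa_i < t_i$ can never trigger a stop), and then couples the two policies: after preprocessing, $t_i \le \kappa_i$ forces the $X$-policy and $Y$-policy to stop at the \emph{same} index, and the excess value $X_\tau - Y_\tau = (X_\tau-\kappa_\tau)^+$ is shown to equal the total probing cost $\sum_{j\le\tau} c_j$ in expectation using the fair-cap identity $\E[(X_i-\kappa_i)^+]=c_i$ and the fact that $X_i>\kappa_i$ forces a stop at $i$. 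You instead bypass both the preprocessing and the coupling by packaging the fair-cap identity as the pointwise inequality $\E[\max\{X_i,\theta\}] - \E[\max\{Y_i,\theta\}] \le c_i$ for every constant $\theta \ge 0$, and then running a direct backward induction on the continuation values $U_t$ versus $R_t$; since your inequality holds for \emph{all} $\theta$, no threshold has to satisfy $t_i\le\kappa_i$, so the two policies need not stop simultaneously and no variables need to be dropped. The paper's argument gives a tighter conclusion (an exact equality of expected utilities along the coupled stopping time) at the cost of extra bookkeeping, while your argument is shorter, more modular (one clean lemma plus one induction), and more robust to the threshold structure; both are valid proofs of the claim.
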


\proof{Proof.}
Again by reindexing, we can assume without loss of generality that $\sigma$ corresponds to the permutation $Y_1, \ldots, Y_n$. Moreover, given this order, the optimal stopping rule consists of computing a non-increasing sequence of thresholds $t_1 \geq \cdots \geq t_n$, stopping at the first $Y_i \geq t_i$. Finally, we can assume that $t_i \leq \kappa_i$, since otherwise, the optimal policy will never stop at $Y_i = \min\{X_i, \kappa_i\}$, meaning that such variables can be ignored (skipped) while preserving the utility $V_{{Y}}(\sigma)$.

Now consider a policy for the \Pandora problem, where we probe according to the permutation $X_1, \ldots, X_n$, and select the first $X_i \geq t_i$. We claim that this policy has an expected utility of at least $V_{{Y}}(\sigma)$. First, note that this policy stops at exactly the same index $i$ as our policy for \Prophets with respect to the $Y$-variables, since $t_i \leq \kappa_i$ implies that $X_i \geq t_i$ if and only if $Y_i \geq t_i$, exactly as in the proof of Claim~\ref{claim:PandCommitUpper}. To argue about costs, note that whenever $X_i \geq \kappa_i$, although both policies stop at $i$, the difference in the obtained values is $X_i - Y_i = X_i - \kappa_i$ whereas the cost difference is $c_i$.  Since $c_i =\expar{(X_i - \kappa_i)^+} = \expar{(X_i - \kappa_i)\cdot \one_{X_i \geq \kappa_i}}$, the same expected utility is obtained for both problems.
\halmos
\endproof
This completes the proof of Theorem~\ref{thm:equivalence}.
\halmos
\endproof

\subsection{Selecting multiple elements.} \label{sec:mmultItems}

In what follows, we explain how our algorithmic framework can be adapted to obtain an EPTAS for the non-adaptive  \ProbeTopr problem, which captures non-adaptive \ProbeMax for $r=1$. In this generalization, the goal is to non-adaptively select a set $S\subseteq [n]$ of at most $k$ random variables that jointly maximize the expected sum of their top $r$ values, to which we refer as ${\cal M}_r(S)$. At a high-level, our approach differentiates between two parametric regimes. When $r \leq 1/\eps^3$, we show that single-variable \ProbeMax techniques can be extended to multiple-variable selection, primarily by considering finer discretizations. In the complementary regime of $r > 1/\eps^3$, we argue that due to falling into a ``large-budget'' setting, concentration bounds can be employed to round a natural LP-relaxation with only an $O(\eps)$-factor loss in optimality.

\begin{lemma}
When $r \leq 1/\eps^3$, there exists an EPTAS for non-adaptive  \ProbeTopr.
\end{lemma}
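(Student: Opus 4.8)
The plan is to push the non-adaptive \ProbeMax reduction of \S\ref{sec:NAProbeMax} through to the top-$r$ objective, exploiting that $r$ is bounded in terms of $\eps$. Write $\OPT_r=\E[\mathcal{M}_r(S^*)]$, and for a value $v$ let $N_{\geq v}(S)=\sum_{i\in S}\mathbf{1}[X_i\geq v]$, a Poisson--binomial variable with mean $\mu_v(S):=\sum_{i\in S}\prpar{X_i\geq v}$. Once the variables are supported on a grid of spacing $\eta$, the layer-cake identity gives $\mathcal{M}_r(S)=\eta\sum_{v>0}\min\{r,N_{\geq v}(S)\}$, so it suffices to approximately match, for every grid value $v$, the quantity $\E[\min\{r,N_{\geq v}(S)\}]=\sum_{j=1}^r\prpar{N_{\geq v}(S)\geq j}$. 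First I would run the standard preprocessing of Chen et al.~\cite{CHLLLL-NIPS16} --- bootstrap a constant-factor estimate $\calE$ of $\OPT_r$ via the greedy algorithm on the monotone submodular map $S\mapsto\E[\mathcal{M}_r(S)]$, then discretize and truncate each $X_i$ --- but with the finer mesh $\eta=\eps\calE/r$; since $r\leq 1/\eps^3$, this still leaves only $|\calV|=\poly(1/\eps)$ grid values while losing an $O(\eps)$ fraction of $\OPT_r$.

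Next I would classify grid values by the sizes of $\E[\min\{r,N_{\geq v}(S^*)\}]$ and $\mu_v^*:=\mu_v(S^*)$ into three regimes. (i) If $\E[\min\{r,N_{\geq v}(S^*)\}]<\eps^2 r$, drop the value: its total contribution to $\OPT_r$ is negligible. (ii) If $\mu_v^*>Cr\ln(1/\eps)$, a Chernoff bound gives $\E[\min\{r,N_{\geq v}(S^*)\}]\geq(1-\eps)r$, and imposing the single linear covering constraint $\mu_v(S)\geq Cr\ln(1/\eps)$ --- which $S^*$ satisfies --- forces the same for $S$. (iii) Otherwise $\mu_v^*=\poly(1/\eps)$ is bounded (exactly where $r\leq 1/\eps^3$ is used), so $S^*$ has at most $\mu_v^*/\delta=\poly(1/\eps)$ elements that are ``heavy at $v$'' (meaning $\prpar{X_i\geq v}\geq\delta$), for a choice $\delta=\poly(\eps)$. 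For the remaining ``light'' elements, a sharpened Poisson approximation (total-variation distance $O(\delta)$ to a Poisson of the same mean) shows that, once the contribution of the heavy-at-$v$ elements is fixed, $\E[\min\{r,N_{\geq v}(S)\}]$ is --- up to a total error of $O(\delta\eta|\calV|r)=O(\eps)\OPT_r$ summed over all values --- a fixed monotone function $G_v$ of the single additive quantity $\sum_{i\in S,\ \prpar{X_i\geq v}<\delta}\prpar{X_i\geq v}$, so matching the guessed target $t_v:=\E[\min\{r,N_{\geq v}(S^*)\}]$ reduces to the single linear covering constraint $\sum_{i\in S,\ \prpar{X_i\geq v}<\delta}\prpar{X_i\geq v}\geq G_v^{-1}((1-\eps)t_v)$. (Without the Poisson step one would instead track the $r$ additive power sums $\sum_{i\in S}(\prpar{X_i\geq v}/\prpar{X_i<v})^m$, $m\leq r-1$, which pin down $\prpar{N_{\geq v}(S)=\ell}$ via Newton's identities --- a viable but messier route that needs a two-sided version of Theorem~\ref{thm:result_santa_claus}.)

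It remains to handle the few heavy elements. Heaviness is monotone in $v$, so every element heavy at some regime-(iii) value is heavy at the smallest such value $v_{\mathrm{mod}}$; since $v_{\mathrm{mod}}$ is itself in regime (iii), there are only $\poly(1/\eps)$ such ``special'' elements in $S^*$. I would then guess, with only $t(\eps)$ guesses total, the regime of every grid value, the targets $t_v$ (to additive precision $\eps r/|\calV|$), and the number of special elements of $S^*$ of each ``rounded type'' (the distribution of $X_i$ over $\calV$ rounded to granularity $\poly(\eps)$; there are $2^{\poly(1/\eps)}$ types, each used at most $\poly(1/\eps)$ times). These guesses pin down the heavy part of every $N_{\geq v}(S)$ and hence all the $G_v^{-1}$ thresholds. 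I would represent the special elements by $\poly(1/\eps)$ unit-capacity machines (one per guessed special slot, with job--machine edges restricted to variables of the matching rounded type) and the rest by one unbounded machine, with one load dimension per regime-(ii) or regime-(iii) value; this is a \Santa instance with a number of machines and of dimensions that is constant in $\eps$, and $S^*$ certifies $\rho$-feasibility with $\rho=\poly(1/\eps)$. Applying Theorem~\ref{thm:result_santa_claus} and converting the resulting $\eps$-violated assignment as in \S\ref{subsec:nonadapt_final} --- the sufficiency step being an analogue of Lemma~\ref{lem:suff_cond} obtained by summing the per-value guarantees --- yields a $k$-element set $\tilde S$ with $\E[\mathcal{M}_r(\tilde S)]\geq(1-O(\eps))\OPT_r$.

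The main obstacle is regime (iii): making the Poissonization error telescope to only $O(\eps)\OPT_r$ over the $\poly(1/\eps)$ grid values while simultaneously keeping the number of heavy/special elements --- hence of machines --- and the number of load dimensions all $\poly(1/\eps)$, so that the ``constant $m,D$'' hypothesis of Theorem~\ref{thm:result_santa_claus} is satisfied; this balance is precisely what $r\leq 1/\eps^3$ buys. A secondary nuisance is the bookkeeping for special elements, which are light at some values and heavy at others and therefore enter several load dimensions at once.
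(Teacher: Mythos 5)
Your approach is genuinely different from the paper's, which is organized around a case split on the magnitude of $k$. For $k>1/\eps^4$ (so $k > r/\eps$), the paper \emph{reduces Top-$r$ to the single-variable case}: it randomly partitions $X_1,\ldots,X_n$ into $r/\eps$ parts, solves one instance of ordinary non-adaptive \ProbeMax (Theorem~\ref{thm:NAProbeMax}) per part with budget $\frac{\eps}{r}k$, and uses an occupancy/concentration argument to show the union of solutions has near-optimal top-$r$ sum. For $k<1/\eps^4$, since $|S^*|\le k$ is now a function of $\eps$ only, the paper simply guesses (to granularity $\poly(\eps)$) the per-value CDF of each of the at most $1/\eps^4$ variables in $S^*$ and matches them via a \Santa reduction analogous to \S\ref{subsec:probemax_santa}. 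You instead avoid any split on $k$ and go through the layer-cake identity $\mathcal{M}_r(S)=\eta\sum_v\min\{r,N_{\geq v}(S)\}$, a three-way regime classification, and a Poissonization argument for the light elements, which is considerably heavier machinery.

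The substantive problem with your plan is that the $\rho$-feasibility hypothesis of Theorem~\ref{thm:result_santa_claus} fails for your regime-(ii) dimensions when $k$ is large. After scaling, that dimension's constraint becomes $\sum_i \frac{p_{iv}}{Cr\ln(1/\eps)}x_i \geq 1$, and the certificate you propose is (a residual of) the indicator of $S^*$. Its load on that dimension is $\mu_v(S^*)/(Cr\ln(1/\eps))$, which can be as large as $\Theta(k/(r\ln(1/\eps)))$ — not bounded by any function of $\eps$ alone, since $k$ may be $\Theta(n)$. In the paper's $r=1$ reduction this trap is avoided because the load is the logarithm $\ln\bigl(1/\prpar{X_i\le v}\bigr)$ of a CDF value that is kept in $[\Theta(\eps^2),1-\eps^2]$ by the truncation at $v_{\mathrm{heavy}}$, so the total log-load on the certificate is $O(\ln(1/\eps))$; the paper's Case~2 is safe because $|S^*|\le k \le 1/\eps^4$ trivially caps every load; and Case~1 never creates such a constraint because it reduces to $r=1$. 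Your linear load $\sum_i p_{iv}$ has no such cap, and dropping regime-(ii) constraints altogether does not work either, since the remaining regime-(iii) constraints only control $\E[\min\{r,N_{\geq v}(S)\}]$ near the boundary value $v_0$ and say nothing about smaller $v$ in regime (ii), where the contribution to $\OPT_r$ can be a constant fraction. Without some replacement idea — e.g. the paper's random-partitioning reduction, or a careful truncation of the certificate that works simultaneously across all regime-(ii) values — this is a real gap rather than a detail to be chased down.

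Two smaller points. Your ``unbounded machine'' must still carry the global budget constraint $|S|\le k$ (i.e.\ capacity $k$ minus the number of special slots), or the output can be infeasible. And your regime-(i) count is fine, but the per-value Poisson error bound you quote, $O(\delta)$, relies on the Barbour--Hall refinement $\mathrm{TV}\le\min(1,1/\lambda)\sum_i p_i^2$ rather than the raw Le Cam bound $\sum_i p_i^2\le\delta\mu_v$; without that refinement you pick up an extra $\mu_v=\Theta(r\ln(1/\eps))$ factor, which is still absorbable by shrinking $\delta$ but should be stated.
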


\proof{Proof overview.}
We consider two cases depending on the magnitude of $k$.

{\em Case 1: $k > 1/\eps^4$}. Noting that $k > r / \eps$,  we reduce  \ProbeTopr to its single-variable counterpart, non-adaptive \ProbeMax, for which Theorem~\ref{thm:NAProbeMax} provides an EPTAS. The idea is to randomly partition the variables $X_1, \ldots, X_n$ into $r/\eps$ parts $P_1, \ldots, P_{r/\eps}$, by independently and uniformly assigning each variable to one of these parts. Out of each part, we select a subset of size $\frac{ \eps }{ r } k$ to maximize the expected value of its maximum via our EPTAS for non-adaptive \ProbeMax. Letting $S_1, \ldots, S_{r/\eps}$ be the chosen subsets, basic occupancy-related concentration bounds show that $\calM_r(S_1 \cup \cdots \cup S_{r/\eps}) = (1-O(\eps)) \cdot \calM_r(S^*)$.
    
{\em Case 2: $k < 1/\eps^4$}. Due to having an $\eps$-dependent number of random variables to select, our algorithm can approximately guess  the distribution of each random variable with respect to the optimal set $S^*$. Formally, along the lines of Section~\ref{sec:NAPrelims}, we make the following assumptions without loss of generality: 
\begin{enumerate} 
    \item The inverse accuracy level $1 / \eps $ is an integer.
    
    \item An estimate ${\cal E} \in [(1 - \eps) \cdot \expar{ {\cal M}_r(S^*) }, \expar{ {\cal M}_r(S^*) }]$ for the optimal expected value ${\cal M}_r(S^*)$ is known in advance.
    
    \item The variables $X_1, \ldots, X_n$ are defined over the same support, ${\cal V} = \big\{ 0, \frac{\eps}{r} {\cal E}, \frac{2\eps}{r} {\cal E}, \ldots, \frac{\cal E}{r \eps } \big\}$.
\end{enumerate}
Given these assumptions, we guess the probability $\pr{X_i \leq v}$ for every unknown variable $X_i \in S^*$ and every $v \in \calV$. Formally, we guess estimates $\tilde{p}^{(i)}_{\leq v} \in \{2 \delta^2, 3 \delta^2, \ldots, 1 - \delta^2 \}$ for some $\delta= \delta(\eps)$ that satisfy $\prpar{ X_i \leq v } \leq \tilde{p}^{(i)}_{\leq v} \leq \prpar{ X_i \leq v } + 2\delta^2$. These estimates can be efficiently enumerated over, since $k$, $|\calV|$, and $\delta$ only depend on $\eps$. By exploiting a reduction to \Santa analogous to the one in Section~\ref{subsec:probemax_santa}, we compute a subset  $\tilde{S} \subseteq [n]$ of $k$ random variables, say $Y_1, \ldots, Y_k$, such that each $Y_i$ satisfies $\prpar{ Y_i \leq v } \leq (1 + \delta^2) \cdot \tilde{p}^{(i)}_{\leq v} + \delta^2$, for every $v \in \calV$. Finally, we establish the performance guarantee of $\tilde{S}$, showing that the latter inequalities are sufficient to argue that $\expar{ {\cal M}_r(\tilde{S}) } \geq (1 - \eps) \cdot \expar{ {\cal M}_r(S^*) }$, for a suitable choice of $\delta= \delta(\eps)$; the arguments are very similar to our analysis in Section~\ref{subsec:probemax_sufficient}.
\halmos
\endproof

\begin{lemma} \label{lem:TopRNonAdapProbe}
When $r > 1/\eps^3$, there exists an EPTAS for non-adaptive  \ProbeTopr.
\end{lemma}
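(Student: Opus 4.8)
The plan is to reduce this large-$r$ regime to a single explicitly-solvable linear program and then round its optimum using concentration, the point being that for $r>1/\eps^3$ the relevant Chernoff bounds fail with probability at most $e^{-\Omega(1/\eps)}\ll\eps$. Throughout I would assume, via standard preprocessing that loses an $O(\eps)$ factor (as in \S\ref{sec:NAPrelims}), that $1/\eps$ is an integer, that an estimate $\mathcal{E}\in[(1-\eps)\OPT,\OPT]$ is at hand, that every $X_i$ is truncated at $\poly(1/\eps)\cdot\mathcal{E}$, and that all CDFs are continuous; I would also dispose of the trivial case $k<r$ (where $\calM_r(S)=\sum_{i\in S}X_i$ for every feasible $S$), so that $k\ge r>1/\eps^3$ from now on.

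The backbone is the classical identity $\calM_r(S)=\min_{\theta\ge0}\bigl(r\theta+\sum_{i\in S}(X_i-\theta)^+\bigr)$, which gives, for any $x\in[0,1]^n$, the concave function $g(x):=\min_{\theta\ge0}\bigl(r\theta+\sum_{i\in[n]}\E[(X_i-\theta)^+]\,x_i\bigr)$ satisfying $g(\mathbf{1}_S)\ge\E[\calM_r(S)]$ by pulling $\min$ out of the expectation. First I would show that only $\theta$ in a range $[0,\poly(1/\eps)\mathcal{E}]$, discretized to a set $\Theta$ of $\poly(1/\eps)$ multiplicatively-spaced values, is relevant, so that $\max\{g_\Theta(x):\sum_i x_i\le k,\ x\in[0,1]^n\}$ becomes an ordinary LP with $n+1$ variables and only $|\Theta|+1=\poly(1/\eps)$ constraints besides the box constraints. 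A \emph{basic} optimum $x^*$ of this LP therefore has at most $\poly(1/\eps)$ fractional coordinates, and one checks $g_\Theta(x^*)\ge\OPT$; letting $\theta^\circ$ be the minimizer of the continuous $\min_{\theta\ge0}\bigl(r\theta+\sum_i\E[(X_i-\theta)^+]x^*_i\bigr)$, continuity of the CDFs gives $\sum_i\prpar{X_i\ge\theta^\circ}\,x^*_i=r$ exactly, and $r\theta^\circ+\sum_i\E[(X_i-\theta^\circ)^+]x^*_i\ge(1-O(\eps))\OPT$.

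The algorithm then enumerates all $2^{\poly(1/\eps)}$ ways of rounding the fractional coordinates of $x^*$ to $\{0,1\}$; for each resulting set $S$ with $|S|\le k$ it computes $\E[\calM_r(S)]$ by dynamic programming over the discretized support, and outputs the best. To finish it suffices to exhibit \emph{one} such $S$ with $\E[\calM_r(S)]\ge(1-O(\eps))\,g(x^*)$. A routine rounding argument on the $\poly(1/\eps)$ fractional coordinates (distinguishing whether the total fractional mass is below or above $1/\eps$, and using that the common KKT gradient $\lambda$ of the fractional variables satisfies $\lambda\le g(x^*)/k\le\eps\,g(x^*)$ since $k\ge r$) produces a rounding $S$ with $\sum_{i\in S}\E[(X_i-\theta^\circ)^+]\ge(1-\eps)\sum_i\E[(X_i-\theta^\circ)^+]x^*_i=:(1-\eps)\mu$ and $\sum_{i\in S}\prpar{X_i\ge\theta^\circ}\in[(1-\eps)r,(1+\eps)r]$. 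Fixing this $S$, I would use $\calM_r(S)\ge\min\{1,r/N\}\cdot\bigl(\sum_{i\in S}(X_i-\theta^\circ)^++\theta^\circ N\bigr)$ with $N:=|\{i\in S:X_i\ge\theta^\circ\}|$: since $N$ is a sum of independent Bernoullis of mean $\Theta(r)=\Omega(1/\eps^3)$, a Chernoff bound keeps $N\in[(1-2\eps)r,(1+2\eps)r]$ with probability $1-e^{-\Omega(1/\eps)}$, and on that event $\calM_r(S)\ge(1-4\eps)\bigl(\sum_{i\in S}(X_i-\theta^\circ)^++r\theta^\circ\bigr)$; taking expectations, and bounding the contribution of the complementary event by a second-moment estimate (using the truncation and $\E[(X_i-\theta^\circ)^+]\le g(x^*)$ for every $i$ with $x^*_i>0$, together with the fact that at most $1/\eps$ coordinates carry a ``heavy'' amount of excess), yields $\E[\calM_r(S)]\ge(1-O(\eps))(\mu+r\theta^\circ)=(1-O(\eps))g(x^*)\ge(1-O(\eps))\OPT$. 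Rescaling $\eps$ gives the claimed $(1-\eps)$-approximation in $t(\eps)\cdot\poly(n)$ time.

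I expect the main obstacle to be precisely this concentration step. Unlike $N$, the excess mass $\sum_{i\in S}(X_i-\theta^\circ)^+$ is a sum of \emph{unbounded} variables, so a single heavy variable can ruin concentration; the two features that rescue the argument --- (i) a basic LP optimum has only $\poly(1/\eps)$ fractional coordinates, so heavy variables not already pinned to $x^*_i=1$ are absorbed by the brute-force enumeration, and (ii) $r>1/\eps^3$, which is exactly the threshold making $e^{-\Omega(\eps^2 r)}$ smaller than $\eps$ --- must be combined with the global truncation in a way that keeps the total loss genuinely $O(\eps)\cdot\OPT$ rather than, say, $O(\eps)\cdot k\cdot(\cdots)$, and that careful bookkeeping is where the real work lies.
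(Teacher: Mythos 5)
Your proposal is correct in spirit but takes a genuinely different route from the paper. The paper writes a two-layer LP with separate variables $x_i$ (selection probability) and $y_{ij}$ (probability that $i$ is selected, realizes value $v_j$, and lands in the top $r$), coupled by $y_{ij}\le p_{ij}x_i$ and $\sum_{ij}y_{ij}\le r$; it then rounds by two rounds of independent coin flips (include $i$ with probability $(1-\eps)x_i^*$, then include it in a virtual set $R$ with probability $(1-\eps)y_{ij}^*/(x_i^*p_{ij})$), and invokes Chernoff twice using $k\ge r>1/\eps^3$ to keep $|\tilde S|\le k$ and $|R|\le r$ with high probability. You instead take the scalar CVaR identity $\calM_r(S)=\min_{\theta\ge0}\bigl(r\theta+\sum_{i\in S}(X_i-\theta)^+\bigr)$, push the minimum inside the expectation to get a concave relaxation in $x$ alone, discretize $\theta$ to make it an LP with $\poly(1/\eps)$ non-box constraints, exploit basic-solution sparsity to enumerate all $2^{\poly(1/\eps)}$ integral completions, and apply Chernoff only to the count $N=|\{i\in S:X_i\ge\theta^\circ\}|$. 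The two ideas buy slightly different things: the paper's formulation makes the objective exactly linear and the coupling of ``selected'' with ``in the top $r$'' explicit, so the rounding analysis is a direct expectation computation; your formulation uses fewer variables and fewer LP constraints and replaces the two-stage randomized rounding by deterministic enumeration plus a single concentration event, which avoids having to reason about the correlation between the two coin flips. Both write-ups, incidentally, leave the same low-probability bad-event bookkeeping (bounding $\E[\text{excess}\cdot\mathbf 1_{\text{bad}}]$ against $\eps\cdot\OPT$) at the ``proof overview'' level of rigor; in your case the Cauchy--Schwarz bound $\E[Z\mathbf 1_B]\le\sqrt{\E[Z^2]\,\Pr(B)}$, combined with the truncation at $\poly(1/\eps)\mathcal{E}$ and $\Pr(B)\le e^{-\Omega(1/\eps)}$, closes this cleanly without the somewhat ad hoc ``at most $1/\eps$ heavy coordinates'' argument you gesture at, so I would recommend replacing the second-moment sketch by that. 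One minor point to make explicit: you need not only $g_\Theta(x^*)\ge\OPT$ but also $g(x^*)\ge(1-O(\eps))g_\Theta(x^*)$, which requires the discretization $\Theta$ to be a $(1+\eps)$-geometric grid together with a one-step Lipschitz bound on $\theta\mapsto r\theta+\sum_i\E[(X_i-\theta)^+]x_i^*$ via $\sum_i\pr{X_i>\theta}x_i^*\le r$ near the optimizer; as written this is asserted by fiat.
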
 
\proof{Proof  overview.}
For simplicity, we assume that the random variables $X_1, \ldots, X_n$ have a polynomially-sized support, $\calV = \{v_1, v_2, \ldots, v_{|\calV|} \}$, with the convention that $p_{ij} = \pr{X_i = v_j}$. The following standard LP-relaxation~\cite{GN-IPCO13} provides an upper bound on the optimal value $\expar{ {\cal M}_r(S^*) }$: 
\begin{equation}
\begin{array}{lll} \label{LP:ToprProbeMax}
\max \quad & {\displaystyle \sum_{i \in [n]} \sum_{j \in [|\calV|]} y_{ij} v_{j}}\\
\text{s.t.} \quad & {\displaystyle \sum_{i \in [n]} x_{i} \leq k} \\
& y_{ij}  \leq p_{ij} x_{i}	\qquad &\forall \, i \in [n],~ j \in [|\calV|]  \\
& {\displaystyle \sum_{i \in [n]} \sum_{j \in [|\calV|]} y_{ij} \leq r}\\
& y_{ij} \in [0,1] , \, x_i \in [0,1] \qquad \mbox{} & \forall \, i \in [n],~ j \in [|\calV|]  
\end{array}
\end{equation}
To verify that the optimal LP value is at least $\expar{ {\cal M}_r(S^*) }$, one can simply notice that a feasible solution is obtained by setting $x_i=1$ if and only if $i \in S^*$. In addition, $y_{ij} = \prpar{ i \in \mathrm{Top}_r( S^* ) , X_i = v_j }$ for every $j \in [|\calV|]$, if and only if $i \in S^*$. 

Now, let $(x^*,y^*)$ be an optimal fractional solution to \eqref{LP:ToprProbeMax}, and consider a random set $\tilde{S} \subseteq [n]$ that independently contains every random variable $X_i$ with probability $(1-\eps) \cdot x_i^*$. We first observe that $\tilde{S}$ consists of at most $k$ variables with high probability, due to having $k \geq r > 1 / \eps^3$. Next, we claim that $\expar{ {\cal M}_r(\tilde{S}) } = (1-O(\eps))\cdot \sum_{i \in [n]} \sum_{j \in [|\calV|]} y^*_{ij} v_{j}$. For this purpose, we construct a random set $R \subseteq \tilde{S}$ of size at most $r$, which implies $\expar{ {\cal M}_r(\tilde{S}) } \geq \expar{\sum_{i \in R} X_i}$, arguing that $\expar{\sum_{i \in R} X_i} =  (1-O(\eps))\cdot \sum_{i \in [n]} \sum_{j \in [|\calV|]} y^*_{ij} v_{j}$.

In particular, consider a random set $R$, created by independently picking every $X_i \in \tilde{S}$ with probability $(1-\eps) \cdot \frac{y_{ij}}{x_i p_{ij}} $, where $j$ is the index for which $X_i$ was observed to realize as $v_j$. Note that $\frac{y_{ij}}{x_i p_{ij}} \leq 1$, due to the second constraint of our LP relaxation. The expected size of $R$, over the randomness in $\tilde{S}$ and in picking each $X_i$, is
\begin{eqnarray*} 
(1-\eps) \cdot \sum_{i \in [n]}  \pr{X_i \in \tilde{S}} \cdot \sum_{j \in [|\calV|]} \pr{X_i = v_j} \cdot\frac{y_{ij}}{x_i p_{ij}}   & = & (1-\eps)^2 \cdot \sum_{i \in [n]} x_i \cdot \sum_{j \in [|\calV|]} p_{ij} \cdot \frac{y_{ij}}{x_i  p_{ij}}   \\
 &=&  (1-\eps)^2 \cdot \sum_{i \in [n]} \sum_{j \in [|\calV|]} y_{ij} \\
 & \leq & (1-\eps)^2 r,
\end{eqnarray*}
where the last inequality follows from the third LP constraint. Again by Chernoff bounds, the set of picked variables $R$ will be of size at most $r$ with high probability, since $r > 1/\eps^3$. It is also easy to verify that $\expar{\sum_{i \in R} X_i} = (1-O(\eps))\cdot \sum_{i \in [n]} \sum_{j \in [|\calV|]} y^*_{ij} v_{j}$.
\halmos
\endproof

\section{Future Directions.}

n this work, we design EPTASes for a number of fundamental stochastic combinatorial optimization problems. An immediate open question for future research is whether one or more of these problems admit  \emph{fully} polynomial time approximation schemes (FPTAS). That is, for any constant $\eps>0$, can we obtain a $(1+\eps)$-approximation in $\poly(n,1/\eps)$ time? We note that an FPTAS is impossible for our \Santa problem, since even with 3 machines, it already captures the strongly NP-hard 3-Partition problem \cite{GJ-Book}. Hence, we cannot hope to convert 
Theorem~\ref{thm:multidimInformal} into an FPTAS and a different approach would be needed to obtain an FPTAS (if possible) for the stochastic problems studied in this paper.

Another interesting direction is to obtain an EPTAS/FPTAS for multiple-element selection variants of the problems studied in this paper. 
In Section~\ref{sec:mmultItems}, we discuss how our framework can be leveraged when the goal is to maximize the sum of top-$r$ elements,  i.e., a uniform matroid of rank $r$. It is worth investigating whether similar results can be obtained for more general matroids (e.g., laminar matroids~\cite{ANSS-EC19}).



{\small
\bibliographystyle{alpha}
\bibliography{mor-bib}
}


\end{document}